\theoremstyle{plain}
\newtheorem{theorem}{Theorem}
\newtheorem{lemma}{Lemma}
\newtheorem{corollary}{Corollary}
\theoremstyle{remark}
\newtheorem{remark}{Remark}
\declaretheoremstyle[
  bodyfont=\normalfont\itshape,
  headformat=\NAME\NUMBER  
]{nospacetheorem}
\declaretheorem[style=nospacetheorem,name=C]{condition}
\def\shalf{\mbox{{\footnotesize$\frac{1}{2}$}}}
\DeclareMathOperator{\Var}{Var}
\DeclareMathOperator{\E}{E}
\DeclareMathOperator{\IG}{IG}
\newcommand{\sM}{{\cal M}}
\newcommand{\sN}{{\cal N}}
\newcommand{\sg}{\textnormal{\textsl{g}}}
\newcommand{\pcite}[1]{\citeauthor{#1}'s \citeyearpar{#1}}
\title{\vspace{-0.8in}Model Based Screening Embedded Bayesian Variable Selection for Ultra-high Dimensional Settings}
\author{Dongjin Li, Somak Dutta and Vivekananda Roy\thanks{Corresponding address: Department of Statistics, 2438 Osborn Dr, Ames, IA, 50011. Email: vroy@iastate.edu}\\Department of Statistics, Iowa State University, Ames, IA 50010.}
\date{}
\begin{document}

\maketitle
\begin{abstract}
  We develop a Bayesian variable selection method, called SVEN, based
  on a hierarchical Gaussian linear model with priors placed on the
  regression coefficients as well as on the model space. Sparsity is
  achieved by using degenerate {\it spike} priors on inactive
  variables, whereas Gaussian {\it slab} priors are placed on the
  coefficients for the important predictors making the posterior
  probability of a model available in explicit form (up to a
  normalizing constant). 
  The strong model selection consistency is shown to
  be attained when the number of predictors grows nearly exponentially with
  the sample size and even when the norm of mean effects solely due to the
  unimportant variables diverge, which is a novel attractive feature. 
  An appealing
  byproduct of SVEN is the construction of novel model weight adjusted
  prediction intervals. Embedding a unique model based screening and
  using fast Cholesky updates, SVEN produces a highly scalable
  computational framework to explore gigantic model spaces, rapidly identify the
  regions of high posterior probabilities and make fast inference and prediction. A temperature schedule
  guided by our model selection consistency derivations is used to
  further mitigate multimodal posterior distributions. The performance
  of SVEN is demonstrated through a number of simulation
  experiments and a real data example from a genome wide association
  study with over half a million markers.
\end{abstract}

  \noindent {\it Key words:}
  GWAS, hierarchical model, posterior prediction, shrinkage, spike and slab, stochastic search, subset selection.

\section{Introduction}
In almost every scientific discipline, rapid collection of
sophisticated data has been booming due to recent advancements in
technology. In biology, for example, automated sequencing tools have
made whole genome sequencing possible in a cost effective manner, thus
providing variations of millions of single nucleotides between
individuals. On the other hand, because phenotypic data are typically
collected via carefully conducted scientific experiments or other
observational studies, number of observations remains on the smaller
size, giving rise to regression problems where the number of variables
$p$ far exceeds the sample size $n$. Nevertheless, only a few of these
variables are believed to be associated with the response. Thus,
variable selection plays a crucial role in the modern scientific
discoveries.

Classical approaches to deal with the variable selection problems are
through regularization methods. A variety of methods using different
penalization techniques have been proposed for variable selection in
the linear models, such as the lasso \citep{tibs:1996,datt:zou:2017}, SCAD
\citep{fan:li:2001, kim:choi:oh:2008}, elastic net \citep{zou:hast:2005}, adaptive lasso
\citep{zou:2006}, the octagonal shrinkage and clustering algorithm for
regression \citep{bond:reic:2008}, L0-penalty for best subset
regression \citep{bert:etal:2016,huan:etal:2018} and others. These
methods achieve sparsity by either penalizing the effect sizes or the
model sizes but rarely both. Several Bayesian variable selection
methods exploit the connection between the penalized estimators and
the modes of Bayesian posterior densities under suitably chosen prior
distributions on the regression coefficients. Example includes the
lasso-Laplace prior connection \citep{tibs:1996}, the hierarchical
Bayesian lasso \citep{park:case:2008} and other works by
\citet{kyun:gill:ghos:case:2010}, \citet{xu:ghos:2015} and
\citet{roy:chak:2017}.

Another popular approach to Bayesian variable selection is integrating
the penalties on the effect size and the model size via priors
distributions. To that end, auxiliary indicator variables indicating
the presence or absence of each variable are introduced to obtain a
`spike and slab' prior on the regression coefficients. Here the
`spike' corresponds to the probability mass concentrated at zero or
around zero for the variables vulnerable to deletion and the `slab'
specifies prior uncertainty for coefficients of other
variables. Analysis using such models determines (selects) the most
promising variables by summarizing the posterior density of the
indicator variables and/or the regression coefficients. The seminal
works of \citet{mitc:beau:1988,geor:mccu:1993, geor:mccu:1997}
developed a hierarchy of priors over the regression coefficients and
the latent indicators and used Gibbs sampler to identify promising
models in low dimensional setup \citep[see also][]{yuan:lin:2005,
  ishw:rao:2005, lian:paul:moli:clyd:berg:2008,
  john:ross:2012}. Several of these methods have been recently
modified and extended to the ultra-high dimensional
setup. \citet{nari:he:2014} pioneered the theoretical study of
Bayesian variable selection in the ultra-high dimensional setup,
\citet{rock:geor:2014} introduced the EM algorithm for fast
exploration of high-posterior models, \citet{yang:wain:jord:2016}
studied model selection consistency and computational complexity when
$g$-prior is placed on the regression coefficients,
\citet{shin:bhat:2018} extended the popular non-local priors to model
selection and modified the stochastic shotgun model search algorithm
\citep{hans:dobr:2007}, while \citet{zhou:guan:2019} and
\citet{zane:robe:2019} implemented Metropolis Hastings algorithms with
an iterative complex factorization and a tempered Gibbs sampler,
respectively, for estimating posterior model probabilities.


From a practical standpoint, in
the ultra-high dimensional set up, where the number of variables ($p$)
is much larger than the sample size ($n$), generally variable screening is performed to
reduce the number of variables before applying any of the
aforementioned variable selection methods for choosing important
variables. The classical approaches as well as \citet{nari:he:2014}
resort to a two stage procedure where they first use
frequentist screening algorithms \citep{fan:lv:2008,wang:leng:2016} to
reduce the dimension of the problem and then perform variable
selection. \citet{shin:bhat:2018} as well as \citet{cao:khar:2020}
fuse the frequentist iterated sure independent screening in their
stochastic search algorithm. However, these screening methods are
frequentist procedures that are not guaranteed to be fidelitous to the
Bayesian model in practice.

In this work, we extend the classical variable selection model of
\cite{mitc:beau:1988} to the ultra-high dimensional setting. Following
the path laid by \cite{nari:he:2014} we derive posterior consistency
results. By considering zero (exact spike) inflated mixture priors for
regression coefficients, we are able to introduce sparsity and relax
some assumptions of \cite{nari:he:2014}. Furthermore, we develop a
novel methodology for variable selection in the spirit of the
stochastic shotgun search algorithm \citep{hans:dobr:2007} with
embedded screening that is faithful to the hierarchical Bayesian
model. We develop sophisticated computational framework that allows us
to consider larger search neighborhoods and compute exact unnormalized
posterior probabilities in contrast to
\citet{shin:bhat:2018}. Furthermore, in order to recover models with
large posterior probabilities and mitigate posterior multimodality
associated with variable selection models, we use a temperature
schedule that is guided by our posterior model selection consistency
asymptotics. We call this Bayesian method and the computational
framework $s$election of $v$ariables with $e$mbedded scree$n$ing
(SVEN). Keeping prediction of future observations in mind, we develop
novel methods for computing approximate posterior predictive
distribution and prediction intervals. In particular, using SVEN we
construct two prediction intervals, called Z-prediction intervals and
Monte Carlo prediction intervals.


The rest of the paper is laid out as follows. In Section
\ref{sec:sven} we describe the hierarchical Bayesian variable
selection model and prove strong model selection consistency results
(Section \ref{sec:modelAndTheory}); develop the SVEN framework
(Section \ref{sec:methodology}) and prediction methods (Section
\ref{sec:predicion}). We perform detailed simulation studies in
Section \ref{sec:simulation} and compare our methods to several other
popular Bayesian and frequentist methods. In Section
\ref{sec:dataanalysis} we analyze a massive dataset from an
agricultural experiment with $n=3,951$ and $p=546,034$ where among the Bayesian methods used for comparison only our
method is able to perform variable selection on the whole data. We
also show the practical usefulness of our method in obtaining
posterior predictive distribution and prediction intervals for the yield of novel
crop varieties.  We conclude in Section \ref{sec:conclusion} with some
discussion and future research directions. A supplement document
containing the proofs of the theoretical results and some computational details is available with
sections referenced here with the prefix `S'. The methodology
proposed here is implemented in an accompanying R package `bravo' for
$B$ayesian sc$r$eening $a$nd $v$ariable selecti$o$n.

\section{Bayesian variable selection with screening}\label{sec:sven}

\subsection{Hierarchical mixture models} \label{sec:modelAndTheory}
\subsubsection{Model description}
\label{sec:modeldes}
Let $y=(y_1, \ldots, y_n)$ denote a $n \times 1$ vector of response
values, $Z = (Z_1, \ldots, Z_p)$ an $n \times p$ design matrix of $p$
potential predictors, with vector of partial regression coefficients
$\mu \equiv (\mu_1,\ldots,\mu_p).$ We assume latent indicator vector
$\gamma = (\gamma_1,\ldots,\gamma_p) \in \{0,1\}^p$ to denote a model
such that the $j$th predictor is included in the regression model if
and and only if $\gamma_j = 1$. Corresponding to the binary vector,
the size of a model $\gamma$ is denoted as $|\gamma|$, where
$|\gamma|=\sum_{j=1}^p \gamma_j$. Also, with model $\gamma$, let
$Z_{\gamma}$ be the $n \times |\gamma|$ sub-matrix of $Z$ that
consists of columns of $Z$ corresponding to model $\gamma$ and
$\mu_{\gamma}$ be the vector that contains the regression coefficients
for model $\gamma$. In the first hierarchy of the Bayesian
hierarchical mixture model we assume that the conditional distribution
of $y$ given $Z,\gamma,\mu_0,\mu$ and $\sigma^2$ is $n$-dimensional
Gaussian and is given by
\begin{equation}\label{eq:regModelOriginalScale}
 y|Z,\gamma,\mu_0,\mu,\sigma^2 \sim \sN_n(\mu_0 1_n+Z_\gamma\mu_\gamma,\sigma^2I_n),
\end{equation}
where $\mu_0$ is the intercept term and $\sigma^2 > 0$ is the
conditional variance. Thus \eqref{eq:regModelOriginalScale} indicates
that each $\gamma$ corresponds to a Gaussian linear regression model
$y = \mu_01 + Z_\gamma\mu_\gamma + \epsilon$ where the residual vector
$\epsilon \sim \sN_n(0,\sigma^2 I_n).$ However, because the original covariates
could have unbalanced scales, a common approach is to reparameterize
the above model using a scaled covariate matrix. To that end, suppose
$\bar{Z}$ is the vector of column means of $Z$ and $D$ is the
$p\times p$ diagonal matrix whose $i$th diagonal entry is the sample
standard deviation of $Z_i$ (the $i$th column of $Z$) and let
$X = (Z - 1_{n}\bar{Z}^\top)D^{-1}$ denote the scaled covariate
matrix. Also we assume that $\beta = D\mu$ and
$\beta_0 = \mu_0 + \bar{Z}^\top \mu.$ The Bayesian hierarchical
regression model after reparameterization is given by
\begin{subequations}\label{eq:litdiff}
\begin{align}
    y |\beta,\beta_0,\sigma^2, \gamma &\sim \mathcal{N}_{n}\left(1_{n} \beta_{0}+X_{\gamma} \beta_{\gamma}, \sigma^{2} I\right), \label{subeq:regModel}\\
  \beta_{j} | \beta_0, \sigma^2, \gamma &\stackrel{\text { ind }} \sim \mathcal{N}\left(0, \frac{\gamma_{j}}{\lambda} \sigma^{2}\right) \text { for } j=1, \ldots, p,\label{subeq:priorBeta}\\
   \left(\beta_{0}, \sigma^2\right)|\gamma &\sim f\left(\beta_{0}, \sigma^{2}\right) \propto 1 / \sigma^{2}, \label{subeq:priorInterceptVariance}\\
    \gamma|w &\sim f(\gamma | w)=w^{|\gamma|}(1-w)^{p-|\gamma|}\label{subeq:priorGamma}.
\end{align}
\end{subequations}
In this hierarchical setup a popular non-informative prior is set for
$(\beta_0, \sigma^2)$ in \eqref{subeq:priorInterceptVariance} and a
conjugate independent normal prior is used on $\beta$ given $\gamma$
in \eqref{subeq:priorBeta} with $\lambda > 0$ controlling the
precision of the prior independently from the scales of
measurements. Note that under this prior, if a covariate is not
included in the model, the prior on the corresponding regression
coefficient \emph{degenerates at zero.} In \eqref{subeq:priorGamma} an
independent Bernoulli prior is set for $\gamma$, where $w \in (0, 1)$
reflects the prior inclusion probability of each predictor. We assume $\lambda$ and $w$ are known non-random functions of
$n$ and $p.$

The hierarchical model \eqref{eq:litdiff} with centered $X$ allows us to obtain the
distribution of $y$ given $\gamma$ in a closed form by integrating out
$\beta_0,$ $\beta_\gamma$ and $\sigma^2$ \citep[][section S6]{roy:tan:fleg:2018}. Consequently, the marginal
likelihood function of $\gamma$ is given by
\begin{align}\label{eq:marginalGamma}
    L(\gamma|y) &= \int_{\mathbb{R}_{+}}  \int_{\mathbb{R}^{\gamma}} \int_{\mathbb{R}} f\left(y | \gamma, \sigma^{2}, \beta_{0}, \beta_{\gamma}\right) f\left(\beta_{\gamma} | \gamma, \sigma^{2}, \beta_{0}\right) f\left(\sigma^{2}, \beta_{0}\right)  d \beta_{0} d \beta_{\gamma} d \sigma^{2} \nonumber \\
    & = c_{n,p} ~\lambda^{|\gamma|/2} |A_\gamma|^{-1/2} R_\gamma^{-(n-1)/2},
\end{align}
where $A_\gamma = X_{\gamma}^{\top} X_{\gamma}+\lambda I,$ $|A_\gamma|$ is the determinant of $A_\gamma,$
\begin{equation}
  \label{eq:rgam}
  R_{\gamma} = \tilde{y}^{\top} \tilde{y}-\tilde{y}^{\top}X_{\gamma}A_\gamma^{-1}X_\gamma^\top\tilde{y} =  \tilde{y}^{\top} \tilde{y}-\tilde{\beta}_{\gamma}^{\top}A_\gamma \tilde{\beta}_{\gamma} = \tilde{y}\left( I + \lambda^{-1}X_\gamma X_\gamma^\top \right)^{-1}\tilde{y}  
\end{equation}
 is the ridge residual sum of squares, $\tilde{y}=y-\bar{y} 1_{n}$, $\bar{y} = \sum_{i=1}^n y_i/n,$ $\tilde{\beta}_{\gamma}=A_\gamma^{-1} X_{\gamma}^{\top} \tilde{y}$ and $c_n = \Gamma((n-1) / 2) / \pi^{(n-1) / 2}$ is the 
normalizing constant. 

In order to identify the important variables, we use the (marginal) posterior
distribution of $\gamma$. Thanks to the explicit form of the marginal
likelihood \eqref{eq:marginalGamma}, this posterior density is
given by
\begin{equation*}
    f(\gamma | y) \propto f(y | \gamma) f(\gamma) \propto \lambda^{|\gamma|/2} |A_\gamma|^{-1/2} R_\gamma^{-(n-1)/2} w^{|\gamma|}(1-w)^{p-|\gamma|}.
\end{equation*}
It's often convenient to work with log of the posterior density which is given by
\begin{equation}\label{eq:logpostGamma}
\log f(\gamma|y) = const + \shalf|\gamma|\log\lambda - \shalf\log|A_\gamma| - \shalf(n-1)\log R_\gamma + |\gamma|\log(w/(1-w)).
\end{equation}

\begin{remark}
  It is important to note that the regression model
  \eqref{eq:regModelOriginalScale} on the original covariate scale
  should be used for prediction, instead of \eqref{eq:litdiff} because
  the hierarchical prior \eqref{subeq:priorBeta} is defined under the
  assumption that $1_{n}^\top X = 0$ and $X_j^\top X_j = n$ for all
  $j.$
\end{remark}
\begin{remark}
  In this work we assume $w$ is fixed. However, a popular alternative
  is to assign a Beta prior on $w$, i.e., let
  $w \sim f(w) \propto w^{a-1}(1-w)^{b-1}$ for some $a, b >0$. Then it is possible to
  integrate out $w$ from \eqref{subeq:priorGamma} to obtain the
  marginal prior distribution of $\gamma$ given by
  $f(\gamma)=B(|\gamma|+a, p-|\gamma|+b)/{B(a, b)},$ where
  $B(\cdot, \cdot)$ is the beta function. This will replace the last
  term in \eqref{eq:logpostGamma} by $\log f(\gamma).$
\end{remark}

\begin{remark}
As an alternative to the independent normal prior \eqref{subeq:priorBeta}, it is also possible to consider Zellner's $g$-prior \citep{zell:1986} on $\beta_\gamma$ given by $\beta_\gamma|\gamma,\sigma^2 \sim$ 
 $\sN_{|\gamma|}\left( 0, g\sigma^2 (X_\gamma^\top X_\gamma)^{-1}\right)$
 provided that for every $k \leq n-1,$ all $n\times k$ submatrices of $X$ have full column rank and we restrict the support of the prior distribution on $\gamma$ to models of size at most $n-1.$ Assuming that $g$ is a non-random function of $n$ and $p,$ the marginal posterior of $\gamma$ is then given by
\[f_g(\gamma|y) \propto  \left[\tilde{y}^\top\tilde{y} - \frac{g}{g+1}\tilde{y}^\top X_\gamma (X_\gamma^\top X_\gamma)^{-1}X_\gamma^\top \tilde{y}\right]^{-(n-1)/2}
\frac{w^{|\gamma|}(1-w)^{p-|\gamma|}}{(1+g)^{|\gamma|/2}}\mathbb{I}(|\gamma| < n),\]
where the priors on $\beta_0$ and $\sigma^2$ have been assumed to be the same as \eqref{subeq:priorInterceptVariance}.
 \end{remark}

 Ideally, as the sample size increases we would like the posterior of
 $\gamma$ to concentrate more and more on the important
 variables. Several works have alluded to asymptotic guarantees for
 strong model selection consistency in the ultra-high dimensional
 regression where $p$ is allowed to vary subexponentially with $n,$
 i.e. $\min\{n,p\} \to \infty$ and $(\log p)/n\to 0.$ Here, the strong
 model selection consistency implies that the posterior probability of
 the \emph{true set} of variables converge to 1 as $n$ tends to
 infinity.  Under shrinking and diffusing priors, \citet{nari:he:2014}
 developed explicit scaling laws for hyper-parameters that are
 sufficient for strong model selection consistency. On the other hand,
 \citet{shin:bhat:2018}, and more recently, \citet{cao:khar:2020}
 established sufficient conditions for strong model selection
 consistency under non-local type priors \citep{john:ross:2012}. The
 Bayesian hierarchical model \eqref{eq:litdiff} is similar to
 \pcite{nari:he:2014} model with the crucial distinction that the
 spike prior is degenerate: $P(\beta_i = 0|\gamma_i = 0) = 1.$
 Consequently, although most assumptions used here for selection consistency are similar to those made by
 \citet{nari:he:2014}, we are able to relax some of the conditions to allow for
 more noisy unimportant variables. In the next section we describe
 strong model selection consistency results for \eqref{eq:litdiff}. 

\subsubsection{Model selection consistency}
We consider the ultra-high dimensional setting where the number of
variables $p$ is allowed to vary subexponentially with the sample
size. As established by \citet{nari:he:2014} the slab precision
$\lambda$ also needs to vary with $n$ for strong model selection
consistency.  In order to state the assumptions and the main
results, we use the following notations. Abusing notation, we
interchangeably use a model $\gamma$ either as a $p$-dimensional
binary vector or as a set of indices of non-zero entries of the binary
vector. For models $\gamma$ and $s,$ $\gamma^c$ denotes the complement
of the model $\gamma$, and $\gamma\vee s$ and $\gamma\wedge s$ denote
the union and intersection of $\gamma$ and $s$, respectively. For two
real sequences $(a_n)$ and $(b_n)$, $a_n \sim b_n$ means
${a_n}/{b_n}\rightarrow c$ for some constant $c>0$; $a_n \succeq b_n$
(or $b_n \preceq a_n$) means $b_n = O(a_n)$; $a_n \succ b_n$ (or
$b_n \prec a_n$) means $b_n = o(a_n)$. Also for any matrix $A$, let
$\alpha_{min} (A)$ and $\alpha_{max}(A)$ denote its minimum and
maximum eigenvalues, respectively, and let $\alpha_{min}^{*}(A)$ be
its minimum nonzero eigenvalue. Again, abusing notations, for two real
numbers $a$ and $b$, $a \vee b$ and $a \wedge b$ denote max$(a, b)$ and min$(a, b),$ respectively. Define
$r_{\gamma}=\text{rank}(X_{\gamma})$ and for $\nu>0$,
$r_{\gamma}^{*}=r_{\gamma} \wedge u_n(\nu)$ where
$$u_n(\nu)=p \wedge \frac{n}{(2+\nu)\text{log}{p}} \quad\textrm{ and }\quad \eta_m^n(\nu) = \underset{|{\gamma}| \leq u_n(\nu)}{\inf} {\alpha_{min}^{*}(X^{\top}_{\gamma} X_{\gamma}/n)}.$$
Finally for any fixed positive integer $J$, define 
\[
		\Delta_n(J)=\underset{\{\gamma :|\gamma|<J|t|, \gamma \not\supset t\}}{\inf} \Vert(I-P_{\gamma})X_{t}\bm{\beta}_{t} \Vert^2,
\]
where  $P_\gamma =  X_\gamma(X_\gamma^\top X_\gamma)^{-}X_\gamma^\top$
is  the  orthogonal  projection  matrix   onto  the  column  space  of
$X_\gamma$  and $\Vert  \cdot  \Vert$ denotes  the  $L_2$ norm. Here,
$A^{-}$  denotes  the Moore-Penrose  inverse  of  $A$. We  assume  the
following set of conditions.

\begin{condition}\label{c.dim}
    $p=e^{nd_n}$ for some $d_n \rightarrow 0$ as $n \rightarrow \infty$, that is, $\text{log}{p} = o(n)$. 
\end{condition}
\begin{condition}\label{c.shrink_rates}
    $n/\lambda \sim (n \vee  p^{2+3\delta})$ for some $\delta>0$, and $w \sim p^{-1}$.
\end{condition}
\begin{condition}\label{c.tc_bound}
    $y = \beta_0 1_n + X_t\beta_t + X_{t^c}\beta_{t^c} + \epsilon$ where $\epsilon\sim\sN(0,\sigma^2I_n),$ the true model $t$ is fixed and $\Vert X_{t^c}\beta_{t^c} \Vert \preceq \sqrt{\log p}$.
\end{condition}
\begin{condition}\label{c.eigen_g}
    For $\delta$ given in C\ref{c.shrink_rates}, there exists $J>1+8/\delta$ such that $\Delta_n(J) \succ \log(\sqrt{n} \vee p)$, and for some $\nu<\delta$, $\kappa<(J-1)\delta/2,$\mbox{}\\
	\centerline{$\eta_m^n(\nu) \succeq \left(\frac{n \vee p^{2+2\delta}}{n/\lambda} \vee p^{-\kappa} \right)$.} 
\end{condition}
\begin{condition}\label{c.eigen_t}
    For some positive constants $a_0$ and $b_0,$ $a_0 < \alpha_{min}\left(\frac{X_t^{\top} X_t}{n}\right) < \alpha_{max}\left(\frac{X_t^{\top} X_t}{n}\right) < b_0$  $\forall n.$
\end{condition}

The condition C\ref{c.shrink_rates} states that the conditional
distribution of $\beta_i$ given $\gamma_i = 1$ is diffused in the
sense that it's conditional prior variance goes to infinity at a
particular rate. The condition C\ref{c.tc_bound} greatly relaxes the
boundedness assumption on $\|X_{t^c}\beta_{t^c}\|$ in
\citet{nari:he:2014}, by slightly strengthening the identifiability
condition C\ref{c.eigen_g}. \cite{yang:wain:jord:2016} obtained similar results under $g$-priors on $\beta$ but as mentioned by them our independence prior is `a more realistic choice'. Moreover, \cite{yang:wain:jord:2016} assumed that $\alpha_{\textrm{min}}(X_\gamma^\top X_\gamma/n)$ is bounded away from zero for all models $\gamma$ of size at most $O(n/\log p),$ which is unrealistic because, for example, even when entries of $X$ are iid N(0,1), $\inf_{1\leq i \leq p}X_i^\top X_i/n$ converges to zero in probability.
Because of the degenerated form of the
spike priors, the regularity assumptions on the submatrices of the
design matrix $X$ in C\ref{c.eigen_g} relax the assumptions on the
bound on their largest eigenvalues. \cite{nari:he:2014} showed that if
the rows of $X$ are independent isotropic sub-Gaussian random vectors
then C\ref{c.eigen_g} holds with overwhelmingly large probability \cite[see also][]{chen:chen:2008,
  kim:kwon:choi:2012, shin:bhat:2018}. The regularity assumption for
the true model C\ref{c.eigen_t} is standard and has been used in both
\citet{nari:he:2014} and \citet{cao:khar:2020} without being
explicitly stated.

Note that the condition C\ref{c.tc_bound} does not explicitly specify
the true model $t$ and the relaxation to allow higher noise
$\|X_{t^c}\beta_{t^c}\|$ warrants a validation of the identifiability
of $t$. To that end, suppose on the contrary that it is possible to include some variables,
say $s$ from $t^c$ into the true model and still maintain the conditions C1-C5
for both $t$ and $t\vee s$ as true models for every $n.$ Then condition
C\ref{c.eigen_g} with $\gamma = t$ (now excluding the apparently true
variable $s$) would imply
$\|(I-P_t)X_s\beta_s\|^2 = \|(I-P_t)(X_t\beta_t + X_s\beta_s)\|^2 =
\|(I-P_t)X_{t \vee s} \beta_{t \vee s}\|^2 \succ \log(p\vee \sqrt{n}).$
Here, the first equality follows from the fact that
$P_t X_t= X_t$. But because $I-P_t$ is
symmetric and idempotent,
\begin{equation}\label{eq.contradiction1}
\|X_s\beta_s\| \geq \|(I-P_t)X_s\beta_s\|  \succ \sqrt{\log(p\vee\sqrt{n})} \geq \sqrt{\log p}. 
\end{equation}
 However, condition C\ref{c.tc_bound} for $t\vee s$ implies $\| X_{t^c\wedge s^c}\beta_{t^c\wedge s^c}\| \preceq \sqrt{\log p}.$ This with \eqref{eq.contradiction1} implies that
 \[\|X_{t^c}\beta_{t^c}\| = \|X_s\beta_s + X_{t^c\wedge s^c}\beta_{t^c\wedge s^c}\| \geq \|X_s\beta_s \| - \| X_{t^c\wedge s^c}\beta_{t^c\wedge s^c}\| \succ \sqrt{\log p}, \]
 which contradicts condition C\ref{c.tc_bound}. We now present the strong model selection consistency results. 

\begin{theorem}\label{thm.consistency}
Assume conditions C1--C5 hold and that $\sigma^2$ is known. Then the posterior probability of the true model, $f(t|y,\sigma^2) \to 1$ in probability as the sample size $n$ approaches $\infty$.
\end{theorem}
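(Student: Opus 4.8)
The plan is to prove the equivalent statement that the posterior odds against $t$ vanish: since $f(t\mid y,\sigma^2)=\bigl(1+\sum_{\gamma\neq t}PR(\gamma,t)\bigr)^{-1}$ with $PR(\gamma,t)=f(\gamma\mid y,\sigma^2)/f(t\mid y,\sigma^2)$, it suffices to show $\sum_{\gamma\neq t}PR(\gamma,t)\to0$ in probability. When $\sigma^2$ is known, integrating out only $\beta_0$ and $\beta_\gamma$ replaces the factor $R_\gamma^{-(n-1)/2}$ in the marginal likelihood \eqref{eq:marginalGamma} by $\exp(-R_\gamma/(2\sigma^2))$, so the log posterior ratio has the explicit form
\[
\log\frac{f(\gamma\mid y,\sigma^2)}{f(t\mid y,\sigma^2)}=\shalf(|\gamma|-|t|)\log\lambda-\shalf\log\frac{|A_\gamma|}{|A_t|}-\frac{R_\gamma-R_t}{2\sigma^2}+(|\gamma|-|t|)\log\frac{w}{1-w}.
\]
I would then split the $2^p-1$ competing models into three classes — the overfitted models $\gamma\supsetneq t$; the small non-supersets $\{\gamma\not\supseteq t:|\gamma|<J|t|\}$; and the large non-supersets $\{\gamma\not\supseteq t:|\gamma|\ge J|t|\}$ — and bound the sum of $PR(\gamma,t)$ over each class on a single high-probability event controlling all the relevant quadratic forms in $\epsilon$.

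For the overfitted class I would combine the first two terms into the Occam factor $\shalf|\gamma|\log\lambda-\shalf\log|A_\gamma|=-\shalf\log\det(I+\lambda^{-1}X_\gamma^\top X_\gamma)$. The eigenvalue bounds in C\ref{c.eigen_g}--C\ref{c.eigen_t} (together with the truncation $u_n(\nu)$ and $\eta_m^n(\nu)$ that keep the conditional eigenvalues of the added columns of order $n$) make each of the $|\gamma|-|t|$ extra columns contribute a penalty near $\shalf\log(n/\lambda)=\shalf\log(n\vee p^{2+3\delta})$ by C\ref{c.shrink_rates}, i.e.\ a factor $(n\vee p^{2+3\delta})^{-1/2}$. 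Adding genuinely null columns can only lower the ridge residual, and the gain $(R_t-R_\gamma)/(2\sigma^2)$ is a chi-square-type quadratic form in $\epsilon$ (plus the permitted $X_{t^c}\beta_{t^c}$ contribution) whose maximum over the $\binom{p}{|\gamma|-|t|}$ models of a given overfit size I would bound, by a union bound, at a level that the model-size prior $(w/(1-w))^{|\gamma|-|t|}\sim p^{-(|\gamma|-|t|)}$ exactly absorbs. The Occam factor then drives the sum: over the $\binom{p}{k}\le p^{k}$ models adding $k$ variables the net contribution is at most $p^{-3\delta k/2}$, a geometric series with ratio $p^{-3\delta/2}\to0$.

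For the small non-supersets the identifiability quantity in C\ref{c.eigen_g} forces $\|(I-P_\gamma)X_t\beta_t\|^2\ge\Delta_n(J)\succ\log(\sqrt n\vee p)$; since such $\gamma$ have bounded rank, the noise and cross terms in $R_\gamma-R_t$ are of strictly lower order than this signal (controlled uniformly by Gaussian tail bounds over the merely polynomially many such models), so $\exp(-(R_\gamma-R_t)/(2\sigma^2))$ is smaller than any fixed power of $p\vee\sqrt n$ and overwhelms the at most $p^{(J+1)|t|}$ prefactor. For the large non-supersets the lack-of-fit signal is no longer guaranteed, so I would instead let the model-size prior $p^{-(|\gamma|-|t|)}$ do the work: the truncated rank $r_\gamma^{*}=r_\gamma\wedge u_n(\nu)$ caps the possible residual reduction and the Occam/determinant gain, and the threshold choices $J>1+8/\delta$, $\nu<\delta$, $\kappa<(J-1)\delta/2$ of C\ref{c.eigen_g} are exactly what is needed to push each such term below a geometrically summable bound.

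The main obstacle is the non-superset analysis under the relaxed condition C\ref{c.tc_bound}: permitting $\|X_{t^c}\beta_{t^c}\|\preceq\sqrt{\log p}$ to diverge means the true mean contaminates every residual comparison, so I must verify both that this term does not spuriously inflate $R_t$ (which would make $t$ look like a poor fit) and that it cannot mask the lack of fit $\Delta_n(J)$ of an underfitted model. This is precisely why C\ref{c.tc_bound} and C\ref{c.eigen_g} are coupled (as in the identifiability argument preceding the theorem), and threading the three competing effects — the signal $\Delta_n(J)$, the noise fluctuations, and the $X_{t^c}\beta_{t^c}$ contamination — uniformly over exponentially many models within one high-probability event is the delicate part; everything else reduces to the chi-square tail and eigenvalue bookkeeping sketched above.
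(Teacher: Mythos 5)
Your global strategy---reducing to $\sum_{\gamma\neq t}PR(\gamma,t)\to 0$, partitioning the competing models, controlling quadratic forms in $\epsilon$ on a high-probability event, and letting the prior $b_n=w/(1-w)\sim p^{-1}$ absorb model multiplicity---is the same as the paper's, and your treatment of the under-fitted and moderately large non-supersets matches the paper's $M_4$ and $M_3$ arguments in substance. But there is a genuine gap: your three-way partition has no analogue of the paper's first class, the \emph{unrealistically large} models $M_1=\{\gamma: r_\gamma > u_n(\nu)\}$, and the arguments you assign to the classes that would have to absorb these models fail exactly there. The point is that $\eta_m^n(\nu)$ in C\ref{c.eigen_g} is an infimum only over models of size at most $u_n(\nu)$, so the Occam/determinant penalty available from Lemma \ref{lm.Qgamma} saturates at the truncated rank $r_\gamma^*=r_\gamma\wedge u_n(\nu)$. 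Your claim that each of the $|\gamma|-|t|$ added columns contributes a factor near $(n\vee p^{2+3\delta})^{-1/2}$ is therefore false once $r_\gamma$ exceeds $u_n(\nu)$: beyond that size C\ref{c.eigen_g} gives no eigenvalue control at all, and extra columns contribute no further penalty. (Your parenthetical that the truncation ``keeps the conditional eigenvalues of the added columns of order $n$'' has this backwards, as does your later assertion that $r_\gamma^*$ ``caps the possible residual reduction'': the truncation caps the \emph{penalty}, whereas the residual reduction for such models can approach $R_t$ itself, because with $\lambda\sim n/(n\vee p^{2+3\delta})$ a model of rank near $n-1$ nearly interpolates $\tilde y$, driving $R_\gamma$ toward $0$.)

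The failure is quantitative, not cosmetic. For an overfitted or non-superset model of rank $d$ near $n-1$, your union-bound event controls the gain at level $(R_t-R_\gamma)/(2\sigma^2)\preceq (1+2s)(d-r_t)\log p \asymp n\log p$, while the saturated penalty can deliver at most $\exp\{-(1+\delta)(u_n(\nu)-|t|)\log p\}\asymp \exp\{-(1+\delta)n/(2+\delta)(1+o(1))\}$, which is only $e^{-cn}$; the product diverges like $e^{cn\log p}$, so the geometric bound $p^{-3\delta k/2}$ per size class cannot be established for $k$ beyond $u_n(\nu)$. The paper closes this with a separate, cruder argument for $M_1$: since $R_\gamma\geq 0$ for every model, the \emph{single} event $\{R_t\leq n(1+2s)\sigma^2\}$---whose complement has probability at most $2e^{-c''n}$ by Lemma A.2 of \citet{nari:he:2014} together with Lemma \ref{lm.tail_prob}, which is also where the relaxed condition C\ref{c.tc_bound} must be dealt with---bounds the gain for all of $M_1$ simultaneously by $e^{n(1+2s)/2}$, with no $\log p$ factor. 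This loses to the saturated penalty precisely when $s<\delta/\{2(2+\delta)\}$, and it is here that the exact form of $u_n(\nu)$ and the requirement $\nu<\delta$ in C\ref{c.eigen_g} are consumed. Without this fourth class and the switch from per-rank chi-square levels to the crude cap $R_t-R_\gamma\leq R_t$, your proof does not close.
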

\begin{proof}
 The proof is given in Section \ref{sec:proofKnownSigma} of the supplementary materials.
\end{proof}

Note that the statement of Theorem \ref{thm.consistency} is equivalent to $\big(1-f(t|y, \sigma^2)\big)/f(t|y, \sigma^2) \to 0$ in probability as $n \to \infty.$ The proof of Theorem \ref{thm.consistency} also provides the rate of convergence given by,
\begin{equation*}
   \frac{1-f(t|y, \sigma^2)}{f(t|y, \sigma^2)} \preceq \exp\{-v n\} + \rho_n + \rho_n^{(J-1)|t|/2} + \exp \left\{-v'\left(\Delta_{n}(J)-v'' \log(\sqrt{n}\vee p) \right)\right\}
\end{equation*}
with probability greater than $1-\big[2\exp\{-cn\} + 2\exp\{-c'\log p\} + \exp\{-c''\Delta_n(J)\}\big]$
for some positive constants $v$, $v'$, $v''$, $c$, $c'$ and $c''$,  where $\rho_n = p^{-\delta / 2} \wedge \left(p^{1+\delta / 2}/\sqrt{n}\right).$ It is encouraging that despite relaxing the boundedness condition on $\Vert X_{t^c} \beta_{t^c} \Vert$, the rate of convergence remains the same as in \citet{nari:he:2014}.

However, in practice $\sigma^2$ is typically never known. In this case, we need a further assumption that assigns a prior probability of zero on $\widetilde{M} = \{\gamma: r_\gamma > r_t + n/[(2+\nu')\log p]\}$ for some $\nu' > \nu \vee (2/\delta)$.
\begin{condition}\label{c.truncateGamma}
For some $\nu >0$ and $\nu' > \nu \vee (2/\delta)$, $P\left(\gamma \in \widetilde{M}\right)=0$.
\end{condition}
This condition is same as in \citet{nari:he:2014} and also equivalent to the assumptions on the prior model sizes in \cite{shin:bhat:2018} and \cite{cao:khar:2020}.

\begin{theorem}\label{thm.consistency.nosigma}
 Assume conditions C1--C6 hold. Then the posterior probability of the true model, $f(t|y) \to 1$ in probability as the sample size $n$ approaches $\infty$.
\end{theorem}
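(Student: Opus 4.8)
The plan is to reduce Theorem~\ref{thm.consistency.nosigma} to the machinery already developed for Theorem~\ref{thm.consistency} by isolating the one structural change that marginalizing $\sigma^2$ introduces. From \eqref{eq:logpostGamma}, for any $\gamma\neq t$,
\[
\log\frac{f(\gamma|y)}{f(t|y)} = \frac{|\gamma|-|t|}{2}\log\lambda + \shalf\log\frac{|A_t|}{|A_\gamma|} + (|\gamma|-|t|)\log\frac{w}{1-w} - \frac{n-1}{2}\log\frac{R_\gamma}{R_t}.
\]
When $\sigma^2$ is known, integrating out only $(\beta_0,\beta_\gamma)$ yields the identical $\lambda$, $A_\gamma$ and $w$ terms but replaces the last summand by $-(R_\gamma-R_t)/(2\sigma^2)$; hence the entire discrepancy between the two regimes is the curvature remainder $\Xi_\gamma := (R_\gamma-R_t)/(2\sigma^2) - \tfrac{n-1}{2}\log(R_\gamma/R_t)$. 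First I would invoke C\ref{c.truncateGamma} to discard $\widetilde{M}$ outright, so that $\sum_{\gamma\neq t}f(\gamma|y)/f(t|y)$ runs only over models with $r_\gamma\le r_t + n/[(2+\nu')\log p]$, and then show that on a high-probability event $\Xi_\gamma$ is uniformly negligible relative to the quantities that already drive Theorem~\ref{thm.consistency}.

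The heart of the argument is comparing the residual term $-\tfrac{n-1}{2}\log(R_\gamma/R_t)$ with its known-variance counterpart $-(R_\gamma-R_t)/(2\sigma^2)$. Writing $R_\gamma=\tilde y^\top(I+\lambda^{-1}X_\gamma X_\gamma^\top)^{-1}\tilde y\le \tilde y^\top\tilde y$, I would fix a high-probability event on which $\tilde y^\top\tilde y \asymp n\sigma^2$ and $R_t=(n-1)\sigma^2(1+o(1))$ (chi-squared concentration, using C\ref{c.tc_bound} and C\ref{c.eigen_t} to control the signal contribution). For underfitting models ($\gamma\not\supset t$) one has $R_\gamma>R_t$, and $\log(1+x)\ge x/(1+x)$ gives $-\tfrac{n-1}{2}\log(R_\gamma/R_t)\le -\tfrac{n-1}{2}(R_\gamma-R_t)/R_\gamma \le -(R_\gamma-R_t)/(2C\sigma^2)$; since $R_\gamma-R_t\succeq\Vert(I-P_\gamma)X_t\beta_t\Vert^2\succeq\Delta_n(J)\succ\log(\sqrt n\vee p)$ by C\ref{c.eigen_g}, this penalty is, up to the constant $C$, exactly what defeats the combinatorial count of underfitting models in Theorem~\ref{thm.consistency}. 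For overfitting models ($\gamma\supsetneq t$) one has $R_\gamma\le R_t$, and $\log(1+x)\le x$ yields $-\tfrac{n-1}{2}\log(R_\gamma/R_t)\le \tfrac{n-1}{2}(R_t-R_\gamma)/R_\gamma$; the extra reduction $R_t-R_\gamma$ is a chi-squared quantity of $O(r_\gamma-r_t)$ degrees of freedom, so this ``reward'' is $O(r_\gamma-r_t)$ and is dominated by the prior penalty $(|\gamma|-|t|)\log(1/w)\asymp(|\gamma|-|t|)\log p$.

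The crux, and the step I expect to be the main obstacle, is that the overfitting bound demands a \emph{uniform} lower bound $R_\gamma\succeq n\sigma^2$ over the entire C\ref{c.truncateGamma}-admissible class, for otherwise the power $-(n-1)/2$ on $R_\gamma$ could explode as $r_\gamma$ approaches $n$. Equivalently, one must show that $\sup_{\{\gamma:\,r_\gamma\le r_t+n/[(2+\nu')\log p]\}}\tilde y^\top P_\gamma\tilde y$ stays bounded away from $\tilde y^\top\tilde y$, i.e. that no admissible coordinate subspace explains an overwhelming fraction of $\tilde y$. Since $r_\gamma\le r_t+o(n)$ leaves $n-1-r_\gamma=n(1-o(1))$ residual dimensions, the expected residual is $\asymp n\sigma^2$; the difficulty is making this hold simultaneously over the $\sum_{k}\binom{p}{k}\le e^{o(n)}$ such models through a union bound over sub-Gaussian quadratic-form deviations. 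This is precisely where the rank cap in C\ref{c.truncateGamma} and the calibration $\nu'>\nu\vee(2/\delta)$ enter: they keep $n/[(2+\nu')\log p]$ small enough that the entropy $k\log p$ of the admissible class is overwhelmed by the concentration exponent.

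Finally, combining the two regimes, $\Xi_\gamma$ amounts to a constant rescaling $\sigma^2\mapsto C\sigma^2$ of the residual penalty, leaving every other term untouched. Re-running the deterministic bounds from the proof of Theorem~\ref{thm.consistency} with this rescaled variance, whose only effect is to alter the constants $v$, $v'$, $v''$ in the stated rate, gives $\sum_{\gamma\neq t}f(\gamma|y)/f(t|y)\to 0$, and hence $f(t|y)\to 1$, in probability.
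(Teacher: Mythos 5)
Your high-level plan --- reduce to the Theorem~\ref{thm.consistency} machinery, concentrate $R_t$ around $n\sigma^2$, and use C\ref{c.truncateGamma} to keep $\log(R_\gamma/R_t)$ from exploding for models that fit too well --- is the same as the paper's (Section~\ref{sec:proofUnknownSigma}), but two of your steps fail as written. The more serious one is that your two regimes do not cover all models. You claim that every $\gamma\not\supset t$ has $R_\gamma>R_t$ with $R_\gamma-R_t\succeq\Delta_n(J)$. That bound is only available for the paper's class $M_4=\{\gamma\not\supset t,\ r_\gamma\le J|t|\}$: the infimum defining $\Delta_n(J)$ runs only over models with $|\gamma|<J|t|$, so for the ``large'' class $M_3=\{\gamma\not\supset t,\ J|t|<r_\gamma\le u_n(\nu)\}$ (which survives the C\ref{c.truncateGamma} truncation) a model can nearly span $X_t\beta_t$ without containing $t$; there $\|(I-P_\gamma)X_t\beta_t\|^2$ can be far below $\Delta_n(J)$ and $R_\gamma$ can even be smaller than $R_t$. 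These models are neither ``underfitting'' nor ``overfitting'' in your dichotomy, and they require a separate argument --- in the paper, penalty domination via Lemma~\ref{lm.Qgamma}, the factor $\left(\eta_m^n(\nu)\right)^{-|t\wedge\gamma^c|/2}\preceq p^{\kappa|t|/2}$, and $\kappa<(J-1)\delta/2$ from C\ref{c.eigen_g} (Section~\ref{sec:prooflarge}, reused for Theorem~\ref{thm.consistency.nosigma}).

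Second, your overfitted accounting is off exactly where constants decide the outcome. Pointwise $R_t-R_\gamma=O_P(r_\gamma-r_t)$, but the sum over the $\binom{p}{k}$ models of excess rank $k$ needs a \emph{uniform} bound, and uniformly $\max_\gamma (R_t-R_\gamma)\asymp\sigma^2 k\log p$ (this is what the paper's $U(d)$ events deliver); you also cannot rescue the pointwise bound by taking expectations, because $\E\exp\left\{(n-1)(R_t-R_\gamma)/(2R_\gamma)\right\}$ is infinite once $R_\gamma\le(n-1)\sigma^2$. With the correct uniform bound the reward is roughly $\frac{1+4s}{c}(r_\gamma-r_t)\log p$, where $c<1$ is your uniform lower-bound constant for $R_\gamma/(n\sigma^2)$, so it is \emph{not} dominated by the prior penalty $(|\gamma|-|t|)\log p$ alone (whose constant is $1$): you need the determinant factor $\left(n\eta_m^n(\nu)/\lambda\right)^{-(r_\gamma-r_t)/2}\preceq p^{-(1+\delta)(r_\gamma-r_t)}$ from Lemma~\ref{lm.Qgamma} plus a calibration of the constants against $\nu'$ --- the paper's comparison of $x_n$ with $z_n=(r_\gamma-r_t)\log p/n<1/(2+\nu')$, which is where the requirement $\nu'>2/\delta$ in C\ref{c.truncateGamma} is actually consumed. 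Relatedly, the admissible class contains about $e^{n/(2+\nu')}$ models, i.e.\ $e^{\Theta(n)}$ and not $e^{o(n)}$ as you assert; the union bound closes because the constant $1/(2+\nu')$ is small, not because the entropy is subexponential. Your $M_4$ treatment and the concentration of $R_t$ are fine and match the paper's.
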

\begin{proof}
 The proof is given in Section \ref{sec:proofUnknownSigma} of the supplementary materials.
\end{proof}

%


Note that strong consistency results also imply that with probability tending to one, the true model is the posterior mode, that is,  $P(t = \arg\max_\gamma f(\gamma|y)) \to 1$ as $n\to\infty.$ However, in finite
sample this need not be true. 
Furthermore, when the regularity
conditions do not hold, there may be multiple models with large posterior probabilities even for large $n$. Thus, we would like to discover
models with practically large posterior
probability values.
However, in ultra-high dimensional problems,
traditional computational methods based on Markov chain Monte Carlo
(MCMC) algorithms have poor performance. Thus next we describe SVEN to
explore the posterior distribution $f(\gamma|y).$ In particular, SVEN will be used
to discover high probability regions and find the maximum a posteriori (MAP) model
$\arg \max_\gamma f(\gamma|y)$.

\subsection{Searching for high posterior
probability models}\label{sec:methodology}
\subsubsection{Stochastic shotgun search algorithms}
\citet{hans:dobr:2007} proposed the stochastic shotgun search (SSS)
algorithm for recovering models with large posterior probabilities. To
that end, for a given model $\gamma$ let
$\text{nbd}(\gamma) = \gamma^{+} \cup \gamma^{\circ} \cup \gamma^{-}$
denote a neighborhood of $\gamma$, where $\gamma^{+}$ is an ``added"
set containing all the models with one of the $p-|\gamma| $ remaining
covariates added to the current model $\gamma$, $\gamma^{-}$is a
``deleted" set obtained by removing one variable from $\gamma;$ and
$\gamma^{\circ}$ is a ``swapped" set containing the models with one of
the variables from $\gamma$ replaced by one variable from $\gamma^c.$
The SSS algorithm then starts with an initial model $\sg^{(0)},$ and
for $k=1,2,\ldots$
\begin{enumerate}
\item[-] (SSS1) Compute $f(\gamma|y)$ for all $\gamma \in$
  nbd($\sg^{(k-1)}$).
 \item[-] (SSS2) Separately sample $s^+$ from $\sg^{(k-1)+},$  $s^\circ$ from $\sg^{(k-1)\circ}$ and $s^-$ from $\sg^{(k-1)-}$ with probabilities proportional to $f(\cdot|y).$
 \item[-] (SSS3) Sample $\sg^{(k)}$ from $s^+,s^\circ$ and $s^-$ with probability proportional to $f(s^+|y),$ $f(s^\circ|y)$ and $f(s^-|y)$ respectively.
 \end{enumerate}
After running for some prespecified large number of iterations, the algorithm then declares the model discovered with the largest (unnormalized) posterior probability as the MAP model. \citet{hans:dobr:2007} notes that the sampling probabilities in (SSS1) and (SSS2) can be replaced by the Bayesian information criteria (BIC) and the sampling weights can be computed in parallel.

Following the success of SSS, \citet{shin:bhat:2018} propose further
improvement. Note that, \citet{shin:bhat:2018} use non-local priors,
and so the posterior probabilities $f(\gamma|y)$ are not available
analytically. In fact, they resort to using computationally expensive
Laplace approximation which suggests exact numerical computations of
these quantities are also not straightforward \citep[see
also][]{cao:khar:2020}. Also in ultra-high dimensional problems, SSS
may not be scalable due to its implementation. Thus
\cite{shin:bhat:2018} propose a simplified stochastic shotgun search
with screening (S5) by dropping the ``swapped'' set from consideration
and moreover, by screening out variables from the ``added'' set. (Note
that, in high dimension, the number of ``swapped'' models is much
larger than the numbers of ``added'' and ``deleted'' models.) For
screening, borrowing ideas from frequentist correlation screening of
\citet{fan:lv:2008}, they propose computing the least squares
residuals from a regression of $y$ on $X_\gamma$ and compute the
absolute correlations between each column of $X_{\gamma^c}$ and
the residuals. They then propose keeping models in the
``added'' set corresponding to the largest few of the absolute
correlations. This greatly reduces the burden of computing
$f(\gamma|y)$ for all $\gamma$ in the ``added'' set. However, in their
R package BayesS5, the authors have used ridge residuals with unit
ridge penalty instead of the least squares residuals. Nevertheless,
the S5 algorithm has been useful for exploring the posterior
distribution of $\gamma$ \citep{cao:khar:2020}.

In the variable selection model \eqref{eq:litdiff}, the Gaussian
conjugacy provides analytically tractable forms for $f(\gamma|y)$ up
to a normalizing constant. We also show that $f(\gamma|y)$ can be
rapidly computed for the swapped models, thereby allowing us to
include the swapped models in the neighborhood. We thus develop a
stochastic shotgun algorithm with (posetrior) model based screening
and develop scalable statistical computations for drawing fast
Bayesian inference and prediction.

\subsubsection{Selection of variables with embedded screening}\label{sec:alg}
In order to describe the SVEN algorithm, we first describe how to
compute the unnormalized posterior probabilities in the (SSS1)
step. To that end, compute $\zeta = X^\top\tilde{y}$ as $D^{-1}Z^\top\tilde{y}$ once and for all. Next, suppose we have a current model $\gamma$ and we
want to compute the posterior probabilities of each model in
$\gamma^+.$ Suppose $U_\gamma$ is the upper triangular Cholesky factor
of $X_\gamma^\top X_\gamma + \lambda I$ and
$v_\gamma = U_\gamma^{-\top}X_\gamma^\top \tilde{y}.$ In the algorithm
below, scalar addition to vector, division between two vectors and
other arithmetical and algebraic operation on vectors are interpreted
as entry-wise operations, as implemented in most statistical software
(e.g. in \textsf{R}). Then
\begin{enumerate}
 \item Compute $S_1 \leftarrow U_{\gamma}^{-\top}X_\gamma^\top$ by using forward substitution.
 \item Update $S_2 \leftarrow S_1ZD^{-1}.$ [No need to center Z because $S_11=0.$]
 \item Compute $S_{3}$ as the sum of squares of each column of $S_2.$ Note that $S_2$ is a $|\gamma|\times p$ matrix and so these sums of squares should be computed without storing another $|\gamma|\times p$ matrix.
 \item Set $S_4 \leftarrow \sqrt{n+\lambda - S_3}$ where the arithmatical operations are performed entrywise on the vector. Also in this operation, the entries corresponding to the variables in $\gamma$ are ignored.
 \item Compute $S_5 \leftarrow (\zeta - S_2^{\top}v_\gamma)/ S_4.$
 \item Compute $S_6 \leftarrow \log\det U_{\gamma} + \log S_4$ 
 
 \item Compute $S_7  \leftarrow \|\tilde{y}\|^2 - \|v_\gamma\|^2 - S_5^2$
 
 \item Compute $S_8 \leftarrow 0.5(|\gamma|+1)\log\lambda - S_6 - 0.5(n-1)\log S_7 + (|\gamma+1|)\log(w/(1-w)).$
 \end{enumerate}
 Then for all $i\notin \gamma,$ the $i$th entry of $S_8$ above
 contains the unnormalized posterior probability of the model obtained
 by including $i$ in $\gamma.$ The other entries are ignored. For each
 model in $\gamma^{-},$ its posterior probability can be computed
 easily because typically $|\gamma|$ is small. Furthermore, for each
 $\gamma'\in\gamma^{-}$ we can use the above algorithm to compute the
 unnormalized posterior probabilities of $\gamma''$ in $\gamma'^{+}.$
 Thus we can compute the (unnormalized) posterior probabilities of each model in
 nbd$(\gamma).$

 Given the current model $\gamma$, the complexity for computing
 (unnormalized) $f(\gamma| y)$ for all $\gamma \in$ nbd$(\gamma)$ by
 the above algorithm is
 $\mathcal{O}(|\gamma|^3n + |\gamma|^4 + |\gamma|^2 \|Z\|_0+
 |\gamma|^2 p + p)$, where $\|Z\|_0$ denotes the number of non-zero
 elements in $Z$. Since $|\gamma|$ is practically finite, the
 computational complexity is simply $\mathcal{O}(n \vee p +
 \|Z\|_0)$. If in addition, $Z$ is sparse, as in the genome-wide
 association study example in section~\ref{sec:dataanalysis}, the
 complexity for computing all posterior probabilities in nbd$(\gamma)$
 is linear in both $n$ and $p$. Finally, note that, the additional memory
 requirement for the above algorithm except storing the $Z$ matrix is practically $\mathcal{O}(n \vee p)$. Also, different steps including step 2 of the above algorithm can be performed in parallel using distributed computing architecture.  

 Using the above algorithm as the foundation, we now discuss the SVEN
 algorithm. Suppose $1 = T_1 < T_2 < \cdots <T_m$ is a given
 temperature schedule. Let $\sg^{(0)}$ denote the empty model (i.e.
 the model without any predictor included). Then, for $i=1,2,\ldots,m$
\begin{itemize}
 \item[-] Set $\sg^{(i,0)}$ to be the empty model. Then for $k=1,\ldots,N$
 \item[-] (SVEN1) [Same as (SSS1)] Compute $f(\sg'|y)$ for all $\sg'\in$ nbd$(\sg^{(i,k-1)}).$
  
 \item[-] (SVEN2) [Screening step] Consider \emph{at most} 20
  highest probability neighboring models. That is, construct the set $\sM_k \subseteq $
   nbd$(\sg^{(i,k-1)})$ with $|\sM_k| \leq 20$ such that
   $\sg' \in \sM_k$ only if
   $$f(\sg'|y) > \varrho \max_{g''\in \textrm{nbd}(g^{(i,k-1)})}f(g''|y)$$
   and
   $f(\sg'|y) \geq f(\sg''|y),$ $\forall \sg'' \in \textrm{nbd}(g^{(i,k-1)})\cap\sM_k^c,$ where
   $\varrho$ is some prespecified number (we use $\varrho=\exp(-6)$).
 \item[-] (SVEN3) [Shotgun step] Assign the weight $f(\sg'|y)^{1/T_i}$ to a model $\sg' \in \sM_k.$ Sample a model from $\sM_k$ using these weights and set it as $\sg^{(i,k)}.$
 \end{itemize}
 Our ability to efficiently compute posterior probability of {\it all}
 neighboring models allows us to implement the screening (SVEN2) {\it
   directly} using the objective function $f(\gamma | y)$. This is a
 key difference between SVEN and S5 of \citet{shin:bhat:2018}. Because
 models with large probabilities could be separated by models with
 very low probabilities, a temperature schedule has been used. Such
 tempering is quite common in simulated annealing
 \citep{kirk:gela:1983} and has also been used in
 \cite{shin:bhat:2018}. In order to choose a temperature schedule, we
 turn to our asymptotic results from Section
 \ref{sec:proofKnownSigma}. In particular, the theory indicates that
 the log-posterior probabilities of good models with small model size
 are separated by roughly $O(\log p).$ Thus in order to facilitate
 jumps between these models we set $T_m = \log p + \log\log p $ where
 the additional $\log\log p$ is a heuristic adjustment common in
 numerical computations.  Also the remaining temperatures are chosen
 to be equally spaced between 1 and $T_m.$

Note that at every temperature we start the SVEN algorithm at the
empty model that are run separately. Because the stochastic shotgun
might have a tendency to wander off to obscure valleys containing
large number of variables especially under high temperature; running
them separately avoids getting trapped in such a valley. Most good
models have small size and so they could be explored relatively early
when started multiple times from the empty model.

Note that our algorithm does not require explicitly storing the matrix
$X.$ Indeed, in many applications, $Z$ could be sparse and efficiently
stored in the memory. The matrix $X$ on the other hand is always
dense. Overall our method is extremely memory efficient, and we are
able to directly perform variable selection with significantly larger
$p$ than the other methods may handle. 
 
In addition to the MAP model, our method also provides the posterior
probability of all the models explored by the algorithm and facilitate
approximate Bayesian model averaging \citep{shin:bhat:2018}. To that
end, we sort the models $\{\sg^{(i,k)}, 1 \le i \le m, 1 \le k \le N\}$ according to decreasing
posterior probabilities and retain the best (highest probability) $K$
models $\gamma^{(1)},\gamma^{(2)},\ldots, \gamma^{(K)}$ where $K$ is
chosen so that $f(\gamma^{(K)}|y)/f(\gamma^{(1)}|y) > \varepsilon$
where $\varepsilon$ is a prespecified tolerance (we use
$\log \varepsilon = -16$). Then we assign the weights
\begin{equation}\label{eq:weights}
 w_i = f(\gamma^{(i)}|y)/\sum_{k=1}^K f(\gamma^{(k)}|y)
\end{equation}
to the model $\gamma^{(i)}.$ We define the approximate marginal inclusion probabilities for the $j$th variable as $\hat\pi_j = \sum_{k=1}^K w_k\mathbb{I}(\gamma^{(k)}_j = 1)$ and define the weighted average model (WAM) as the model containing variables $j$ with $\hat\pi_j > 0.5.$  Note that if SVEN is allowed to run indefinitely to explore all $2^p$ models and $\varepsilon$ is set as zero, then the WAM would be theoretically identical to the median probability model \citep{barb:berg:2004}. However, computing the median probability model is infeasible when $p >> n$ because enumerating all the posterior probabilities of $\gamma$ is practically impossible.

In the literature, mostly the MAP (more precisely the \emph{discovered} MAP
model) model is used for prediction. In the
next section we develop methods for point and interval predictions using the top models $\gamma^{(k)}$'s with associated
weights $w_k$'s.


\section{Posterior predictive distribution and
  intervals} \label{sec:predicion} The posterior predictive
distribution of the response $y^*$ at a new covariate
vector $z^* \in \mathbb{R}^p,$ conditional on the observed covariate matrix $Z$ and
hyper-parameters $\lambda$ and $w$ is given by,
\begin{equation}\label{eq:postpreddist1}
  f(y^*|y) = \sum_\gamma \int_{\mathcal{S}_\gamma} f(y^*|z^*,\gamma, \mu_0,\mu_\gamma,\sigma^2) f(\gamma,\mu_0,\mu_\gamma,\sigma^2|y,Z) d\mu_0d\mu_\gamma d\sigma^2,
\end{equation}
where $f(y^*|z^*,\mu_0,\mu_{\gamma},\gamma,\sigma^2)$ is the density
of $\sN(\mu_0 +\mu_\gamma^\top z^*_\gamma,\sigma^2)$ as given in
\eqref{eq:regModelOriginalScale},
$f(\gamma,\mu_0,\mu_\gamma,\sigma^2 \linebreak|y,Z)$ is the joint posterior
density of $(\gamma,\mu_0,\mu_\gamma,\sigma^2)$ given $(y,Z)$
deduced from the hierarchical model \eqref{eq:litdiff}, and $\mathcal{S}_\gamma = (0,\infty)\times \mathbb{R}^{|\gamma|}\times \mathbb{R}$. Note that, the
distribution \eqref{eq:postpreddist1} is not 
tractable. However, as shown later in this section, posterior
predictive mean and variance of $y^*$ can be expressed as (posterior)
expectations of some analytically available functions of $\gamma$.
Also, samples from an approximation of \eqref{eq:postpreddist1} can be
drawn using our SVEN algorithm. Using these approaches, we now propose
two methods for computing approximate posterior prediction intervals
for $y^*.$

\subsection{A Z-prediction interval} \label{sec:posteriorPredSummary}
In this section we describe some approximations to $E(y^*|y)$ and $\Var(y^*|y)$ and use those to construct an interval for
$y^*$. To that end, from \eqref{eq:litdiff} we observe that $\beta_0$ and $\beta_\gamma$ are
conditionally independent given $y, \gamma, \sigma^2,$ and $Z$ with
\begin{equation}\label{eq:fullcondbeta}
 \beta_0|y,Z,\gamma,\sigma^2 \sim \mathcal{N}(\bar{y},\sigma^2/n),\;\textrm{and}\;\beta_\gamma|y,Z,\gamma,\sigma^2 \sim \mathcal{N}\left( A_{\gamma}^{-1} X^{\top}_{\gamma}\tilde{y}, \sigma^2 A_{\gamma}^{-1}\right),
\end{equation}
where $A_{\gamma}=X^{\top}_{\gamma} X_{\gamma} + \lambda I$ as defined
in section~\ref{sec:modeldes}.  Consequently, the full conditional
distribution of $(\mu_0,\mu_\gamma)$ is a $(|\gamma|+1)$-dimensional
multivariate Gaussian distribution given by
\begin{equation}\label{eq:fullcondmu}
 \begin{pmatrix}
 \mu_0\\ \mu_\gamma
  \end{pmatrix}
\bigg|\sigma^2,\gamma,y \sim \sN\left( 
 \begin{pmatrix}
  \bar{y} - \bar{Z}_\gamma^\top F_\gamma D_\gamma X_\gamma^\top\tilde{y}\\
  F_\gamma D_\gamma X_\gamma^\top\tilde{y}
 \end{pmatrix}
, \sigma^2
\begin{pmatrix}
 n^{-1} + \bar{Z}_\gamma^\top F_\gamma  \bar{Z}_\gamma & -  \bar{Z}_\gamma^\top F_\gamma \\ - F_\gamma \bar{Z}_\gamma & F_\gamma
\end{pmatrix}
\right),
\end{equation}
where $\bar{Z}_\gamma$ and $D_{\gamma}$ are sub-vector of $\bar{Z}$ and sub-matrix of $D$, respectively corresponding to the model $\gamma$, and $F_\gamma = D_\gamma^{-1}A_\gamma^{-1}D_\gamma^{-1}.$ Also,
\begin{equation}\label{eq:sigmaGivenGamma}
    \sigma^2|\gamma, y \sim \IG((n-1)/2, R_\gamma/2),
  \end{equation}
  where $\IG (a, b)$ denotes a inverse gamma random variable with
  density $f(\sigma^2) \propto (\sigma^2)^{-a-1} \exp(-b/\sigma^2)$, and
  $R_\gamma$ is defined in \eqref{eq:rgam}. Next, let  $\tilde{z}_\gamma = z_\gamma^* - \bar{Z}_\gamma$ and note that $\E(\sigma^2|\gamma, y)= R_\gamma/(n-3)$.  Thus, using iterated expectation and variance formulas, we have
  \begin{subequations}
    \label{eq:preddev}
  \begin{align}
  &\E(y^*|y) = \E\left[ \E\left\{y^*|\gamma, \sigma^2, \mu_0, \mu, y \right\}|y\right]  =   \E\left[\E\left\{\mu_0 + \mu_{\gamma}^\top z^*_{\gamma} |\sigma^2, \gamma, y\right \} |y\right] \nonumber \\
  & =  \bar{y} + \E\left[\left\{\tilde{z}_\gamma^\top F_\gamma D_\gamma X_\gamma^\top \tilde{y}\right\}|y\right]\quad \textrm{ and,} \label{eq:predmean}\\ 
  &   \Var(y^*|y) = \E\left(\Var\left\{y^*|\gamma, \sigma^2, \mu_0, \mu, y\right\}| y\right) + \Var\left(\E\left\{y^*|\gamma, \sigma^2, \mu_0, \mu, y \right\}|y\right)\nonumber \\  &= \E\left(\sigma^2|y\right) + \Var\left(\mu_0 + \mu_{\gamma}^\top z^*_{\gamma}|y\right)  
 \nonumber \\
 & =\E\left[\E(\sigma^2|\gamma,y)|y\right]+E\left[ \Var\left\{\mu_0+\mu_{\gamma}^\top z^*_{\gamma}|\sigma^2,\gamma,y\right\}|y\right] +\Var\left[\E(\mu_0+\mu_{\gamma}^\top z^*_{\gamma}|\sigma^2,\gamma,y)|y\right] \nonumber \\    
 & =   \E\left[ \dfrac{R_\gamma}{n-3}\left\{1 + \dfrac1n + \tilde{z}_\gamma^\top F_\gamma \tilde{z}_\gamma\right\}\bigg|y \right] + 
 \Var\left[ \left\{ \tilde{z}_\gamma^\top F_\gamma D_\gamma X_\gamma^\top \tilde{y} \right\}|y\right]\label{eq:predvar}
  \end{align}
  \end{subequations}
From \eqref{eq:predmean} and \eqref{eq:predvar} we see that
both $\E(y^*|y)$ and $\Var(y^*|y)$ can be expressed as posterior
expectations of analytically available functions of $\gamma$. However, because the posterior of $\gamma$ is not entirely available,
we propose using the models $\gamma^{(1)},\ldots,\gamma^{(K)}$
obtained from SVEN as described in section~\ref{sec:alg} with weights
$w_1,\ldots,w_K$ respectively, to approximate these expectations and
variances.  We can use these approximate posterior predictive mean and
variance of $y^*$ to obtain a $(1-\alpha)$ prediction interval for $y^*$ as
$\widehat{E}(y^*|y) \mp z_{\alpha/2}\widehat{\Var}(y^*|y)^{1/2},$ where $z_{\alpha/2}$ is
the $(1-\alpha/2)$th standard normal quantile. We call this
interval Z-prediction interval (Z-PI). Also, the posterior
predictive mean is used as a point estimate of $y^*$. In the next section, we describe an
alternative method for computing a prediction interval for $y^*$ using
Monte Carlo simulation.

\subsection{A Monte Carlo prediction interval}
A prediction interval for $y^*$ can also be constructed using Monte Carlo (MC) samples
generated from the posterior predictive distribution
\eqref{eq:postpreddist1}. Specifically, a $(1-\alpha)$ prediction
interval for $y^*$ is given by $\left[F_{y^*|y}^{-1}(\alpha/2),\ \  F_{y^*|y}^{-1}(1-\alpha/2)\right],$
where $F_{y^*|y}^{-1}(\alpha)$ denotes the $\alpha$-th quantile
of the distribution \eqref{eq:postpreddist1}. Now, we describe a
method for sampling from an approximation of \eqref{eq:postpreddist1}
using SVEN. To that end, we consider $\tilde{f}(y^*|y)$ given by
\begin{equation}\label{eq:apppostpreddist}
  \tilde{f}(y^*|y) = \sum_{i=1}^K w_i \int_{\mathcal{S}_{\gamma^{(i)}}} f(y^*|z^*,\gamma^{(i)}, \mu_0,\mu_{\gamma^{(i)}},\sigma^2) f(\mu_0,\mu_{\gamma^{(i)}},\sigma^2|\gamma^{(i)}, y,Z) d\mu_0d\mu_{\gamma^{(i)}} d\sigma^2,
\end{equation}\vspace{-0.1cm}
where $w_i$'s are defined in \eqref{eq:weights}, and
$\gamma^{(1)},\gamma^{(2)},\ldots, \gamma^{(K)}$ are the $K$ highest
probability models obtained by SVEN as described in
section~\ref{sec:alg}. Thus, $\tilde{f}(y^*|y)$ is the posterior
predictive pdf $f(y^*|y)$ given in \eqref{eq:postpreddist1} except
that the marginal posterior of $\gamma$ is replaced by a mixture
distribution of models chosen by SVEN. Samples from
\eqref{eq:apppostpreddist} can be drawn as follows. First, we sample
$\gamma$ from the top $K$ models with
$P(\gamma = \gamma^{(k)}) = w_k,$ $(1\leq k\leq K).$ Given $\gamma,$
we then sample $\sigma^2$ from \eqref{eq:sigmaGivenGamma}. Next given
$\gamma$ and $\sigma^2,$ we sample $\beta_0$ and $\beta_\gamma$ from
\eqref{eq:fullcondbeta}. Then we compute
$\mu_\gamma = D_\gamma^{-1} \beta_\gamma$ and
$\mu_0 = \beta_0 - \bar{Z}^{\top}_\gamma \mu_\gamma$, which are
samples from \eqref{eq:fullcondmu}. Finally generate $y^*$ from
$\mathcal{N}\left(\mu_0 + {\mu}_{\gamma}^\top{z}^*_{\gamma}, \sigma^2
\right).$ We repeat the above process a large number of times and
construct a $(1-\alpha)$ MC prediction interval (MC-PI) for $y^*$ as
$\left[\tilde{F}^{-1}(\alpha/2),\ \
  \tilde{F}^{-1}(1-\alpha/2)\right],$ where $\tilde{F}^{-1} (\cdot)$
denotes the empirical quantiles based on these samples. In practice,
generally one wants prediction intervals at several new covariate
vectors $z^*$'s. In section~\ref{sec.multpred} of the supplementary
materials, we describe a computationally efficient way of drawing
multiple samples from \eqref{eq:apppostpreddist} using the above
method and thus simultaneously computing prediction intervals at
several new covariate vectors $z^*$'s.


\section{Simulation studies}\label{sec:simulation}
In this section, we study the performance of our SVEN method through
several numerical experiments, and compare it with some other existing
methods. The competing variable selection methods we consider are S5
(\texttt{R} package: \texttt{BayesS5}), EMVS (\texttt{R} package:
\texttt{EMVS}), fastBVSR and three penalization methods, LASSO, Elastic
Net with elastic mixing parameter $\alpha = 0.5$ (\texttt{R} package:
\texttt{glmnet}) and SCAD (\texttt{R} package:
\texttt{ncvreg}). As also noted in \citet{shin:bhat:2018}, we could not
include BASAD \citep{nari:he:2014} for its high computational burden
and our ultra-high dimensional examples. As used in Table 1 of \citet{rock:geor:2014} we run EMVS with $v_1 = 1000$ and three choices for $v_0,$ namely, $v_0=0.2$ (EMVS$^{1}$), $v_0=0.6$ (EMVS$^{2}$) and $v_0=1$ (EMVS$^{3}$). For fastBVSR, the results are obtained using 100,000 MCMC iterations after a burn-in of 10,000 steps. For S5 the hyperparameters are tuned using a function provided in
\texttt{BayesS5}. Moreover, we denote by piMOM and peMOM,
respectively, the product inverse-moment and the product exponential
moment non-local priors used under S5. In addition, for piMOM and
peMOM, we use MAP and LS to denote the MAP estimator and the least
squares estimator from the MAP model, respectively. Under SVEN, both
MAP and WAM models, as described in section~\ref{sec:methodology} are
considered. For SVEN, we use $N=200$ and the temperature schedule described in Section \ref{sec:alg} with $m=9.$ Also, for SVEN, the ridge estimator
$\tilde{\beta}_{\gamma}$ is used to estimate the regression
coefficients for the MAP and the WAM models.

\subsection{Setup of experiments}
Our numerical studies are conducted in six different simulation settings described below.

\subsubsection{Independent predictors}\label{sec:sim.ind}
\noindent In this example, entries of  $X$ are generated independently from $\sN(0,1)$. The coefficients are specified as 
$\beta_1=0.5, \beta_2=0.75, \beta_3=1, \beta_4=1.25, \beta_5=1.5,$  and $\beta_j=0, \forall j > 5.$

\begin{table}
\begin{center}
\small\addtolength{\tabcolsep}{-2pt}
\caption{Independent predictors (Section \ref{sec:sim.ind})}
\label{iid}
\begin{tabular}{c|ccccccc}
	\hline
	\\[-1em]
	\textbf{Method} & \textbf{MSPE} & \textbf{MSE}\bm{$_\beta$} & \shortstack{\textbf{Coverage} \\ \textbf{probability (\%)}} & \shortstack{\textbf{Average} \\ \textbf{model size}} & \textbf{FDR (\%)} &\textbf{FNR (\%)} & \shortstack{\textbf{Jaccard} \\ \textbf{Index (\%)}}\\
	\\[-1em]
	\hline
	\\[-1em]
	SVEN(WAM) & 0.6387 & 0.0083 & 100 & 5 & 0 & 0 & 100\\
	\\[-1em]
	SVEN(MAP) & 0.6387 & 0.0083 & 100 & 5 & 0 & 0 & 100\\
	\\[-1em]
	piMOM(MAP) & 0.6384 & 0.0081 & 100 & 5 & 0 & 0 & 100\\
	\\[-1em]
	peMOM(MAP) & 0.6384 & 0.0080 & 100 & 5 & 0 & 0 & 100\\
	\\[-1em]
	piMOM(LS) & 0.6387 & 0.0083 & 100 & 5 & 0 & 0 & 100\\
	\\[-1em]
	peMOM(LS) & 0.6387 & 0.0083 & 100 & 5 & 0 & 0 & 100\\
	\\[-1em]
    fastBVSR & 0.6478 & 0.0091 & 100 & 5.09 & 1.45 & 0 & 98.55 \\
    \\[-1em]
	EMVS$^1$ &  1.0087 & 0.3777 & 0 & 3.80 & 0 & 24 & 76\\
	\\[-1em]
	EMVS$^2$ &  2.5203 & 1.8734 & 0 & 1.99 & 0 & 60.2 & 39.8\\
	\\[-1em]
	EMVS$^3$ &  5.0909 & 4.3994 & 0 & 0.53 & 0 & 89.4 & 10.6\\
	\\[-1em]
	LASSO & 0.7489 & 0.1146 & 100 & 56.5 & 87.34 & 0 & 12.66\\
	\\[-1em]
	SCAD & 0.6454 & 0.0152 & 100 & 18.42 & 47.50 & 0 & 52.50\\
	\\[-1em]
	Elastic Net & 0.8266 & 0.1898 & 100 & 91.15 & 93.08 & 0 & 6.92\\
  
	\hline
\end{tabular}
\end{center}
\end{table}

\begin{table}
\begin{center}
\small\addtolength{\tabcolsep}{-2pt}
\caption{Compound symmetry (Section \ref{sec:sim.cs}) with $\rho=0.6$.}\label{comp}
\begin{tabular}{c|ccccccc}
	\hline
	\\[-1em]
	\textbf{Method} & \textbf{MSPE} & \textbf{MSE}\bm{$_\beta$} & \shortstack{\textbf{Coverage} \\ \textbf{probability (\%)}} &\shortstack{\textbf{Average} \\ \textbf{model size}} & \textbf{FDR (\%)} &\textbf{FNR (\%)} & \shortstack{\textbf{Jaccard} \\ \textbf{Index (\%)}}\\
	\\[-1em]
	\hline
	\\[-1em]
	SVEN(WAM) & 48.3069 & 1.1912 & 100 & 5 & 0 & 0 & 100 \\
	\\[-1em]
	SVEN(MAP) & 48.3069 & 1.1892 & 100 & 5 & 0 & 0 & 100 \\
	\\[-1em]
	piMOM(MAP) & 48.2277 & 1.0018 & 100 & 5 & 0 & 0 & 100 \\
	\\[-1em]
	peMOM(MAP) & 50.1528 & 3.5669 & 94 & 4.96 & 0.37 & 1.2 & 98.5 \\
	\\[-1em]
	piMOM(LS) & 48.3069 & 1.1892 & 100 & 5 & 0 & 0 & 100 \\
	\\[-1em]
	peMOM(LS) & 50.2789 & 3.8758 & 94 & 4.96 & 0.37 & 1.2 & 98.5 \\
	\\[-1em]
    fastBVSR & 50.0479 & 2.5620 & 100 & 5.78 & 9.54 & 0 & 90.46 \\
    \\[-1em]
	EMVS$^1$ & 50.7090 & 7.0499 & 100 & 5.63 & 9.22 & 0 & 90.78 \\
	\\[-1em]
	EMVS$^2$ & 49.9839 & 5.3218 & 100 & 5.26 & 4.14 & 0 & 95.86 \\
	\\[-1em]
	EMVS$^3$ & 49.6243 & 4.5157 & 100 & 5.08 & 1.33 & 0 & 98.67 \\
	\\[-1em]
	LASSO & 55.2280 & 17.9975 & 100 & 51.02 & 89.94 & 0 & 8.44\\
	\\[-1em]
	SCAD & 48.3167 & 1.2556 & 100 & 6.29 & 11.55 & 0 & 88.45\\
	\\[-1em]
	Elastic Net & 57.5750 & 23.9724 & 100 & 89.68 & 93.76 & 0 & 6.24\\
	\hline
\end{tabular}
\end{center}
\end{table}

\begin{table}
\begin{center}
\small\addtolength{\tabcolsep}{-2pt}
\caption{Autoregressive correlation (Section \ref{sec:sim.ar}) with $\rho=0.6$.}
\label{ar}
\begin{tabular}{c|ccccccc}
	\hline
	\\[-1em]
	\textbf{Method} & \textbf{MSPE} & \textbf{MSE}\bm{$_\beta$} & \shortstack{\textbf{Coverage} \\ \textbf{probability (\%)}} & \shortstack{\textbf{Average} \\ \textbf{model size}} & \textbf{FDR (\%)} &\textbf{FNR (\%)} & \shortstack{\textbf{Jaccard} \\ \textbf{Index (\%)}}\\
	\\[-1em]
	\hline
	\\[-1em]
	SVEN(WAM) & 2.1521 & 0.0173 & 100 & 3 & 0 & 0 & 100 \\
	\\[-1em]
	SVEN(MAP) & 2.1521 & 0.0173 & 100 & 3 & 0 & 0 & 100 \\
	\\[-1em]
	piMOM(MAP) & 2.1519 & 0.0172 & 100 & 3 & 0 & 0 & 100 \\
	\\[-1em]
	peMOM(MAP) & 2.1515 & 0.0168 & 100 & 3 & 0 & 0 & 100 \\
	\\[-1em]
	piMOM(LS) & 2.1521 & 0.0173 & 100 & 3 & 0 & 0 & 100 \\
	\\[-1em]
	peMOM(LS) & 2.1521 & 0.0173 & 100 & 3 & 0 & 0 & 100 \\
	\\[-1em]
    fastBVSR & 2.1961 & 0.0187 & 100 & 3.03 & 0.75 & 0 & 99.25 \\
    \\[-1em]
	EMVS$^1$ & 2.2738 & 0.1286 & 100 & 6.7 & 54.57 & 0 & 45.43 \\
	\\[-1em]
	EMVS$^2$ & 2.2803 & 0.1419 & 100 & 5.28 & 41.42 & 0 & 58.58 \\
	\\[-1em]
	EMVS$^3$ & 2.2947 & 0.1619 & 100 & 4.33 & 28.40 & 0 & 71.60 \\
	\\[-1em]
	LASSO & 2.3118 & 0.1641 & 100 & 28.16 & 76.82 & 0 & 23.19\\
	\\[-1em]
	SCAD & 2.1592 & 0.0252 & 100 & 10.33 & 28.30 & 0 & 71.70\\
	\\[-1em]
	Elastic Net & 2.4590 & 0.3754 & 100 & 54.35 & 91 & 0 & 9.00\\
	\hline
\end{tabular}
\end{center}
\end{table}

\begin{table}
\small\addtolength{\tabcolsep}{-2pt}
\caption{Group structure with 3 groups (Section \ref{sec:sim.group}). }
\label{group}
{\centering
\begin{tabular}{c|ccccccc}
	\hline
	\\[-1em]
	\textbf{Method} & \textbf{MSPE} & \textbf{MSE}\bm{$_\beta$} & \shortstack{\textbf{Coverage} \\ \textbf{probability (\%)}} & \shortstack{\textbf{Average} \\ \textbf{model size}} & \textbf{FDR (\%)} &\textbf{FNR (\%)} & \shortstack{\textbf{Jaccard} \\ \textbf{Index (\%)}}\\
	\\[-1em]
	\hline
	\\[-1em]
	SVEN(WAM)$^4$ & 78.7067 & 299.4512 & 0 & 2.65 & 0 & 82.33 & 17.67\\
	\\[-1em]
	SVEN(MAP)$^4$ & 81.0355 & 533.5387 & 0 & 3 & 0 & 80 & 20\\
	\\[-1em]
	SVEN(WAM)$^5$ & 82.5443 & 1.8816 & 98 & 14.99 & 0.06 & 0.13 & 99.80\\
	\\[-1em]
	SVEN(MAP)$^5$ & 82.1825 & 1.6467 & 98 & 15.02 & 0.25 & 0.13 & 99.62\\
	\\[-1em]
	piMOM(MAP) & 81.3345 & 528.8252 & 0 & 3.02 & 0.4 & 80 & 19.98\\
	\\[-1em]
	peMOM(MAP) & 81.7316 & 530.0427 & 0 & 3.02 & 0.4 & 80 & 19.98\\
	\\[-1em]
	piMOM(LS) & 81.2392 & 528.7916 & 0 & 3.02 & 0.4 & 80 & 19.98\\
	\\[-1em]
	peMOM(LS) & 81.6289 & 530.1160 & 0 & 3.02 & 0.4 & 80 & 19.98\\
	\\[-1em]
    fastBVSR & 81.1029 & 326.776 & 0 & 4.14 & 1.38 & 72.87 & 27.01 \\
    \\[-1em]
	EMVS$^1$ & 79.0816 & 54.5117 & 86 & 15.20 & 2.07 & 0.93 & 97.03 \\
	\\[-1em]
	EMVS$^2$ & 77.8038 & 14.9534 & 99 & 15.05 & 0.38 & 0.07 & 99.56 \\
	\\[-1em]
	EMVS$^3$ & 77.5867 & 7.5430 & 100 & 15.02 & 0.13 & 0 & 99.88 \\
	\\[-1em]
	LASSO & 84.9837 & 111.852 & 0 & 9.36 & 63.49 & 28.93 & 29.96\\
	\\[-1em]
	SCAD & 81.2506 & 530.2818 & 0 & 11.59 & 30.54 & 80 & 16.28\\
	\\[-1em]
	Elastic Net & 85.7453 & 9.3598 & 100 & 68.03 & 65.94 & 0 & 34.06\\
	\hline
\end{tabular}}\\
{$^4\lambda=n/p^2$, $w=\sqrt{n}/p$; $^5\lambda=200$, $w=0.02$.}
\end{table}

\begin{table}
\begin{center}
\small\addtolength{\tabcolsep}{-2pt}
\caption{Factor model with 2 factors (Section \ref{sec:sim.factor}).}
\label{factor}
\begin{tabular}{c|ccccccc}
	\hline
	\\[-1em]
	\textbf{Method} & \textbf{MSPE} & \textbf{MSE}\bm{$_\beta$} & \shortstack{\textbf{Coverage} \\ \textbf{probability (\%)}} & \shortstack{\textbf{Average} \\ \textbf{model size}} & \textbf{FDR (\%)} &\textbf{FNR (\%)} & \shortstack{\textbf{Jaccard} \\ \textbf{Index (\%)}}\\
	\\[-1em]
	\hline
	\\[-1em]
	SVEN(WAM) & 42.9106 & 0.3892 & 100 & 5 & 0 & 0 & 100 \\
	\\[-1em]
	SVEN(MAP) & 42.9103 & 0.3891 & 100 & 5 & 0 & 0 & 100 \\
	\\[-1em]
	piMOM(MAP) & 42.8731 & 0.3724 & 100 & 5 & 0 & 0 & 100 \\
	\\[-1em]
	peMOM(MAP) & 42.9491 & 0.4211 & 100 & 5.01 & 0.17 & 0 & 99.83 \\
	\\[-1em]
	piMOM(LS) & 42.9103 & 0.3891 & 100 & 5 & 0 & 0 & 100 \\
	\\[-1em]
	peMOM(LS) & 42.9361 & 0.4083 & 100 & 5.01 & 0.17 & 0 & 99.83 \\
	\\[-1em]
    fastBVSR & 67.0982 & 19.9837 & 87 & 6.14 & 18.52 & 3.60 & 79.89 \\
    \\[-1em]
	EMVS$^1$ & 64.6038 & 22.1115 & 95 & 19.13 & 66.40 & 1.00 & 33.59 \\
	\\[-1em]
	EMVS$^2$ & 56.7884 & 14.5042 & 95 & 11.58 & 45.34 & 1.00 & 54.64 \\
	\\[-1em]
	EMVS$^3$ & 53.4840 & 11.3980 & 94 & 9.08 & 34.73 & 1.20 & 65.20 \\
	\\[-1em]
	LASSO & 54.2887 & 11.2984 & 99 & 66.37 & 91.81 & 0.20 & 7.03\\
	\\[-1em]
	SCAD & 43.1155 & 0.5743 & 100 & 11.56 & 27.99 & 0 & 72.01\\
	\\[-1em]
	Elastic Net & 62.4327 & 19.4566 & 99 & 54.29 & 95.90 & 0.20 & 4.10\\
	\hline
\end{tabular}
\end{center}
\end{table}

\begin{table}
\begin{center}
\small\addtolength{\tabcolsep}{-2pt}
\caption{Extreme correlation (Section \ref{sec:sim.extreme}).}
\label{extrm}
\begin{tabular}{c|ccccccc}
	\hline
	\\[-1em]
	\textbf{Method} & \textbf{MSPE} & \textbf{MSE}\bm{$_\beta$} & \shortstack{\textbf{Coverage} \\ \textbf{probability (\%)}} & \shortstack{\textbf{Average} \\ \textbf{model size}} & \textbf{FDR (\%)} &\textbf{FNR (\%)} & \shortstack{\textbf{Jaccard} \\ \textbf{Index (\%)}}\\
	\\[-1em]
	\hline
	\\[-1em]
	SVEN(WAM) & 14.0754 & 0.1571 & 100 & 5 & 0 & 0 & 100 \\
	\\[-1em]
	SVEN(MAP) & 14.0757 & 0.1569 & 100 & 5 & 0 & 0 & 100 \\
	\\[-1em]
	piMOM(MAP) & 14.0732 & 0.1547 & 100 & 5 & 0 & 0 & 100 \\
	\\[-1em]
	peMOM(MAP) & 14.0750 & 0.1562 & 100 & 5 & 0 & 0 & 100 \\
	\\[-1em]
	piMOM(LS) & 14.0757 & 0.1569 & 100 & 5 & 0 & 0 & 100 \\
	\\[-1em]
	peMOM(LS) & 14.0757 & 0.1569 & 100 & 5 & 0 & 0 & 100 \\
	\\[-1em]
    fastBVSR & 31.2771   & 32.1993 & 97 & 6.55 & 18.65 & 0.6 & 81.03 \\
    \\[-1em]
	EMVS$^1$ & 14.7568 & 2.6871 & 100 & 5.6 & 8.44 & 0 & 91.56 \\
	\\[-1em]
	EMVS$^2$ & 14.4561 & 1.5340 & 100 & 5.09 & 1.45 & 0 & 98.55 \\
	\\[-1em]
	EMVS$^3$ & 14.4218 & 1.3793 & 100 & 5.03 & 0.5 & 0 & 99.5 \\
	\\[-1em]
	LASSO & 15.3893 & 2.8732 & 100 & 13.77 & 61.13 & 0 & 23.68\\
	\\[-1em]
	SCAD & 14.0799 & 0.1678 & 100 & 5.49 & 5.29 & 0 & 94.71\\
	\\[-1em]
	Elastic Net & 15.5365 & 3.7949 & 100 & 65.87 & 86.75 & 0 & 13.25\\
	\hline
\end{tabular}
\end{center}
\end{table}

\subsubsection{Compound symmetry}\label{sec:sim.cs}
This example is taken from Example 3 in \citet{wang:2009} and Example
2 in \citet{wang:leng:2016}. The rows of $X$ are generated independently from $\sN_p\left(0,(1-\rho)I_p + \rho1_p1_p^\top\right)$ where we take
$\rho = 0.6$. The regression coefficients are set as $\beta_j = 5$ for
$j = 1, \ldots, 5$ and $\beta_j=0$ otherwise.

\subsubsection{Auto-regressive correlation} \label{sec:sim.ar}
The auto-regressive correlation structure is commonly observed in time
series data where the correlation between observations depends on the
time lag between them. In this example, we use AR(1) structure where
the variables further apart from each other are less
correlated. Following Example 2 in \citet{wang:leng:2016},  
$X_j = \rho X_{j-1} + (1-\rho^2)^{1/2}z_j,$ for $1 \leq j \leq p,$
where $X_0$ and $z_j$ ($1\leq j \leq p)$ are iid $\sim \sN_n(0,I_n).$
We use $\rho=0.6$ and set the regression coefficients  as $\beta_1 = 3$, $\beta_4=1.5$, $\beta_7=2$ and $\beta_j=0$ for $j \not \in \{1,4,7\}$.

\subsubsection{Factor models} \label{sec:sim.factor}
This example is from \citet{mein:buhl:2006} and
\citet{wang:leng:2016}. With a fixed number of factors, $K$, we first generate a $p\times K$ matrix $F$ whose entries are iid standard normal. Then the rows of $X$ are independently generated from $\sN_p(0,FF^\top + I_p).$ We fix $K=2$ and the regression coefficients are set to be the same as in Example \ref{sec:sim.cs}.

\subsubsection{Group structure} \label{sec:sim.group}
This special correlation structure arises when variables are grouped
together in the sense that the variables from the same group are
highly correlated. This example is similar to \citet{wang:leng:2016}
and is similar to example 4 of Zou and Hastie (2005) where 15 true
variables are assigned to 3 groups. We generate the predictors as
$X_{m} = z_1 + \zeta_{1, m}$, $X_{5+m} = z_2 + \zeta_{2, m}$,
$X_{10+m} = z_3 + \zeta_{3, m}$  where
$z_i$ are iid $\sim \sN_n(0, I_n)$ and
$\zeta_{i, m} \overset{iid}\sim \sN_n(0, 0.01I_n)$ for $1\leq i \leq 3$ and for $m=0,1,2,3,4$. The regression
coefficients are set as $\beta_j=3$ for $j \in \{1, 2, \ldots, 15\}$
and $\beta_j=0$ otherwise.

\subsubsection{Extreme correlation} \label{sec:sim.extreme}
This challenging example is the example 6 of
\citet{wang:leng:2016}. In this example, We first simulate $z_j$ ,
$j=1, \ldots, p$ and $w_j$, $j=1, \ldots, 5$ independently from the
multivariate standard normal distribution $\sN_n(0, I_n)$. Then the covariates are
generated as $X_j=(z_j+w_j)/\sqrt{2}$ for $j=1, \ldots, 5$ and
$X_j=(z_j+\sum_{i=1}^5 w_i)/2$ for $j = 6, \ldots, p$. By setting the
number of true covariates to be 5 and let $\beta_j = 5$ for
$j = 1, \ldots, 5$ and $\beta_j=0$ for $j = 6, ..., p$, the
correlation between the response and the unimportant covariates is
around $2.5/\sqrt{3}$ times larger than that between the response and
the true covariates, making it difficult to identify the important 
covariates.

Our simulation experiments are conducted using 100 simulated pairs of
training and testing data sets. For each of the simulation settings
introduced above, we set $p=20,000$ and generate training data set and
testing data set of size $n=400$ each. The error variance $\sigma^2$ is
determined by setting theoretical $R^2 = 90\%$ \citep{wang:2009}. The hyperparameters $w$ and $\lambda$ are chosen to be $\sqrt{n}/p$ and $n/p^2$, respectively, except for group structure where we also use $\lambda = 200$ and $w=0.02$ to account for the high within-group correlation and relatively large true model size.

In order to evaluate the performance of the propose method, we compute
the following metrics: (1) mean squared prediction error based on
testing data (MSPE); (2) mean squared error between the estimated
regression coefficients and the true coefficients
(MSE$_{\bm{\beta}}$); (3) coverage probability which is defined as the
proportion of the selected models containing the true model (4)
average model size which is calculated as the average number of
predictors included in the selected models over all the replications
(5) false discovery rate (FDR); (6) false negative rate (FNR) and (7)
the Jaccard index which is defined as the size of the intersection divided
by the size of the union of the selected model and the true model. All computations are done using single--threaded R  on 
a workstation with two 2.6 GHz 8-Core Intel\textregistered E5-2640 v3 processors and 128GB RAM.

\subsection{Simulation results and main findings}
\label{sec:simu}


The average of the metrics of our simulation results are presented in Tables
\ref{iid}-\ref{extrm}. For peMOM and piMOM priors, the difference between the MAP and the LS only arise in the MSPE and the MSE$_{\bm{\beta}}$ but not in the other metrics. We can observe from the tables that SVEN
and S5 perform much better than EMVS, fastBVSR
and the three frequentist penalized methods in general. In most settings, the penalized
methods result in many false discoveries, yet attaining similar or
worse coverage probabilities compared to the Bayesian methods. Since
the estimates of $\beta$ from EMVS are not sparse, it has higher
MSE$_{\bm{\beta}}$ than SVEN and S5. As observed in Tables~\ref{factor} and \ref{extrm}, fastBVSR results in large values of MSPE and MSE$_{\bm{\beta}}$ due to poor estimates of $\beta$. Also, SVEN yields
competitive prediction errors and has better FDR and Jaccard indices
in every case other than the group structure.

For the case of group structure (Table \ref{group}) where there is a
high correlation between the variables within the same group, SVEN
with $w=\sqrt{n}/p$ and $\lambda=n/p^2$ and S5 both pick up only one
representative variable from each group, resulting in a high false
negative rate and average model size around three. Although elastic
net regression successfully includes all the important variables it
also includes a large number of unimportant variables and thus leads
to a very high false discovery rate. However, by increasing the
shrinkage to $\lambda=200$ and increasing the prior inclusion
probability to $w=0.02$, SVEN stands out from its competitors. In
fact, if important predictors are anticipated to be highly
correlated, this prior information can be incorporated by choosing a
larger value for $\lambda.$

In addition, we compare the computing times between S5 (with both piMOM and peMOM priors) and SVEN and find that SVEN hits the MAP model faster than S5. The details are provided in Section \ref{sec.comptime}. 



\section{Real data analysis}\label{sec:dataanalysis}
We examine the practical performance of our proposed method by
applying it to a real data example. \citet{cook:mcmu:2012} conducted a
genome-wide association study on starch, protein, and kernel oil
content in maize. The original field trial at Clayton, NC in 2006
consisted of more than 5,000 inbred lines and check varieties primarily
coming from a diverse IL panel consisting of 282 founding lines
\citep{flin:thui:2005}. Because the dataset comes from a field trial,
the responses could be spatially autocorrelated. Thus we use a random
row-column adjustment to obtain the adjusted phenotypes of the
varieties. However, marker information of only $n=3,951$ of these
varieties are available from the panzea project
(https://www.panzea.org/) which provide information on 546,034 single
nucleotide polymorphisms (SNP) markers after removing duplicates and
SNPs with minor allele frequency (MAF) less than 5\%. We use the
protein content as our phenotype for conducting the association
study. Because the inbred varieties are bi-allelic, we store the
marker information in a sparse format by coding the minor alleles by
one and major alleles by zero.

\begin{figure}[htp]
    \centering
.    \includegraphics[width=0.7\textwidth]{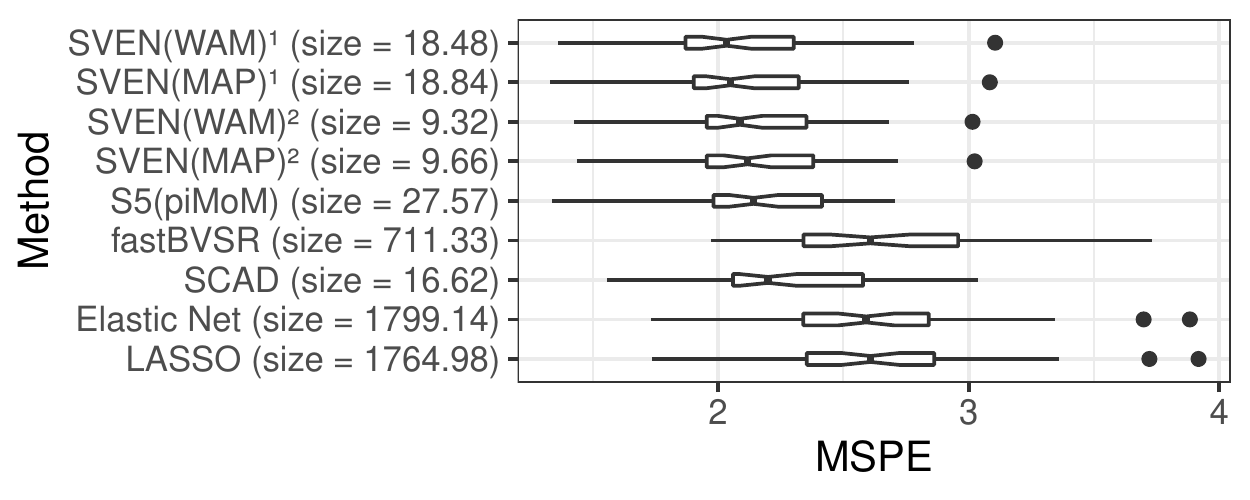}
    \caption{Boxplots for MSPE using SVEN, S5, fastBVSR, LASSO and Elastic net after screening. $^1 w=1/p_0, \lambda=\sqrt{n_0}$; $^2 w=1/p_0, \lambda=n_0/p_0^2.$}
    \label{mspe.scr}
\end{figure}

\subsection{Marker selection after screening}\label{sec:data-screening}
  We compare our method to S5, fastBVSR and the three penalized regression methods
(LASSO, Elastic Net and SCAD). Since both R packages \texttt{BayesS5}
(version 1.31) and \texttt{glmnet} (version 2.0-18) do not work on
this massive data set, we perform a screening of these markers before
conducting variable selection so as to reduce the dimension of the
data. We randomly split the data into a training set of size $n_0 = 3,851$ and
testing set of size 100. Then we use high dimensional ordinary least
squares projection (HOLP) screening method \citep{wang:leng:2016} to
preserve $p_0=3,851$ markers. Note that the training sets are formed by
controlling the MAF of each marker to be no less than 1.5\%. Because markers tend
to be highly correlated, we use SVEN with $w=1/p_0$ but with two choices of $\lambda:$ $\lambda = \sqrt{n_0}$
(high shrinkage) and $\lambda=n_0/p_0^2$ (low shrinkage); and with $m=3$ and $N=50$ for selecting the markers. In our experience, both the model size and MSPE lie in between the respective reported values for other intermediate values of $\lambda$ that we have tried.
We repeat the entire process 50 times -- each time
computing the MSPE and the model size from each method. The peMOM
non-local prior in S5 failed to provide any result even after 100
hours of running, and S5 with the piMOM prior failed to provide a
result in three cases. The fastBVSR algorithm ran successfully in only 39 out of the 50 cases, while the complex iterative factorization at the core of fastBVSR encountered floating point errors in the remaining 11 cases and could not produce any result. In contrast, SVEN faced no difficulties and produced the results within reasonable time.

The boxplots of these MSPEs are shown in Figure \ref{mspe.scr} along
with the average model sizes. Overall SVEN, S5 and SCAD perform significantly better than the lasso, the elastic net regression and fastBVSR and produce smaller MSPE with smaller model sizes. Moreover, SVEN and S5
produce comparable MSPE values but SVEN results in more parsimonious
models. SVEN with high shrinkage produces slightly smaller MSPE but double
model size than with low shrinkage.

\subsection{Marker selection on the entire data set}
Unlike other variable selection methods, SCAD and SVEN can be successfully
directly applied to the whole data set without any pre-screening. We ran
the SVEN 50 times again with the temperature schedule described in Section \ref{sec:alg} with $m=3,$
with $N=100$ iterations per temperature, each time starting with a different random seed.  Initially, we use 
$w=1/p$ and try several values of $\lambda$ as done in Section \ref{sec:data-screening}.
The best models from these
50 runs vary suggesting the posterior surface is severely
multimodal. With $\lambda=n/p^2,$ we find that although the sizes of these best models
remain around nine, the number of unique markers included in at least
one of these 50 best models is over 30 (for SCAD these numbers were $>40$ and $>60,$ respectively). Other larger values of $\lambda$ produce even larger models and more unique variables.
Interestingly, by taking a further look into the markers it identified, we
discovered that the presence of some of these markers in a model is always
accompanied by the absence of certain other markers. More
specifically, some pairs and triplets of the markers are never
included simultaneously in the MAP models but the frequencies at
which they are selected add up to 50.
Thus to achieve more parsimonious models, we reduce $w$ to $1/p^2$ and use $\lambda=n/p^2.$ 
Using such a small $w$, the sizes of the
best models from each run reduce to around four and the number of
unique markers that are included at least once in the 50 best models
comes down to eight. To verify our conjecture on the correlations
between these markers, we calculated the pairwise partial correlations
between these eight markers. It turns out that the pairs of markers that
are never included in the same model are indeed relatively highly
partially correlated than other pairs. Figure \ref{graph} gives the
partial correlations for those markers where the size of the nodes
indicates the number of times the markers are included in one of the
50 best models and Pairs of markers that are never included or excluded jointly
are joined by a line segment. Note that the partial
correlation between the connected markers are at least 29\% whereas the largest partial correlation for markers that are not connected is around 18\%. The inclusion frequencies
of the pairs of connected markers add up to 50. Note that the fifth and sixth important
markers are not grouped with other markers because their inclusions or
exclusions are not related with the inclusion or exclusion of any
other marker. Thus SVEN is able to identify pairs of markers that have
similar effect on the response.

\begin{figure}[t]
    \centering
    \includegraphics{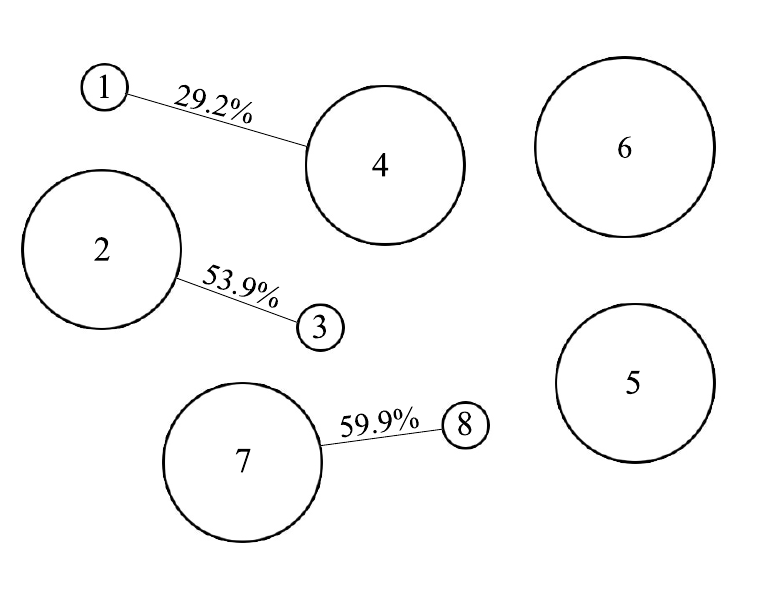}
    \caption{Graph for the selected markers and their corresponding
      partial correlations using $w=1/p^2$ and $\lambda=n/p^2$. The SNP accession
      numbers of the selected markers are:
      1=5\_151885291, 2= 5\_197591528, 3=5\_200552088, 4=6\_7585863,
      5=7\_153216557, 6=9\_142949160, 7=10\_72608193, and 
      8=10\_110298386.}
    \label{graph}
\end{figure}

\begin{figure}[h]
    \centering
    \includegraphics[width=0.9\textwidth]{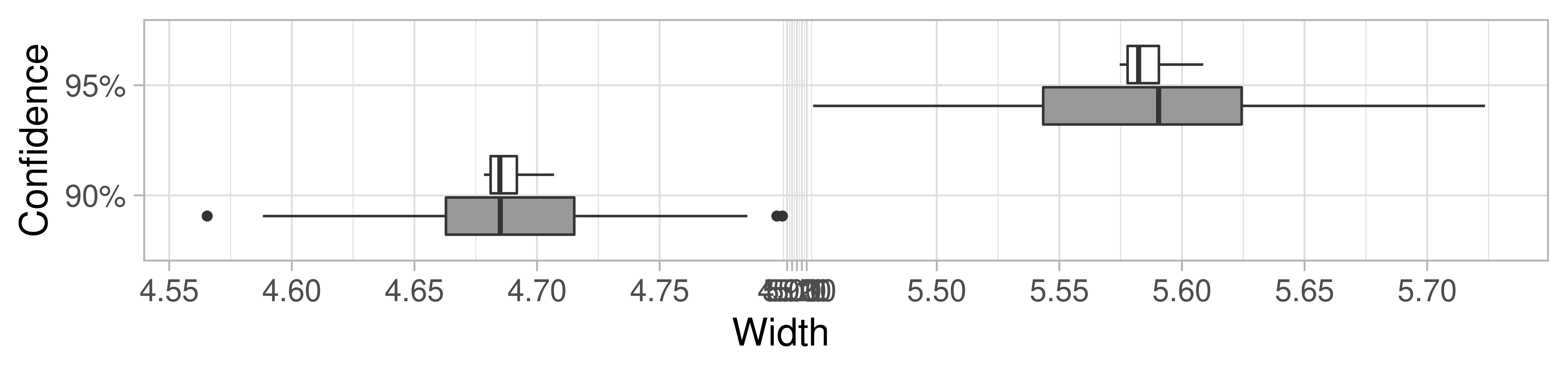}
    \caption{Boxplots of the widths of MC-PIs (grey) and Z-PIs (white).}
    \label{bplot.width}
\end{figure}
Next, we study the performance and the widths of the 90\% and 95\% Z-PIs and MC-PIs described in Section \ref{sec:predicion}.
To that end, we randomly split the entire data into a
training set of size $n=3,751$ under the constraint that the MAF of each marker is at least 1.5\% and a testing set of size $200.$ We also remove any duplicated markers from the training set, which results in a
smaller $p = 544,211$. We generate 10,000 samples from the approximate
posterior predictive distribution \eqref{eq:apppostpreddist} to compute the MC-PIs. We find that the Z-PIs and the
MC-PIs attain identical coverage rates and these are found to be 91\% and 95\% for the 90\% and 95\% prediction
intervals, respectively. The boxplots of the widths of the 200 intervals from each method
are presented in Figure \ref{bplot.width}. We find that widths of the
the Z-PIs are less variable compared to the same for the
MC-PIs. It is encouraging to see that despite
non-normality of the posterior prediction distribution, the Z-PIs are better than simulation based intervals.

\section{Conclusion}\label{sec:conclusion}
In this article, we introduce a Bayesian variable selection method
with embedded screening for ultrahigh-dimensional settings. The model
used here is a hierarchical model with well known spike and slab
priors on the regression coefficients. Use of the degenerate spike
prior for inactive variables not only results in sparse estimates of
regression coefficients and (much) lesser computational burden, it
also allows us to establish strong model selection consistency under
somewhat weaker conditions than \cite{nari:he:2014}. In particular, we
prove that the posterior probability of the true model converges to
one even when the norm of mean effects solely due to the unimportant
variables diverge. On the other hand, our method crucially hinges on
the fact that model probabilities are available in closed form (up to a
normalizing constant) which is due to the use of Gaussian slab priors
on active covariates. We propose a scalable variable selection
algorithm with an inbuilt screening method that efficiently explores
the huge model space and rapidly finds the MAP model. The screening is
actually model based in the sense that it is performed on a set of
candidate models rather than the set of potential variables.  The
algorithm also incorporates the temperature control into a neighbor
based stochastic search method. We use fast Cholesky update to
efficiently compute the (unnormalized) posterior probabilities of the neighboring
models. Since mean and variance of the posterior predictive
distribution are shown to be means of analytically available functions
of the models, a derivative of the proposed method is construction of
novel prediction intervals for future observations. Both Z based
intervals and simulation based intervals are derived and compared.  In
the context of the real data analysis, we observe that Z based
prediction intervals lead to the same coverage rates, although are
narrower than Monte Carlo intervals.  The extensive simulations
studies in section~\ref{sec:simulation} and the real data analysis in
section~\ref{sec:dataanalysis} demonstrate the superiority of the
proposed method compared with the other state of the art methods, even
though the hyperparameters in the proposed method are not carefully
tuned. Among the Bayesian methods used for comparison, the package associated
with the proposed algorithm seems to be the only one that can be
directly applied to datasets of dimension as high as the one analyzed here
with the computing resources mentioned before.

Based on the Cholesky update described in Section \ref{sec:alg}, SVEN can be extended to accommodate the determinantal point process prior \citep{koji:koma:2016} on $\gamma$ given by $p(\gamma|\omega) \propto \omega^{|\gamma|}\left|X^\top_\gamma X_\gamma\right|,$ where $\omega > 0.$
Variable selection and consistency of the resulting posteriors for
high dimensional generalized linear models are considered in
\cite{lian:song:yu:2013}. It would be interesting to extend our method
to the generalized linear regression model setup. The dataset we have
used comes from an agricultural field trial and hence the observations
are expected to be spatially autocorrelated. Although we have used a
two stage procedure by first obtaining spatially adjusted genotypic
effects, our model can be extended to include spatial random effects
\citep{dutt:mond:2014}. Also, in many applications, the covariates may
have a non-linear effect on the response and our method could be
extended to additive models.

\paragraph{Supplemental materials} The supplemental materials contain additional details on computations and proofs of the theoretical results stated in the paper.

\begingroup
\small
\setstretch{1.0}
\bibliographystyle{asadoi}
\bibliography{reference}
\endgroup

\pagebreak
\begin{center}
\textbf{\large Supplement to \\
``Model Based Screening Embedded Bayesian Variable Selection for Ultra-high Dimensional Settings" \\Dongjin Li, Somak Dutta and Vivekananda Roy}
\end{center}
\setcounter{equation}{0}
\setcounter{figure}{0}
\setcounter{table}{0}
\setcounter{page}{1}
\setcounter{section}{0}
\makeatletter
\renewcommand{\thesection}{S\arabic{section}}
\renewcommand{\thesubsection}{\thesection.\arabic{subsection}}
\renewcommand{\theequation}{S\arabic{equation}}
\renewcommand{\thefigure}{S\arabic{figure}}
\renewcommand{\bibnumfmt}[1]{[S#1]}
\renewcommand{\citenumfont}[1]{S#1}
\renewcommand{\thetable}{S\arabic{table}}

\section{Efficient computations for multiple predictions}
\label{sec.multpred}

We describe in this section how we efficiently generate multiple $y^*$
in order to obtain the empirical posterior predictive distribution and
compute the Monte Carlo prediction intervals at several new covariates $z^{*(1)}, \dots, z^{*(L)}$. Recall from Section
\ref{sec:alg} that for a model $\gamma$, $U_\gamma$ is the upper
triangular Cholesky factor of $X_\gamma^\top X_\gamma + \lambda I$ and
$v_\gamma = U_\gamma^{-\top}X_\gamma^\top \tilde{y}.$ The detailed
procedure is described below.
\begin{algorithm}[H]
\setstretch{1}
\caption{Generate multiple $y^*$}
\label{alg:pred}
\begin{algorithmic}[1]
    \State Sample $N$ models with replacement from the best $K$ models returned by SVEN,  with probabilities proportional to $w_i$ defined in \eqref{eq:weights} for $i = 1, \ldots, K$
    \State From the models sampled from step 1, find the unique models $\gamma^1, \ldots, \gamma^M$ such that $\sum_{m=1}^{m=M}S_m = N$, where $S_m$ denote the number of models identical to $\gamma^m$ 
    \State Compute $U_{\gamma^m}$ and $v_{\gamma^m}$ for $m=1, \ldots, M$
    \For{$m=1$ to $m=M$}
        \For{$j=1$ to $j=S_m$}
        \State Sample $\sigma^2$ from $\IG\left((n-1)/2, \text{R}_{{\gamma^m}}/2\right)$
        \State Sample $e_i$ from $\mathcal{N}(0, \sigma^2)$ for $i=1,\ldots,|\gamma^m|$
        \State Compute $\mu_{\gamma^m} = D_{\gamma^m}^{-1}\left(U_{\gamma^m}^{-1}(v_{\gamma^m}+e)\right)$, where $e = (e_1, \ldots, e_{|\gamma^m|})^{\top}$
        \State Sample $\mu_0$ from $\mathcal{N}(\bar{y}-\bar{Z}_{\gamma^m}^\top \mu_{\gamma^m},\sigma^2/n)$
                \For{$\ell=1$ to $\ell=L$}
                \State Generate $y^{*}$ from $\mathcal{N}\left(\mu_0 + z^{*(\ell)}_{\gamma^m}{\mu}_{\gamma^m}, \sigma^2 \right)$
                \EndFor
        \EndFor
    \EndFor
\end{algorithmic}
\end{algorithm}

\section{Comparison of computation time}
\label{sec.comptime}
We examine the computation time it takes for SVEN to hit the MAP model for the first time, and compare it with S5 under the piMOM and the peMOM priors. We simulate the data according to Section \ref{sec:sim.ar}, where $Z$ has an AR(1) structure. We consider five different $(n, p)$ pairs with $p=2n^{3/2}$ where $n \in \{100, 225, 400, 625, 900\}.$ For each of the $(n, p)$ pair, we obtain the computation times over 10 replicates. For SVEN, we use $w=\sqrt{n/p}$, $\lambda=n/p^2$ and $N=50$ with the temperature schedule described in Section \ref{sec:alg} with $m=3.$ Again, S5 is implemented using R-package \texttt{BayesS5} using their default tuning parameter with only one repetition.

Figure \ref{timeplot} shows the median computation times SVEN and S5 take to first hit the MAP model, excluding the preprocessing steps which are negligible. Both algorithms attain the same MAP model for all the data sets. In general, SVEN hits the MAP model faster than  S5 for both small and large number of variables. 
Moreover, compared to S5, the computation time for SVEN increases at a slower rate as $p$ gets larger.

\begin{figure}[H]
    \centering
    \includegraphics[width=0.7\textwidth]{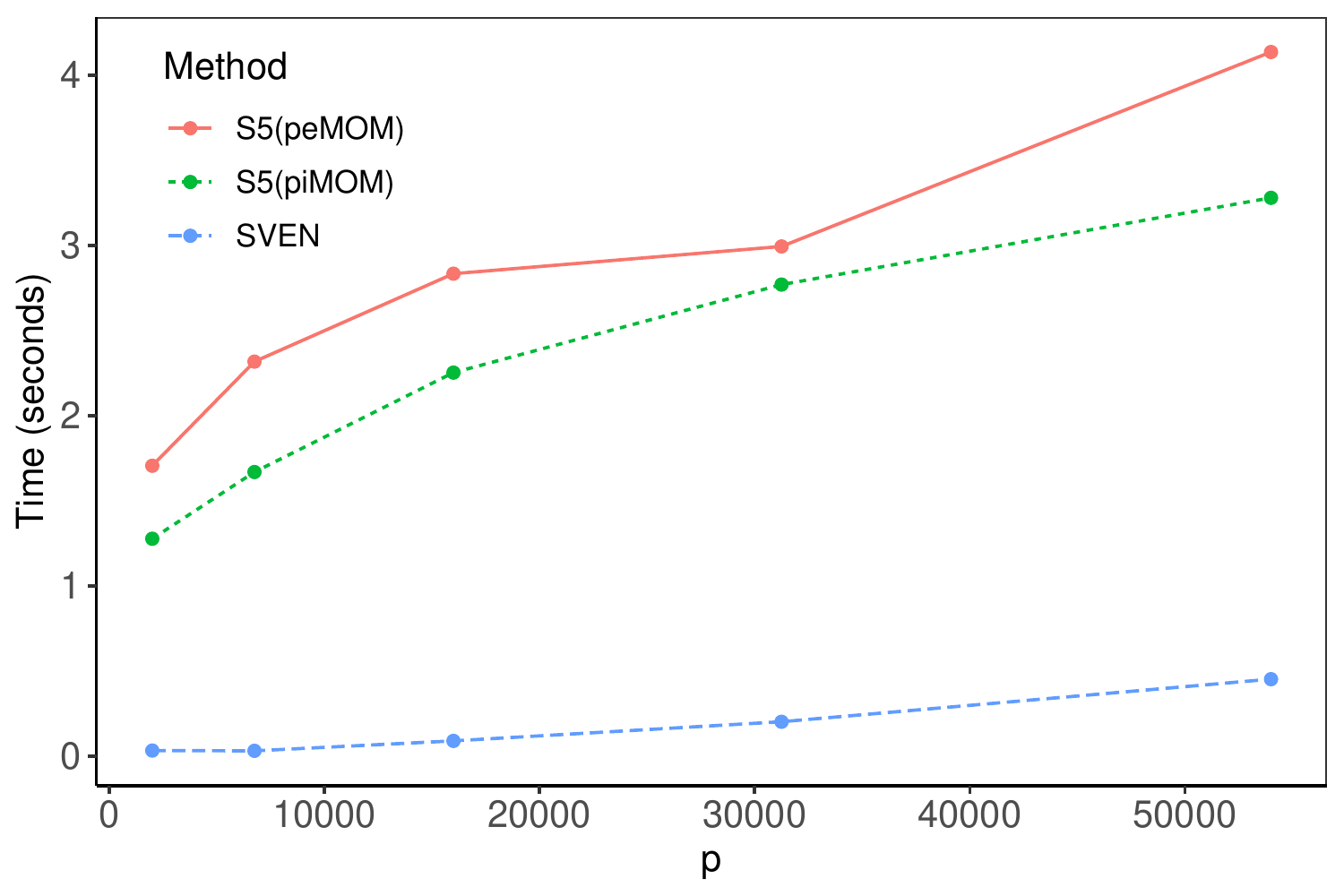}
    \caption{The median computation time to first hit the MAP model for SVEN and S5.}
    \label{timeplot}
\end{figure}

\section{Preliminary results}
Let $R^{*}_{\gamma} = \tilde{Y}^{\top}(I-P_{\gamma})\tilde{Y}$ which is the residual
sum of squares obtained by ordinary least squares and also let
$Q_{\gamma} = \lambda^{|{\gamma}|/2}|X^{\top}_{\gamma}
X_{\gamma}+\lambda I|^{-1/2}$. Before proving the model selection
consistency stated in Theorems \ref{thm.consistency} and
\ref{thm.consistency.nosigma}, we first provide some preliminary
results on the bound of $Q_{\gamma}/Q_t$ which will be used to
bound the posterior ratio of a given model $\gamma$ to the true model
$t$, and the bound of the difference between $R_t$ and $R^*_t$.
\begin{lemma}\label{lm.Qgamma}
  For any model $\gamma \not = t$,
  $\frac{Q_{\gamma}}{Q_t} \leq v^{'} (n
  \eta_m^n(\nu)/\lambda)^{-(r^{*}_{\gamma} - r_t)/2}
  (\eta_m^n(\nu))^{-|t\wedge {\gamma}^c|/2}$ where $v^{'} >0$ is a constant.
\end{lemma}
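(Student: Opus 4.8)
The plan is to express $Q_\gamma$ purely through the nonzero eigenvalues of $X_\gamma^\top X_\gamma$ and then compare a general $\gamma$ to the true model $t$ by routing through the intersection model $k=\gamma\wedge t$. Writing $\theta_1^{(\gamma)}\geq\cdots\geq\theta_{r_\gamma}^{(\gamma)}>0$ for the nonzero eigenvalues of $X_\gamma^\top X_\gamma$, a one-line determinant calculation gives $Q_\gamma=\prod_{i=1}^{r_\gamma}(1+\theta_i^{(\gamma)}/\lambda)^{-1/2}$, and a Schur-complement argument showing $|A_{\gamma\cup\{j\}}|/|A_\gamma|\geq\lambda$ yields monotonicity: $\gamma'\subseteq\gamma\Rightarrow Q_{\gamma'}\geq Q_\gamma$. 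I would then factor $Q_\gamma/Q_t=(Q_\gamma/Q_k)(Q_k/Q_t)$ with $k=\gamma\wedge t$, abbreviate $\eta=\eta_m^n(\nu)$, and bound the two ratios separately.

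For $Q_k/Q_t$ (here $k\subseteq t$, obtained by deleting the $|t\wedge\gamma^c|$ true variables absent from $\gamma$), both $X_k$ and $X_t$ are full column rank by C\ref{c.eigen_t}, so $r_t=|t|$ and $Q_k/Q_t=\lambda^{-|t\wedge\gamma^c|/2}(|A_t|/|A_k|)^{1/2}$. Since $X_k^\top X_k$ is a principal submatrix of $X_t^\top X_t$, Cauchy interlacing lets me pair eigenvalues so that the matched factors contribute a bounded constant (using $na_0<\theta_i^{(t)}<nb_0$ from C\ref{c.eigen_t}), while the $|t\wedge\gamma^c|$ leftover factors are each at most $nb_0+\lambda$. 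Using $n/\lambda\to\infty$ from C\ref{c.shrink_rates}, this gives $Q_k/Q_t\leq C_3(n/\lambda)^{|t\wedge\gamma^c|/2}$ for a constant $C_3$ depending only on $a_0,b_0,|t|$.

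For $Q_\gamma/Q_k$ (here $k\subseteq\gamma$, adding the false variables of $\gamma$), the eigenvalue bound $\eta$ applies only to models of size at most $u_n(\nu)$, so I would choose a full-rank subset $\gamma'$ with $k\subseteq\gamma'\subseteq\gamma$ and $|\gamma'|=r_\gamma\wedge u_n(\nu)=r^*_\gamma$, obtained by extending the independent columns of $X_k$ to a maximal independent set of columns inside $X_\gamma$. Monotonicity gives $Q_\gamma/Q_k\leq Q_{\gamma'}/Q_k$, and because $|\gamma'|\leq u_n(\nu)$ every nonzero eigenvalue of $X_{\gamma'}^\top X_{\gamma'}$ is at least $n\eta$. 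Interlacing again (matched factors $\geq 1$, the $r^*_\gamma-|k|$ leftover factors each $\geq n\eta$) gives $|A_{\gamma'}|/|A_k|\geq(n\eta)^{r^*_\gamma-|k|}$, and since $|\gamma'|=r^*_\gamma$ this simplifies to $Q_\gamma/Q_k\leq(n\eta/\lambda)^{-(r^*_\gamma-|k|)/2}$.

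Finally I would combine the two bounds. Using $|k|=r_t-|t\wedge\gamma^c|$ to rewrite $r^*_\gamma-|k|=(r^*_\gamma-r_t)+|t\wedge\gamma^c|$, together with the identity $(n\eta/\lambda)^{-1}(n/\lambda)=\eta^{-1}$, the product $(Q_\gamma/Q_k)(Q_k/Q_t)$ collapses to $C_3(n\eta/\lambda)^{-(r^*_\gamma-r_t)/2}\eta^{-|t\wedge\gamma^c|/2}$, which is the claim with $v'=C_3$. The main obstacle will be the truncation step: justifying the replacement of $\gamma$ by the rank-revealing subset $\gamma'$ of size $r^*_\gamma$ so that the minimum-nonzero-eigenvalue bound $\eta_m^n(\nu)$ becomes applicable, and checking that monotonicity of $Q$ indeed points in the direction needed for an upper bound. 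The remaining interlacing estimates, the boundedness of $C_3$ as $n\to\infty$, and the algebraic recombination are routine.
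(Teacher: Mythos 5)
Your proof is correct, and it shares the paper's skeleton: both arguments factor $Q_\gamma/Q_t$ through the intersection model $\gamma\wedge t$, bound the two ratios separately, and recombine using $r_t = r_{\gamma\wedge t}+|\gamma^c\wedge t|$ and the identity $(n\eta_m^n(\nu)/\lambda)^{-1}(n/\lambda)=\eta_m^n(\nu)^{-1}$. The execution differs in a way worth recording. For the factor $Q_\gamma/Q_{\gamma\wedge t}$, the paper applies the Sherman--Morrison--Woodbury identity to write the ratio as the determinant of a Schur complement of $X_\gamma^\top X_\gamma+\lambda I$, lower-bounds the spectrum of that Schur complement by $\alpha_{min}(X_\gamma^\top X_\gamma)+\lambda$, and then needs a separate case $|\gamma|>u_n(\nu)$ handled by passing to a sub-model; you instead prove monotonicity of $Q$ under inclusion and pass, in all cases at once, to a full-column-rank subset $\gamma'\supseteq\gamma\wedge t$ of size exactly $r^*_\gamma$, finishing with Cauchy interlacing. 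Your route buys two things. First, it remains valid when $X_\gamma$ is rank deficient, whereas the paper's step $\alpha_{min}(X_\gamma^\top X_\gamma)+\lambda\geq n\eta_m^n(\nu)+\lambda$ presupposes full column rank, since $\eta_m^n(\nu)$ is defined through minimum \emph{nonzero} eigenvalues and $\alpha_{min}(X_\gamma^\top X_\gamma)=0$ otherwise. Second, the paper's Case 2 chain actually requires $r^*_{\gamma'}\geq r^*_\gamma$ (the paper asserts the reverse inequality, which points the wrong way for the displayed bound), and that requirement is met precisely when $\gamma'$ is built to contain $r_\gamma\wedge u_n(\nu)$ linearly independent columns --- exactly your rank-revealing construction, which yields $r^*_{\gamma'}=r^*_\gamma$; so your argument makes that step rigorous. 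What the paper's route buys in exchange is economy: the Schur-complement identity is exact and needs neither your monotonicity lemma nor the basis-extension step (both proofs, note, implicitly use $|\gamma\wedge t|\leq u_n(\nu)$, which holds for all large $n$ since $t$ is fixed and $u_n(\nu)\to\infty$ under C\ref{c.dim}). For the second factor, your interlacing bound $Q_{\gamma\wedge t}/Q_t\leq \left((nb_0+\lambda)/\lambda\right)^{|\gamma^c\wedge t|/2}$ and the paper's SMW-based bound invoke C\ref{c.eigen_t} and $n/\lambda\to\infty$ in the same way and are interchangeable.
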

\begin{proof}
     Because nonzero eigenvalues of $X^{\top}_{\gamma} X_{\gamma}$ and $X_{\gamma} X_{\gamma}^\top$ are identical, it follows that $ Q_{\gamma} = \lambda^{|{\gamma}|/2}|X^{\top}_{\gamma} X_{\gamma}+\lambda I|^{-1/2} = |I+\lambda^{-1}X_{\gamma} X^{\top}_{\gamma}|^{-1/2}$. We first show that $ \frac{Q_{\gamma}}{Q_{\gamma \wedge t }} \leq (n\eta_m^n(\nu)/\lambda)^{-(r^*_\gamma - r_{\gamma\wedge t})/2}.$ There are two cases depending on $|\gamma| \leq,$ or $> u_n(\nu).$
 
 \emph{Case 1:} Suppose $|\gamma| \leq u_n(\nu).$ We then have
    \begin{align*}
      \frac{Q_{\gamma}}{Q_{\gamma \wedge t }} &= |I+\lambda^{-1}X_{\gamma} X^{\top}_{\gamma}|^{-1/2} |I+\lambda^{-1}X_{\gamma \wedge t } X^{\top}_{\gamma \wedge t }|^{1/2} \\ 
                                              & = \left|I+\lambda^{-1}X_{\gamma \wedge t } X^{\top}_{\gamma \wedge t } +  \lambda^{-1}X_{\gamma \wedge t^c }X_{\gamma \wedge t^c }^\top \right|^{-1/2} \left|I+\lambda^{-1}X_{\gamma \wedge t } X^{\top}_{\gamma \wedge t } \right|^{1/2}\\
                                              & = \left|I +  \lambda^{-1}X_{\gamma \wedge t^c }^\top\left(I+\lambda^{-1}X_{\gamma \wedge t } X^{\top}_{\gamma \wedge t } \right)^{-1}X_{\gamma \wedge t^c}\right|^{-1/2}.
    \end{align*}
    
    Next, using Sherman–Morrison–Woodbury matrix identity we have,
    \[\left(I+\lambda^{-1}X_{\gamma \wedge t } X^{\top}_{\gamma \wedge t } \right)^{-1} = I - X_{\gamma \wedge t }(X_{\gamma \wedge t }^\top X_{\gamma \wedge t } + \lambda I)^{-1}X_{\gamma \wedge t }^\top.\]
    
    Thus by letting $E=X_{\gamma \wedge t }^\top X_{\gamma \wedge t },$ $F = X_{\gamma \wedge t }^\top X_{\gamma \wedge t^c }$ and $G = X_{\gamma \wedge t^c }^\top X_{\gamma \wedge t^c }$ we have
    \begin{equation}\label{eq:Q-ratio1}
      \frac{Q_{\gamma}}{Q_{\gamma \wedge t }}
       = \left|\lambda^{-1}\{G+ \lambda I - F^\top(E + \lambda I)^{-1}F\} \right|^{-1/2}.
    \end{equation}

    However, note that $G+ \lambda I - F^\top(E + \lambda I)^{-1}F$ is the Schuar complement in
\[
    H = \begin{pmatrix}
     E + \lambda I & F \\ F^\top & G + \lambda I
    \end{pmatrix} = \begin{pmatrix}
     X_{\gamma \wedge t }^\top X_{\gamma \wedge t } + \lambda I & 
     X_{\gamma \wedge t }^\top X_{\gamma \wedge t^c } \\
     X_{\gamma \wedge t^c }^\top X_{\gamma \wedge t } & X_{\gamma \wedge t^c }^\top X_{\gamma \wedge t^c } + \lambda I,
    \end{pmatrix}
\]
so that the smallest eigenvalue of $G+ \lambda I - F^\top(E + \lambda I)^{-1}F$ is at least the smallest eigenvalue of $H$ which is, in turn, at least $n\eta_m^n(\nu) + \lambda$ because $H$ can be obtained by applying one permutation on the rows and columns of $X_\gamma^\top X_\gamma +\lambda I.$
Consequently, from \eqref{eq:Q-ratio1}, we finally have
\[ \frac{Q_{\gamma}}{Q_{\gamma \wedge t }} \leq  \left( \lambda^{-1} (\lambda + n\eta_m^n(\nu)\right)^{-r_{\gamma\wedge t^c}/2} \leq (n\eta_m^n(\nu)/\lambda)^{-(r^*_\gamma - r_{\gamma\wedge t})/2}\]
because $|\gamma\wedge t^c| \geq r_{\gamma\wedge t^c} \geq r_{\gamma} - r_{\gamma \wedge t} \geq   r_{\gamma}^* - r_{\gamma \wedge t}.$

\emph{Case 2:} If $|\gamma|>u_n(\nu)$ write $\gamma = \gamma'\vee \gamma''$ where $\gamma'$ and $\gamma''$ are disjoint, $|\gamma'| \leq u_n(\nu)$ and $\gamma'\wedge t = \gamma\wedge t.$ Then $Q_{\gamma\wedge t} = Q_{\gamma'\wedge t}$ and
\[Q_\gamma = |I + \lambda^{-1}X_{\gamma'}X_{\gamma'}^\top +
  \lambda^{-1}X_{\gamma''}X_{\gamma''}^\top|^{-1/2} \leq |I +
  \lambda^{-1}X_{\gamma'}X_{\gamma'}^\top|^{-1/2} = Q_{\gamma'}.\]
 Since $\gamma = \gamma'\vee \gamma''$,
$r_{\gamma} \ge r_{\gamma'}$ implying
$r^*_{\gamma} \ge r^*_{\gamma'}$. Also, $\gamma'\wedge t = \gamma\wedge t.$ Hence,
\[\frac{Q_\gamma}{Q_{\gamma \wedge t}} \leq \frac{Q_{\gamma'}}{Q_{\gamma'\wedge t}} \leq (n\eta_m^n(\nu)/\lambda)^{-(r^*_{\gamma'} - r_{\gamma'\wedge t})/2} \leq (n\eta_m^n(\nu)/\lambda)^{-(r^*_\gamma - r_{\gamma\wedge t})/2}. \]
Furthermore,
    \begin{align*}
        \frac{Q_{\gamma \wedge t }}{Q_t} &= |I+\lambda^{-1}X_{\gamma \wedge t } X^{\top}_{\gamma \wedge t }|^{-1/2} |I+\lambda^{-1}X_t X^{\top}_t|^{1/2}\\
        &= \left|\left(I+\lambda^{-1}X_{\gamma \wedge t } X^{\top}_{\gamma \wedge t }\right)^{-1} \left(I+\lambda^{-1}X_{\gamma \wedge t } X^{\top}_{\gamma \wedge t } + \lambda^{-1}X_{\gamma^c \wedge t } X^{\top}_{\gamma^c \wedge t } \right)\right|^{1/2}\\
        &= \left|I + \lambda^{-1}X_{\gamma^c \wedge t }\left(I+\lambda^{-1}X_{\gamma \wedge t } X^{\top}_{\gamma \wedge t }\right)^{-1} X^{\top}_{\gamma^c \wedge t }\right|^{1/2} \\
        &\leq \left|I+\lambda^{-1}X_{\gamma^c \wedge t } X^{\top}_{\gamma^c \wedge t }\right|^{1/2} \leq v'\left(n/\lambda\right)^{|{\gamma}^c \wedge t |/2},
    \end{align*}
    for some $v'>0,$ where the second to last inequality holds because
    $I+\lambda^{-1}X_{\gamma \wedge t } X^{\top}_{\gamma \wedge t} \ge
    I$ and the last inequality holds since the condition
    C\ref{c.tc_bound} is in force and the fact that $X_{\gamma^c \wedge t}$ is a
    submatrix of $X_t.$ Since $X_t$ is full rank, $r_t = r_{\gamma \wedge t} + r_{\gamma^c \wedge t} = r_{\gamma \wedge t} + |\gamma^c \wedge t|$. Thus finally we have,
    \begin{align*}
        \frac{Q_{\gamma}}{Q_t} & \leq v' \left(n\eta_m^n(\nu)/\lambda\right)^{-(r^{*}_{\gamma}/2)} \left(n\eta_m^n(\nu)/\lambda\right)^{(r_t -|\gamma^c \wedge t|)/2} \left(n/\lambda\right)^{|{\gamma}^c \wedge t |/2}\\
        &= v' (n\eta_m^n(\nu)/\lambda)^{-(r^{*}_{\gamma} - r_t)/2} (\eta_m^n(\nu))^{-|\gamma^c \wedge t|/2}.
    \end{align*}

\end{proof}
\noindent Then, using Lemma \ref{lm.Qgamma}, we have the following corollary. 
\begin{corollary}\label{cl.post_ratio}
    For any model $\gamma \not = t$, 
    \begin{equation*}
        \begin{aligned} 
        \operatorname{PR}({\gamma}, t) = \frac{f(\gamma|Y, \sigma^2)}{f(t|Y, \sigma^2)} 
        \leq & v^{'}\left(n \eta_m^n(\nu)/\lambda \right)^{-\left(r_{\gamma}^{*}-r_{t}\right)/2}\left(\eta_m^n(\nu)\right)^{-\left|\gamma^c \wedge t \right|/2} b_n^{|{\gamma}|-|t|} \\ & \times \exp \left\{-\frac{1}{2 \sigma^{2}}\left(R_{\gamma}-R_{t}\right)\right\},
        \end{aligned}
    \end{equation*}
    where $b_n=w/(1-w)\sim p^{-1}$, and $v^{'}>0$ is a constant.
\end{corollary}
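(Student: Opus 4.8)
The plan is to write the conditional posterior ratio explicitly from the hierarchical model, observe that it factors into a likelihood-ratio piece, a prior-ratio piece, and an exponential piece, and then control the likelihood-ratio piece directly with Lemma \ref{lm.Qgamma}. Since this is a corollary, the work is essentially bookkeeping; the only point requiring a little care is extracting the correct $\sigma^2$-conditional marginal likelihood.

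First I would write down $L(\gamma|y,\sigma^2)$. The full marginal likelihood \eqref{eq:marginalGamma} was produced by integrating out $\beta_0$, $\beta_\gamma$ and finally $\sigma^2$ against the prior $\propto 1/\sigma^2$. Reading off the integrand just before the $\sigma^2$ integration, and using that $\int_0^\infty (\sigma^2)^{-(n-1)/2-1}\exp(-R_\gamma/(2\sigma^2))\,d\sigma^2 = \Gamma((n-1)/2)(R_\gamma/2)^{-(n-1)/2}$ reproduces the $R_\gamma^{-(n-1)/2}$ factor, gives
$$L(\gamma|y,\sigma^2) \propto (\sigma^2)^{-(n-1)/2}\,\lambda^{|\gamma|/2}|A_\gamma|^{-1/2}\exp\!\left(-\frac{R_\gamma}{2\sigma^2}\right),$$
where the proportionality constant is free of $\gamma$, and $R_\gamma$ is as in \eqref{eq:rgam}.

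Next I would multiply by the model prior $w^{|\gamma|}(1-w)^{p-|\gamma|}$ from \eqref{subeq:priorGamma} and divide the resulting expression for $\gamma$ by that for $t$. The common $(\sigma^2)^{-(n-1)/2}$ factor cancels, yielding
$$\operatorname{PR}(\gamma,t) = \frac{\lambda^{|\gamma|/2}|A_\gamma|^{-1/2}}{\lambda^{|t|/2}|A_t|^{-1/2}}\cdot\left(\frac{w}{1-w}\right)^{|\gamma|-|t|}\exp\!\left(-\frac{R_\gamma - R_t}{2\sigma^2}\right).$$
Here I recognize the first factor as $Q_\gamma/Q_t$ by the definition $Q_\gamma = \lambda^{|\gamma|/2}|X_\gamma^\top X_\gamma + \lambda I|^{-1/2}$, and the second factor as $b_n^{|\gamma|-|t|}$ with $b_n = w/(1-w)$.

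Finally I would invoke Lemma \ref{lm.Qgamma} to replace $Q_\gamma/Q_t$ by its bound $v'(n\eta_m^n(\nu)/\lambda)^{-(r_\gamma^*-r_t)/2}(\eta_m^n(\nu))^{-|t\wedge\gamma^c|/2}$, noting $|t\wedge\gamma^c|=|\gamma^c\wedge t|$, and substitute to obtain the claimed inequality. The derivation is mechanical and there is no genuine obstacle, since Lemma \ref{lm.Qgamma} does all the heavy lifting; the only step demanding attention is confirming that the $\sigma^2$-conditional normalizing constant is identical for $\gamma$ and $t$ so that it cancels, and that the likelihood-ratio factor coincides exactly with $Q_\gamma/Q_t$.
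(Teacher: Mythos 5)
Your proposal is correct and follows essentially the same route as the paper: write the $\sigma^2$-conditional (unnormalized) posterior of $\gamma$, recognize that it factors as $Q_\gamma\, b_n^{|\gamma|}\exp\{-R_\gamma/(2\sigma^2)\}$ times a $\gamma$-free constant, take the ratio so that the constant (and your extra $(\sigma^2)^{-(n-1)/2}$ factor) cancels, and then bound $Q_\gamma/Q_t$ by Lemma \ref{lm.Qgamma}. The paper simply states the conditional posterior form directly rather than deriving it by undoing the $\sigma^2$ integration in \eqref{eq:marginalGamma}, but the substance is identical.
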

\begin{proof}
The posterior of the model $\gamma$ under (2a)-(2d) is given by 
   \begin{align*}
        f(\gamma |Y, \sigma^2) \propto &\exp\left\{-\frac{1}{2\sigma^2}\left(\tilde{Y}^{\top} \tilde{Y}-\tilde{Y}^{\top} X_{\gamma}\left(X^{\top}_{\gamma} X_{\gamma}+\lambda I\right)^{-1} X_{\gamma}^{\top} \tilde{Y} \right)\right\}\\ & \ \ \ \ \ \times \lambda^{|\gamma| / 2}\left|X_{\gamma}^{\top} X_{\gamma}+\lambda I\right|^{-1 / 2} w^{|\gamma|}(1-w)^{p-|\gamma|}\\
        \leq & Q_{\gamma} b_n^{|\gamma|} \exp \left\{-\frac{1}{2 \sigma^{2}}R_{\gamma}\right\}.
   \end{align*}
   
Hence from lemma~\ref{lm.Qgamma} we have
    \begin{equation*}
        \begin{aligned} 
        \operatorname{PR}({\gamma}, t) = \frac{f(\gamma|Y, \sigma^2)}{f(t|Y, \sigma^2)}
        =& \frac{Q_{\gamma}}{Q_{t}} b_n^{|\gamma|-|t|} \exp \left\{-\frac{1}{2 \sigma^{2}}\left(R_{\gamma}-R_{t}\right)\right\} \\ 
        \leq & v^{'}\left(n\eta_m^n(\nu)/\lambda\right)^{-\left(r_{\gamma}^{*}-r_{t}\right)/2}\left(\eta_m^n(\nu)\right)^{-\left|{\gamma}^c \wedge t \right|/2} b_n^{|{\gamma}|-|t|} \\ & \times \exp \left\{-\frac{1}{2 \sigma^{2}}\left(R_{\gamma}-R_{t}\right)\right\}. 
        \end{aligned}
    \end{equation*}

\end{proof}

\begin{lemma}\label{lm.tail_prob}
    For any sequence $h_n \rightarrow \infty$, we have\\
    \centerline{$P(R_t - R^{*}_t > h_n) \leq \exp(-c'n h_n/\lambda)$ for some $c'>0$.}
\end{lemma}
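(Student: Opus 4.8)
The plan is to write $R_t - R^{*}_t$ as a low-rank quadratic form in a Gaussian vector and then apply a Chernoff bound. Since C\ref{c.eigen_t} forces $X_t^\top X_t$ to be positive definite, $X_t$ has full column rank, so $P_t = X_t(X_t^\top X_t)^{-1}X_t^\top$ and, writing $M = X_t^\top X_t$, the ridge and least squares residual sums of squares differ by
\[
R_t - R^{*}_t = \tilde y^\top X_t\big[M^{-1} - (M+\lambda I)^{-1}\big]X_t^\top\tilde y = \lambda\,\tilde y^\top X_t M^{-1}(M+\lambda I)^{-1}X_t^\top\tilde y,
\]
using the resolvent identity $M^{-1} - (M+\lambda I)^{-1} = \lambda M^{-1}(M+\lambda I)^{-1}$. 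Call the middle matrix $B$. Because $M$ and $(M+\lambda I)^{-1}$ commute and $\alpha_{min}(M)\geq a_0 n$ by C\ref{c.eigen_t}, one checks the operator inequality $B \preceq \frac{\lambda}{a_0 n}P_t$, whence $R_t - R^{*}_t \leq \frac{\lambda}{a_0 n}\|P_t\tilde y\|^2$. This is the step that collapses a full-dimensional object into a rank-$r_t$ quadratic form, and it is where C\ref{c.eigen_t} is essential.

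Next I would control $\|P_t\tilde y\|^2$. Since $1_n^\top X_t = 0$ we have $P_t 1_n = 0$, so $P_t\tilde y = P_t\xi$ with $\xi = y - \beta_0 1_n = X_t\beta_t + X_{t^c}\beta_{t^c} + \epsilon \sim \sN_n(m,\sigma^2 I_n)$ and $m = X_t\beta_t + X_{t^c}\beta_{t^c}$. Splitting the mean from the noise and using $\|P_t\epsilon\|^2 = \sigma^2\chi^2_{r_t}$ gives
\[
\|P_t\tilde y\|^2 \leq 2\|P_t m\|^2 + 2\sigma^2\chi^2_{r_t}.
\]
The deterministic term is $O(n)$: indeed $\|P_t m\|^2 \leq \|m\|^2 \leq 2\|X_t\beta_t\|^2 + 2\|X_{t^c}\beta_{t^c}\|^2$, where $\|X_t\beta_t\|^2 \leq b_0 n\|\beta_t\|^2 = O(n)$ by C\ref{c.eigen_t} (with $\beta_t$ fixed) and $\|X_{t^c}\beta_{t^c}\|^2 \preceq \log p = o(n)$ by C\ref{c.tc_bound} and C\ref{c.dim}.

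Finally I would invoke C\ref{c.shrink_rates}, which gives $n/\lambda \succeq n$ and hence $\lambda = O(1)$, so that $\frac{2\lambda\|P_t m\|^2}{a_0 n} = O(\lambda) = o(h_n)$ because $h_n\to\infty$. Thus for all large $n$,
\[
P(R_t - R^{*}_t > h_n) \leq P\Big(\tfrac{2\lambda\sigma^2}{a_0 n}\chi^2_{r_t} > h_n - \tfrac{2\lambda\|P_t m\|^2}{a_0 n}\Big) \leq P\Big(\chi^2_{r_t} > \tfrac{a_0 n h_n}{4\lambda\sigma^2}\Big).
\]
A standard Chernoff bound for the central chi-square, with the moment generating function evaluated at $s=1/4$, yields $P(\chi^2_{r_t} > x) \leq 2^{r_t/2}e^{-x/4}$, so the right side is at most $2^{r_t/2}\exp\{-a_0 n h_n/(16\lambda\sigma^2)\}$. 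Since $r_t$ is fixed and $n h_n/\lambda\to\infty$, the constant $2^{r_t/2}$ is absorbed by slightly shrinking the exponent, giving $P(R_t - R^{*}_t > h_n)\leq \exp\{-c' n h_n/\lambda\}$ for a suitable $c'>0$.

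The only genuinely delicate point is the operator inequality $B\preceq \frac{\lambda}{a_0 n}P_t$, which hinges on the uniform lower eigenvalue bound for $X_t^\top X_t$; everything downstream is routine Gaussian concentration for a fixed-dimensional quadratic form. I expect the main obstacle to be the bookkeeping of magnitudes needed to show the deterministic mean-shift term is provably $o(h_n)$, which is precisely what $\lambda = O(1)$ (from C\ref{c.shrink_rates}) and $\log p = o(n)$ (from C\ref{c.dim}) deliver.
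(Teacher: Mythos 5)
Your proof is correct, and it reaches the same bound as the paper while organizing the probabilistic work differently. Both arguments start from the same algebraic reduction: the resolvent/Sherman--Morrison--Woodbury identity turns $R_t - R^*_t$ into a quadratic form supported on the column space of $X_t$, and C\ref{c.eigen_t} collapses it to a rank-$|t|$ object scaled by $\lambda/n$. The divergence is in what happens next. The paper keeps the quadratic form $Y^\top W Y$ with $W = nX_t(X_t^\top X_t)^{-2}X_t^\top$ and expands it bilinearly in $X_t\beta_t + X_{t^c}\beta_{t^c} + \epsilon$, producing nine terms; it must then bound the mean--noise cross terms separately (a Gaussian tail for $\beta_t^\top X_t^\top W\epsilon$, a $\sqrt{\log p}\,$-scaled chi-square tail for $\beta_{t^c}^\top X_{t^c}^\top W\epsilon$, a chi-square tail for $\epsilon^\top W\epsilon$) and take a union bound over three events. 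You avoid all cross terms by the elementary inequality $\|P_t(m+\epsilon)\|^2 \leq 2\|P_t m\|^2 + 2\|P_t\epsilon\|^2$, so the only random object is a single $\sigma^2\chi^2_{r_t}$, and the deterministic part $O(n)$ is annihilated by the prefactor $\lambda/(a_0 n)$ together with $\lambda = O(1)$ (from C\ref{c.shrink_rates}) and $h_n \to \infty$ --- exactly the same magnitude bookkeeping the paper performs via its conditions C\ref{c.tc_bound} and C\ref{c.dim}, but concentrated in one line. What the paper's finer expansion buys is slightly sharper constants in the exponent (and, incidentally, intermediate facts like the Gaussian cross-term bound that are stylistically reused in other parts of the supplement); what your version buys is a shorter proof with a single Chernoff bound and no case analysis over which tail dominates. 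Both yield $\exp\{-c' n h_n/\lambda\}$ for sufficiently large $n$, which is all the consistency theorems require.
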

\begin{proof}
  Since
  $(n/\lambda)I + (\frac{X^{\top}_t X_t}{n})^{-1} \geq (n/\lambda) I$
  and $1_n^{\top}X =0$, Sherman–Morrison–Woodbury identity implies
    \begin{align*}
        0 \leq R_t-R_t^* &=Y^{\top} X_t\left[(X^{\top}_t X_t)^{-1}-\left(\lambda I + X^{\top}_t X_t\right)^{-1}\right]X^{\top}_t Y\\
                    &=Y^{\top} X_t(X^{\top}_t X_t)^{-1} \left(\lambda^{-1} I + (X^{\top}_t X_t)^{-1}\right)^{-1}(X^{\top}_t X_t)^{-1} X^{\top}_t Y\\
                    &\leq \left(n/\lambda\right)^{-1}Y^{\top}WY,
    \end{align*}
    where $W=nX_t (X^{\top}_t X_t)^{-2}X^{\top}_t$ has rank $|t|$ and by condition C5 has bounded eigenvalues. We want to show that $P(R_t - R^{*}_t > h_n) \leq P\left(Y^{\top}WY > n\lambda^{-1} h_n\right) \leq\exp\left(-c'n\lambda^{-1} h_n\right)$. Next, since $1_n^{\top}X =0$, we have  
    \begin{align*}
        Y^{\top}WY =& (\beta^{\top}_t X^{\top}_t + \beta^{\top}_{t^c} X^{\top}_{t^c} + \epsilon^{\top}) W (X_t\beta_t + X_{t^c}\beta_{t^c} + \epsilon)\\
             =& \beta^{\top}_t X^{\top}_t W X_t\beta_t + \beta^{\top}_t X^{\top}_t W X_{t^c}\beta_{t^c} + \beta^{\top}_t X^{\top}_t W \epsilon + \beta^{\top}_{t^c} X^{\top}_{t^c} W X_t\beta_t \\
             &+ \beta^{\top}_{t^c} X^{\top}_{t^c} W X_{t^c}\beta_{t^c} + \beta^{\top}_{t^c} X^{\top}_{t^c} W \epsilon + \epsilon^{\top}W X_t\beta_t + \epsilon^{\top}W X_{t^c}\beta_{t^c} + \epsilon^{\top}W\epsilon\\
             =& n\beta^{\top}_t\beta_t + \beta^{\top}_t X^{\top}_t W X_{t^c}\beta_{t^c} + \beta^{\top}_{t^c} X^{\top}_{t^c} W X_t\beta_t + \beta^{\top}_t X^{\top}_t W \epsilon + \epsilon^{\top}W X_t\beta_t\\ 
             &+ \beta^{\top}_{t^c} X^{\top}_{t^c} W \epsilon + \epsilon^{\top}W X_{t^c}\beta_{t^c} + \beta^{\top}_{t^c} X^{\top}_{t^c} W X_{t^c}\beta_{t^c} + \epsilon^{\top}W\epsilon.
    \end{align*}
    To find the bound of the tail probability, we spilt our proof into following five steps: 
    \begin{enumerate}[label=(\roman*)]
        \item First, we want to show that $|\beta^{\top}_t X^{\top}_t W X_{t^c}\beta_{t^c}| \preceq \sqrt{n\log p}$. It is clear that
        \begin{equation*}
            |\beta^{\top}_t X^{\top}_t W X_{t^c}\beta_{t^c}| \leq \|X_t\beta_t\| \|X_{t^c}\beta_{t^c}\| \alpha_{max}(W),
        \end{equation*}
        and $\|X_t\beta_t\|^2=\beta^{\top}_t X^{\top}_t X_t\beta_t=n\beta^{\top}_t\left(\frac{X^{\top}_t X_t}{n}\right) \beta_t \leq n c_1 \|\beta_t\|^2$ for some constant $c_1 > 0$. By condition C5 we also know that $\alpha_{max}(W)$ is bounded.
        Then with $\|X_{t^c}\beta_{t^c}\| \preceq \sqrt{\log p}$ from condition C\ref{c.tc_bound}, we have $|\beta^{\top}_t X^{\top}_t W X_{t^c}\beta_{t^c}| \leq c_2 \sqrt{n\log p} $ for some constant $c_2 > 0$.
        
        \item 
        Next we will show that $|\beta^{\top}_{t^c} X^{\top}_{t^c} W X_{t^c}\beta_{t^c}| \preceq \log p$. Condition C\ref{c.tc_bound} and the fact that $W$ has bounded eigenvalues implies that
        \begin{equation*}
            |\beta^{\top}_{t^c} X^{\top}_{t^c} W X_{t^c}\beta_{t^c}| \leq \| X_{t^c}\beta_{t^c} \|^2 \alpha_{max}(W) \leq c_3 \log p
        \end{equation*}
        
 \item Next, we will show that
        $P(\epsilon^{\top} W \epsilon \geq a) \leq P(c_4 \chi^2_{1,|t|}
        \geq a)$ for all $a > 0,$ and $n \geq 1,$ and for some $c_4>0$, where $\chi^2_{1,|t|}$ is  distributed as $\chi^2$ with $|t|$ degrees of freedom.  To that end, let
        $W=P\Lambda P^{\top}$, where $P$ is an orthogonal matrix and
        $\Lambda=\text{diag}\{\lambda_1, \ldots, \lambda_n\}$ is the
        diagonal matrix of the eigenvalues of $W$. Then, since
        $P^{\top}\epsilon \sim N(0, \sigma^2 I)$,
        $\epsilon^{\top} W \epsilon/\sigma^2 = \epsilon^{\top} P
        \Lambda P^{\top}\epsilon/\sigma^2 = \sum_{i=1}^{|t|} \lambda_i
        \frac{G_i^2}{\sigma^2},$ where
        $G_i \stackrel{iid}{\sim} N(0, \sigma^2), i=1, \ldots,
        |t|$. Since eigenvalues of $W$ are bounded,
        $\lambda_i \sigma^2 \leq c_5$ for some $c_4>0$ for all
        $i=1, \ldots, |t|$. Hence,
        $P(\epsilon^{\top} W \epsilon \geq a) = P(\sigma^2
        \sum_{i=1}^{|t|} \lambda_i \frac{G_i^2}{\sigma^2} \geq a) \leq
        P(c_4 \chi^2_{1,|t|} \geq a)$.
        
        \item
        Next we want to show $P(| \beta^{\top}_{t^c} X^{\top}_{t^c} W \epsilon | \geq a) \leq P(c_5\chi_{2,|t|} \geq a/\sqrt{\log p})$ for all $a > 0$, $n \geq 1$ and for some $c_5 > 0,$ where $\chi^2_{2,|t|}$ is distributed as $\chi^2$ with $|t|$ degrees of freedom. To that end, note that by Cauchy-Schwarz inequality,
        \[ | \beta^{\top}_{t^c} X^{\top}_{t^c} W \epsilon | \leq \| \beta^{\top}_{t^c} X^{\top}_{t^c}\|\| W \epsilon \|\]
        However, as in step (iii) above, $\|W\epsilon\|^2$ is stochastically dominated by a constant multiple of $\chi^2-$distributed random variable since $W^2 = P\Lambda^2 P^\top$ has rank $|t|$ and bounded eigenvalues. Hence by Condition C\ref{c.tc_bound}, there exists $c_5 > 0,$ such that
        \[P(| \beta^{\top}_{t^c} X^{\top}_{t^c} W \epsilon | \geq a) \leq P(\| \beta^{\top}_{t^c} X^{\top}_{t^c}\|\| W \epsilon \| \geq a) \leq P\bigl(c_5(\log p)^{1/2}\chi_{2,|t|} \geq a\bigr)\]
      
        \item 
        Thus all sufficiently large $n$, we have
   \[Y^\top WY \leq 2\beta^{\top}_t X^{\top}_t W \epsilon + 2\beta_{t^c}X_{t^c}W\epsilon + \epsilon^{\top} W \epsilon + n \beta^{\top}_t \beta_t + 2c_2 \sqrt{n\log p}  + c_3 \log p.\]
        Now note that  $2\beta^{\top}_t X^{\top}_t W \epsilon = 2n \beta^{\top}_t (X^{\top}_t X_t)^{-1} X^{\top}_t \epsilon \sim N(0, nC_n^2)$, where 
        $$C_n^2 = 4\sigma^2 \beta^{\top}_t \left(\frac{X^{\top}_t X_t}{n}\right)^{-1} \beta_t$$ 
   is bounded. Also note that 
   $$n\lambda^{-1}h_n -  n \beta^{\top}_t \beta_t - 2c_2 \sqrt{n\log p} - c_3\log p > n\lambda^{-1}h_n/2$$ and $n\lambda^{-1}h_n/(\log p) > 1$ for sufficiently large $n.$
   Thus, for sufficiently large $n,$ we have,
   \begin{align*}
&\ P\left(Y^{\top}WY > n\lambda^{-1} h_n\right) \\ 
& \leq P\left(2\beta^{\top}_t X^{\top}_t W \epsilon + 2\beta_{t^c}X_{t^c}W\epsilon + \epsilon^{\top} W \epsilon > \frac{1}{2}n\lambda^{-1}h_n\right) \\
& \leq P\left(2\beta^{\top}_t X^{\top}_t W \epsilon > \frac{1}{6}n\lambda^{-1}h_n\right) + P\left(2\beta_{t^c}X_{t^c}W\epsilon  > \frac{1}{6}n\lambda^{-1}h_n\right) + P\left( \epsilon^{\top} W \epsilon > \frac{1}{6}n\lambda^{-1}h_n\right)  \\
& \leq \ P\left(\frac{2\beta^{\top}_t X^{\top}_t W \epsilon}{\sqrt{n}C_n} > \frac{\sqrt{n} }{6} \frac{h_n}{C_n\lambda} \right) +  P\left(c_5^2\chi^2_{2,|t|} > \frac{1}{36}\frac{\left(n\lambda^{-1} h_n\right)^2}{\log p}\right) + P\left(c_4\chi^2_{1,|t|} > \frac{1}{6}\frac{n h_n}{\lambda}\right) \\
& \leq  \exp\left(-c''n\lambda^{-2} h_n^2\right) + 
           \exp\left(-c'''n\lambda^{-1} h_n\right) + 
           \exp\left(-c''''n\lambda^{-1} h_n\right) \\
& \leq  \exp\left(-c'n\lambda^{-1} h_n\right),
   \end{align*}
for some positive constants  $c',c'',c'''$ and, $c''''$. 
    \end{enumerate}

\end{proof}

\section{Proof of Theorem \ref{thm.consistency}}\label{sec:proofKnownSigma}
To prove the model selection consistency, we use the same strategy as in \citet{nari:he:2014} by dividing the set of models into the following subsets:
\begin{enumerate}[label=(\roman*)]
\item Unrealistically large models:
  $M_{1}=\left\{{\gamma} : r_{\gamma}>u_{n}\right\}$, the models of
  rank greater than $u_n$. Abusing notation we use $u_n$ and 
    $u_n(\nu)$ interchangeably.
  \item Over-fitted models:
    $M_{2}=\left\{{\gamma} : {\gamma} \supset t, r_{\gamma} \leq
      u_{n}\right\}$, the models of rank smaller than $u_n$ which
    include all the important variables and at least one unimportant
    variables.
  \item Large models:
    $M_3=\left\{{\gamma}:{\gamma} \not \supset t, J|t|<r_{\gamma} \leq
      u_n \right\}$, that is, the models which miss one or more
    important variables with rank greater than $J|t|$ but smaller
    than $u_n$ for some fixed positive integer $J$.
  \item Under-fitted models:
    $M_4=\{{\gamma}:{\gamma} \not \supset t, r_{\gamma} \leq J|t|\}$,
    the models which have rank smaller than $J|t|$ and miss at
    least one important variable.
\end{enumerate}
We aim to show that $\sum_{\gamma\in M_k} \operatorname{PR}({\gamma}, t)\xrightarrow{P} 0$ for each $k = 1,2,3,4$, with $\sigma^2$ known.

\subsection{Unrealistically large models}\label{sec:proofverylarge}
We first want to prove that the sum of posterior ratios $PR(\gamma,t)$ over $\gamma \in M_1$ converges exponentially to zero. Note that $M_1$ is empty if $p < n/\log p^{2+\nu}$. The reason is that if $p < n/\log p^{2+\nu}$, then $u_n(\nu) = p$ and $r_\gamma \leq p$, which contradicts the definition of $M_1$. First, we want to find a set of events that are almost unlikely to happen. So, note that for any $s>0$,
\begin{align}
  \label{eq:ularger}
    &\ \ \ \ P\left[\cup_{\gamma \in M_{1}}\left\{R_{t}-R_{\gamma}>n(1+2 s) \sigma^{2}\right\}\right] \nonumber\\ 
    &\leq P\left[R_{t}>n(1+2 s) \sigma^{2}\right] \nonumber\\ &\leq P\left[R_{t}^{*}>(1+s)n\sigma^{2}\right]+P\left[R_{t}-R_{t}^{*}>s n \sigma^{2}\right] \nonumber\\
    &= P\left[\frac{R_t^{*}}{n \sigma^2} -1 > s\right] + P\left[R_{t}-R_{t}^{*}>s n \sigma^{2}\right] \nonumber\\ 
    &\leq \exp\{-c n\} + \exp\{-c' s n^2 \sigma^2/\lambda \},
\end{align}
for some $c, c'>0$, due to Lemma A.2 of \citet{nari:he:2014}.

\noindent Now we consider the term $n\eta_m^n(\nu)/\lambda$. Condition
C\ref{c.shrink_rates} indicates that
$n\eta_m^n(\nu)/\lambda \preceq n \vee p^{2+3\delta}$ because
$\eta_m^n(\nu)$ is the smallest eigenvalue of a correlation matrix,
i.e., $\eta_m^n(\nu) < 1,$ and condition C\ref{c.eigen_g} implies that
$n\eta_m^n(\nu)/\lambda \succeq n \vee p^{2+2\delta}$, that is,
\begin{equation} \label{eq:eigenbound}
    (n \vee p^{2+2\delta}) \preceq n\eta_m^n(\nu)/\lambda \preceq (n \vee p^{2+3\delta}).
\end{equation}
Then we restrict our attention to the high probability event
$\cap_{\gamma \in M_{1}} \left\{R_{t}-R_{\gamma} \leq n(1+2 s)
  \sigma^{2}\right\}$ for $s<\delta / 2(2+\delta)$. Note that, in this
case, the upper bound \eqref{eq:ularger} of the probability of the
complement of this event is bounded by $2 \exp\{-c'' n\}$ for some $c'' >0.$ First, by Lemma
\ref{lm.Qgamma}, we have
\begin{align*} 
    \sum_{\gamma \in M_{1}} \operatorname{PR}({\gamma}, t)
    & \preceq \sum_{\gamma \in M_{1}} v' \left(n\eta_m^n(\nu)/\lambda\right)^{-(r_{\gamma}^{*}-r_t)/2} \left(\eta_m^n(\nu)\right)^{-|\gamma^c \wedge t| / 2} b_n^{|\gamma|-|t|} e^{n(1+2 s) / 2} \\ 
    & \preceq \sum_{\gamma \in M_{1}} p^{-(1+\delta)\left(u_{n}-|t|\right)} \left(\eta_m^n(\nu)\right)^{-|t| / 2} b_n^{|\gamma|-|t|} e^{n(1+2 s) / 2}.
\end{align*}
because for all $\gamma \in M_1$,
$r^*_\gamma = r_\gamma \wedge u_n = u_n$,
$|\gamma^c \wedge t| \le |t|$ and condition C\ref{c.eigen_g} is in
force. Recall that $b_n \sim p^{-1}$. Thus, $(1+b_n)^{p} \sim 1$. Also, by condition C1,
$p = \exp(n d_n)$ for some $d_n \rightarrow 0$. Then, due to condition
C\ref{c.eigen_g} $u_n = n/\log p^{2+\nu} \geq n/\log p^{2+\delta}$
since $\nu < \delta$, we have
\begin{align*}
    \sum_{\gamma \in M_{1}} \operatorname{PR}({\gamma}, t) 
    & \preceq \sum_{\gamma \in M_{1}} e^{-(1+\delta)(u_n-|t|)\text{log} p} b_n^{|{\gamma}|-|t|} (\eta_m^n(\nu))^{-|t|/2}e^{n(1+2 s)/2}\\
    & \preceq \sum_{\gamma \in M_{1}} e^{-(1+\delta)\frac{n}{(2+\delta)\text{log} p} \text{log} p} e^{n(1+2 s)/2} p^{\kappa|t|/2} b_n^{|{\gamma}|-|t|}\\
    & \preceq e^{-n(1+\delta) /(2+\delta)} e^{n(1+2 s) / 2} p^{\kappa|t|/2} \sum_{\gamma \in M_{1}} b_n^{(|{\gamma}|-|t|)} \\ & \preceq e^{-n(1+\delta) /(2+\delta)} e^{n(1+2 s) / 2} p^{(1+\kappa/2)|t|} \sum_{|{\gamma}|=u_{n}}^{p}\left(\begin{array}{c}{p} \\ {|{\gamma}|}\end{array}\right) b_n^{|{\gamma}|} \\ & \preceq e^{-n(1+\delta) /(2+\delta)} e^{n(1+2 s) / 2} e^{n(1+\kappa/2)|t|d_n}\left(1+b_n\right)^{p} \\ 
    & \preceq e^{-v' n} \longrightarrow 0, 
\end{align*} 
as $n \rightarrow \infty$ for some $v' >0$, if $s$ satisfies $1+2 s<2(1+\delta) /(2+\delta)$, i.e., $s<\delta / 2(2+\delta)$. Therefore, we have 
\begin{equation}\label{eq.verylarge}
    \sum_{\gamma\in M_1} \operatorname{PR}({\gamma}, t)\xrightarrow{P} 0.
\end{equation}

\subsection{Over-fitted models}\label{sec:proofoverfitted}
Models in $M_2$ include all important variables plus one or more unimportant variables. For $\gamma \in M_2$, 
\begin{align*} 
    R_{t}^{*}-R_{\gamma}^{*} &= Y^{\top}(I-P_t)Y - Y^{\top}(I-P_{\gamma})Y = \Vert(P_{\gamma}-P_t)(X_t\beta_t + X_{t^c}\beta_{t^c}+\epsilon)\Vert^2 \\
    &= \left(\left\|\left(P_{\gamma}-P_{t}\right) X_{t^c} \beta_{t^c}\right\|+\left\|\left(P_{\gamma}-P_{t}\right) \epsilon\right\|\right)^{2}  \leq\left(\left\|X_{t^c} \beta_{t^c}\right\|+\sqrt{\epsilon^{\top} (P_{\gamma} -P_{t}) \epsilon}\right)^{2}.
\end{align*}
Due to Lemma 1 of \citet{laur:mass:2000} and the fact that
$\epsilon^{\top}(P_{\gamma} - P_t)\epsilon/\sigma^2 \sim \chi^2_{r_{\gamma} -
  r_t}$, for any $x>0$ and for some $\sqrt{2/3} < v < 1$, we
  have for all sufficiently large $n$,
\begin{equation}
\begin{aligned}\label{eq:rtbound_overfitted}
    &\ \ \ \ P\left[R_{t}^{*}-R_{\gamma}^{*}>\sigma^{2}(2+3 x)\left(r_{\gamma}-r_{t}\right) \text{log} p\right]\\ 
    & \leq P\left[\left(\left\|X_{t^c} \beta_{t^c}\right\|+\sqrt{\epsilon^{\top} (P_{\gamma} -P_{t}) \epsilon}\right)^{2} > \sigma^{2}(2+3 x)\left(r_{\gamma}-r_{t}\right) \log p\right]\\
    &= P\left[\frac{\epsilon^{\top} (P_{\gamma} -P_{t}) \epsilon}{\sigma^{2}(2+3x)\left(r_{\gamma}-r_{t}\right) \log p} > 1-\frac{2\left\|X_{t^c} \beta_{t^c}\right\| \sqrt{\sigma^{2}(2+3x)\left(r_{\gamma}-r_{t}\right) \log p}-\left\|X_{t^c} \beta_{t^c}\right\|^2}{\sigma^{2}(2+3x)\left(r_{\gamma}-r_{t}\right) \log p}\right]\\
    &\leq P\left[\epsilon^{\top} (P_{\gamma} -P_{t}) \epsilon>\frac{1}{2}\sigma^{2}(2+3 v x)\left(r_{\gamma}-r_{t}\right) \log p\right]\\ 
    &\leq P\left[\chi_{r_{\gamma}-r_{t}}^{2}-\left(r_{\gamma}-r_{t}\right)>\frac{1}{2}\{(2+3 v x) \log p - 1\} \left(r_{\gamma}-r_{t}\right) \right] \\  
    &\leq P\left[\chi_{r_{\gamma}-r_{t}}^{2}-\left(r_{\gamma}-r_{t}\right)>\frac{1}{2}\left(2+3 v^{2} x\right)\left(r_{\gamma}-r_{t}\right) \log p\right] \\ 
    &\leq P\Big\{ \chi_{r_{\gamma}-r_{t}}^{2}-\left(r_{\gamma}-r_{t}\right) > \sqrt{(r_{\gamma} - r_t) \left[(r_{\gamma}-r_t)(1+x)\log p + a\right]} + \left[(r_{\gamma}-r_t)(1+x)\log p + a\right]\Big\}\\ 
    &\leq \exp\left\{ -(r_{\gamma}-r_t)(1+x)\log p + a\right\}\\ 
    &\leq c_1 \exp \left\{-(1+x)\left(r_{\gamma}-r_{t}\right) \log p\right\} =c_1 p^{-(1+x)\left(r_{\gamma}-r_{t}\right)}, 
\end{aligned}
\end{equation}
where $c_1 = \exp(a) > 0$ and $a$ is a constant such that
\begin{align*}
    &\ \ \ \ \sqrt{(r_{\gamma} - r_t) \left[(r_{\gamma}-r_t)(1+x)\log p + a\right]} + [(r_{\gamma}-r_t)(1+x)\log p + a]
    < \left(2+3 v^{2} x\right)\left(r_{\gamma}-r_{t}\right) \log p.
\end{align*}
Now, consider $0<s<\delta/8$ and define the event 
\begin{align*}
    E_1({\gamma}) &:= \left\{R_t-R_{\gamma} > 2\sigma^2(1+4s)(r_{\gamma} -r_t)\log p\right\}
    \subset \left\{R_t-R_{\gamma} > 2\sigma^2(1+2s)(r_{\gamma} -r_t)\log p\right\}.
\end{align*}
Then, for a fixed dimension $d>r_t$, consider the event $U(d):=\bigcup_{\left\{{\gamma}: r_{\gamma}=d\right\}} E_1({\gamma})$. Since $R_{\gamma} \geq R_{\gamma}^{*}$, we have 
\begin{align*}
   P[U(d)] \leq & P\left[\cup_{\left\{{\gamma} : r_{\gamma}=d\right\}}\left\{R_{t}-R_{\gamma}>2 \sigma^{2}(1+2 s)\left(r_{\gamma}-r_{t}\right) \log p\right\}\right] \\
   \leq & P\left[\cup_{\left\{{\gamma} : r_{\gamma}=d\right\}}\left\{R_{t}-R_{\gamma}^{*}>2 \sigma^{2}(1+2 s)\left(r_{\gamma}-r_{t}\right) \log p\right\}\right] \\ 
   \leq & P\left[\cup_{\left\{{\gamma} : r_{\gamma}=d\right\}}\left\{R_{t}^{*}-R_{\gamma}^{*}>\sigma^{2}(2+3 s)\left(d-r_{t}\right) \log p\right\}\right] \\ &+P\left[R_{t}-R_{t}^{*}>s \sigma^{2}\left(d-r_{t}\right) \log p\right]\\
   \leq& \sum_{\gamma:r_{\gamma}=d} P\left[R_t^{*} - R_{\gamma}^{*} > \sigma^{2}(2+3 s)\left(d-r_{t}\right) \log p\right]\\
   &+P\left[R_{t}-R_{t}^{*}>s \sigma^{2}\left(d-r_{t}\right) \log p\right]\\
   \leq& \sum_{\gamma:r_{\gamma}=d} c_1p^{-(1+s)(d-r_t)}+\exp\left\{-c'ns\sigma^2(d-r_t)(\log p)/\lambda \right\}\\
   \leq& c_1p^{-(1+s)(d-r_t)}p^{d-r_t}+\exp\left\{-c' s(d-r_t)\log p\right\}\\
   =& c_1p^{-s(d-r_t)} + p^{-c' s(d-r_t)}\\
   \leq& c_3 p^{-c_4s(d-r_t)},
\end{align*}
for some $c_3, c_4 >0$, where the fifth and the sixth inequality hold
due to (\ref{eq:rtbound_overfitted}), Lemma \ref{lm.tail_prob},
condition C2, and the fact that the event
$\left\{R_t^{*} - R_{\gamma}^{*} > \sigma^{2}(2+3
  s)\left(d-r_{t}\right) \log p\right\}$ depends only on the
projection matrix $P_{\gamma \wedge t^c}$, so we can write the union
$\cup_{\left\{{\gamma} : r_{\gamma}=d\right\}}$ as a smaller set of
events indexed by $P_{\gamma \wedge t^c}$. Note that since there
exists at most $p^k$ subspaces of rank $k$, the cardinality of such
projections is at most $p^{d-r_t}$.  Next, we consider the union of
all such events $U(d)$, that is, \begin{equation*}
    \begin{aligned} P\left[\cup_{\left\{d>r_{t}\right\}} U(d)\right] & \leq \sum_{\left\{d>r_{t}\right\}} P[U(d)] \leq c_3 \sum_{d>r_{t}} p^{-c_4s\left(d-r_{t}\right)} \\ 
    & \leq  c_3 \sum_{d-r_t=1}^{\infty} p^{-c_4s\left(d-r_{t}\right)} \leq c_3 \frac{p^{-c_4s}}{1-p^{-c_4s}} \\ &=\frac{c_3}{p^{c_4s}-1} \longrightarrow 0 \text { as } n \rightarrow \infty \end{aligned}.
\end{equation*}
Note that, $r_{\gamma}^{*}=r_{\gamma}$ as $r_{\gamma} < u_n$ for
${\gamma} \in M_2$. Then again restricting to the high probability
event $\cap_{\left\{d>r_{t}\right\}} U(d)^c$, by Lemma
\ref{lm.Qgamma}, \eqref{eq:eigenbound} and the fact that
$\gamma^c \wedge t$ is empty, we have 
\begin{equation*}
    \begin{aligned} 
    \sum_{\gamma \in M_{2}} \operatorname{PR}({\gamma}, t) \preceq & \sum_{\gamma \in M_{2}}\left(n\eta_m^n(\nu)/\lambda\right)^{-\left(r_{\gamma}^{*}-r_{t}\right)/2} b_n^{(|{\gamma}|-|t|)} \left(\eta_m^n(\nu)\right)^{-|\gamma^c \wedge t|/2}\\ &\times \exp \left\{-\frac{1}{2 \sigma^{2}}\left(R_{\gamma}-R_{t}\right)\right\} \\ 
    \preceq & \sum_{\gamma \in M_{2}}\left(p^{-(2+2\delta)} \wedge n^{-1}\right)^{\left(r_{\gamma}-r_{t}\right)/2} b_n^{(|{\gamma}|-|t|)} p^{(1+4 s)\left(r_{\gamma}-r_{t}\right)} \\
    \preceq & \sum_{\gamma \in M_{2}}\left(p^{1+\delta} \vee \sqrt{n}\right)^{-\left(r_{\gamma}-r_{t}\right)} b_n^{(|{\gamma}|-|t|)} p^{(1+4 s)\left(r_{\gamma}-r_{t}\right)} \\ 
    \preceq & \sum_{\gamma \in M_{2}}\left(p^{1+\delta-1-4s} \vee \sqrt{n}p^{-1-4s}\right)^{-\left(r_{\gamma}-r_{t}\right)} b_n^{(|{\gamma}|-|t|)}\\
     \preceq & \sum_{\gamma \in M_{2}}\left(p^{-\delta/2} \wedge \frac{p^{1+\delta/2}}{\sqrt{n}}\right)^{\left(r_{\gamma}-r_{t}\right)} b_n^{(|{\gamma}|-|t|)}\\ 
    \preceq & \left(p^{-\delta / 2} \wedge \frac{p^{1+\delta / 2}}{\sqrt{n}}\right)\sum_{|{\gamma}|=|t|+1}^{p}\left(\begin{array}{c}{p} \\ {|{\gamma}|-|t|}\end{array}\right) b_n^{(|{\gamma}|-|t|)}\\ 
    \preceq & \left(p^{-\delta / 2} \wedge \frac{p^{1+\delta / 2}}{\sqrt{n}}\right) (1+b_n)^{p}\\  
    \sim & \rho_{n} \longrightarrow 0,
    \end{aligned}
\end{equation*}
as $n\rightarrow \infty$, where
$\rho_{n}=\left(p^{-\delta / 2} \wedge \frac{p^{1+
      \delta/2}}{\sqrt{n}}\right)$. In the above, we used the fact that $\rho_n \leq 1$ and $r_\gamma - r_t \ge 1$. Note that,
$\delta-4s \geq \frac{\delta}{2} $ since $0 < s < \delta/8$. Hence, we
have
\begin{equation}\label{eq.overfitted}
    \sum_{\gamma\in M_2} \operatorname{PR}({\gamma}, t)\xrightarrow{P} 0.
\end{equation}

\subsection{Large Models}\label{sec:prooflarge}
For models in $M_3$ where the rank is at least $J|t|$ and one or more important variables are not inclued, similar to what we've shown in Section \ref{sec:proofoverfitted}, for $0< s < \delta/d$ we use the event
\begin{equation*}
    E_1({\gamma}) \subset \left\{R_t-R_{\gamma \vee t }> 2\sigma^2(1+2s)(r_{\gamma} -r_t)\log p\right\}.
\end{equation*}\\
Then we consider the union of such events $U(d)=\bigcup_{\left\{{\gamma}: r_{\gamma}=d\right\}} E_1({\gamma})$, for $d>J|t|$, and $s=\delta/8$. Using \eqref{eq:rtbound_overfitted} with the fact that $r_{\gamma \vee t} \ge r_\gamma$ and Lemma \ref{lm.tail_prob} we have
\begin{equation*}
    \begin{aligned} 
    P[U(d)] \leq & P\left[\cup_{\left\{{\gamma} : r_{\gamma}=d\right\}}\left\{R_{t}-R_{\gamma \vee t }>2 \sigma^{2}(1+2 s)\left(r_{\gamma}-r_{t}\right) \log p\right\}\right] \\ 
    \leq & P\left[\cup_{\left\{{\gamma} : r_{\gamma}=d\right\}}\left\{R_{t}-R_{\gamma \vee t }^{*}>2 \sigma^{2}(1+2 s)\left(r_{\gamma}-r_{t}\right) \log p\right\}\right] \\ 
    \leq & \sum_{\{{\gamma}:r_{\gamma} =d\}}P\left[R_t^{*}-R_{\gamma \vee t }^{*}>\sigma^2(2+3s)(d-r_t)\log p\right]\\ &\ \ + P[R_t-R_t^{*} > s\sigma^2(d-r_t)\log p]\\
    \leq & c_1 p^{-(1+s)\left(d-r_{t}\right)} p^{d} + e^{-c'ns\sigma^2(d-r_t)(\log p)/\lambda }\\ 
    \leq & c_5p^{-c_6 d},
    \end{aligned}
  \end{equation*}
  for some $c_5, c_6 >0$.

\noindent Then,   
\begin{equation*}
    P[\cup_{\{d>J|t|\}} U(d)] \leq \sum_{d>J|t|}P\left[U(d)\right] \leq \sum_{d>J|t|} c_5 p^{-c_6 d} \longrightarrow 0 \text{ as } n\rightarrow \infty.
\end{equation*}

\noindent Now, we restrict our attention to the high probability event $\bigcap_{\left\{d>r_{t}\right\}} U(d)^{c}$, we have
\begin{align*}
    \sum_{\gamma \in M_{3}} \operatorname{PR}({\gamma}, t)
    \preceq& \sum_{\gamma \in M_{3}} \left(n\eta_m^n(\nu)/\lambda\right)^{-(r_{\gamma}^{*}-r_t)/2}\left(\eta_m^n(\nu)\right)^{-|\gamma^c \wedge t|/2} b_n^{(|{\gamma}|-|t|)}\\
    &\times\exp \left\{-\frac{1}{2 \sigma^{2}}\left(R_{\gamma}-R_{t}\right)\right\}\\
    \preceq&  \sum_{\gamma \in M_{3}}\left(p^{1+\delta} \vee \sqrt{n}\right)^{-\left(r_{\gamma}-r_{t}\right)} \left(\eta_m^n(\nu)\right)^{-|t|/2} b_n^{(|{\gamma}|-|t|)} p^{(1+4 s)\left(r_{\gamma}-r_{t}\right)} \\ 
    \preceq& \sum_{\gamma \in M_{3}}\left(p^{1+\delta-1-4s} \vee \sqrt{n} p^{-1-4 s}\right)^{-\left(r_{\gamma}-r_{t}\right)} p^{\kappa|t|/2} b_n^{(|{\gamma}|-|t|)} \\ 
    \preceq & \sum_{\gamma \in M_{3}} \left(p^{-\delta / 2} \wedge \frac{p^{1+ \delta/2}}{\sqrt{n}}\right)^{r_{\gamma}-r_t} p^{\kappa|t| / 2} b_n^{(|{\gamma}|-|t|)} \\
    \preceq & \left(p^{-\delta / 2} \wedge \frac{p^{1+ \delta/2}}{\sqrt{n}}\right)^{(J-1) r_{t}+1} p^{\delta(J-1)|t| / 4} \sum_{\gamma \in M_{3}} b_n^{(|{\gamma}|-|t|)} \\
    \preceq &  \rho_n^{(J-1) r_{t}+1} p^{\delta(J-1)|t| / 4}\left(1+b_n\right)^{p} \\
    (\sim & \rho_n^{(J-1)|t|/2}) \longrightarrow 0 
\end{align*}
as $n\rightarrow \infty$. In the above, we used the fact that $\kappa < (J -1)\delta/2$ by condition C4. Note that $\rho_{n}^{r_{\gamma}-r_t} \leq \rho_n^{(J-1) r_{t}+1}$ because $r_{\gamma} > J|t| = J r_t$ and $\rho_n \leq 1$.

\noindent Thus, we have
\begin{equation}\label{eq.large}
    \sum_{\gamma\in M_3} \operatorname{PR}({\gamma}, t)\xrightarrow{P} 0.
\end{equation}

\subsection{Under-fitted Models}\label{sec:proofunderfitted}
First, we will prove that for $c \in (0, 1),$
\begin{equation*}
    P\left[\cup_{\gamma \in M_{4}}\left\{R_{\gamma}-R_{t}<\Delta_{n}(1-c)\right\}\right] \longrightarrow 0,
\end{equation*}
\noindent where $\Delta_n \equiv \Delta_n(J)$ is defined in Condition
C\ref{c.eigen_g}. Since $1_n^{\top}X =0$, by conditions C3 and C4 we
have
\begin{equation*}
    \begin{aligned} 
    R_{\gamma}^{*}-R_{\gamma \vee t} ^{*} &=\left\|\left(P_{\gamma \vee t }-P_{\gamma}\right) Y\right\|^{2} \\
    &=\left\|\left(P_{\gamma \vee t }-P_{\gamma}\right) X_{t} \beta_{t}+\left(P_{\gamma \vee t}-P_{\gamma}\right) X_{t^c} \beta_{t^c}+\left(P_{\gamma \vee t}-P_{\gamma}\right) \epsilon\right\|^{2} \\ 
    &=\left\|\left(P_{\gamma \vee t }-P_{\gamma}\right) X_{t} \beta_{t}+\left(P_{\gamma \vee t }-P_{\gamma}\right) \epsilon\right\|^{2} \\ & \geq\left(\left\|\left(P_{\gamma \vee t }-P_{\gamma}\right) X_{t} \beta_{t}\right\|-\left\|\left(P_{\gamma \vee t }-P_{\gamma}\right) \epsilon\right\|\right)^{2}\\
    &=\left(\left\|\left(I-P_{\gamma}\right) X_{t} \beta_{t}\right\|-\left\|\left(P_{\gamma \vee t }-P_{\gamma}\right) \epsilon\right\|\right)^{2}\\
    &\geq \left(\sqrt{\Delta_n}-\Vert \left(P_{\gamma \vee t }-P_{\gamma}\right)\epsilon \Vert\right)^2,
    \end{aligned}
  \end{equation*}
  for all large $n$. Since $\left\|P_{t} \epsilon\right\|^2/\sigma^2 \sim \chi_{r_t}^2$,  for any $v' \in (0,1)$, we have
    \begin{align}
      \label{eq:m4strgamvt}
    &P\left[\cup_{\gamma \in M_{4}}\left\{R_{\gamma}^{*}-R_{\gamma \vee t }^{*}<\left(1-v'\right)^{2} \Delta_{n}\right\}\right] \nonumber\\ 
    \leq& P\left[\cup_{\gamma \in M_{4}}\left\{\left(\sqrt{\Delta_n}-\left\|\left(P_{\gamma \vee t }-P_{\gamma}\right) \epsilon\right\|\right)^2<(1-v')^2 \Delta_n \right\}\right] \nonumber\\ 
    \leq& P\left[\cup_{\gamma \in M_{4}}\left\{\left\|\left(P_{\gamma \vee t }-P_{\gamma}\right) \epsilon\right\|>v' \sqrt{\Delta_{n}}\right\}\right] \nonumber\\ 
    \leq& P\left[\left\|P_t \epsilon\right\|^2>{v'}^2 \Delta_{n}\right] \nonumber\\ 
    \leq& e^{-c_7 \Delta_{n}},
    \end{align}
  for some constant $c_7 >0$. We also have for any $v'\in (0, 1)$,
\begin{equation*}
    P\left[\cup_{\gamma \in M_{4}}\left\{R_{\gamma \vee t }^{*}-R_{\gamma \vee t }<-\Delta_{n} v' / 2\right\}\right] < e^{-c_8\Delta_{n}},
  \end{equation*}
  for some constant $c_8 >0$. To see this, let
$X_{\gamma \vee t }=U_{n \times r} \Lambda_{r \times r} V_{r \times
  |{\gamma} \vee t |}^{\top}$ be the SVD of $X_{\gamma \vee t }$,
where $r=\text{rank}(X_{\gamma \vee t })$. Then,
$P_{\gamma \vee t }=U U^{\top}$ is the projection matrix onto the
column space of $X_{\gamma \vee t }$ and thus, using $1_n^{\top}X =0$
and equation (4) in the main paper, we have
\begin{align}
  \label{eq:rgamvt}
    R_{\gamma \vee t }^* - R_{\gamma \vee t } 
    &= Y^{\top}(I-U U^{\top})Y - Y^{\top}\left(I + \lambda^{-1}U \Lambda^2 U^{\top}\right)^{-1}Y\nonumber\\
    &= Y^{\top} U\left(\Lambda^2(\lambda I + \Lambda^2)^{-1}-I\right)U^{\top} Y\nonumber\\
    &= \lambda Y^{\top} U(\lambda I + \Lambda^2)^{-1} U^{\top} Y\nonumber\\
    &\leq \left(n\eta_m^n(\nu)/\lambda\right)^{-1} Y^{\top} U U^{\top} Y,
\end{align}
where the last inequality holds because $\lambda I + \Lambda^2 \ge \Lambda^2 \geq n \eta_m^n(\nu) I$.\\

\noindent Since the rank of $U$ is at most $(J+1)|t|$, by \eqref{eq:rgamvt} and \eqref{eq:eigenbound} we have
\begin{align}
   \label{eq:m4rgamvt}
    &P\left[\cup_{\gamma \in M_4} \left\{R_{\gamma \vee t }^* - R_{\gamma \vee t } < -\Delta_n \frac{v'}{2}\right\}\right]\nonumber\\
    \preceq& P\left[\cup_{\gamma \in M_4} \left\{\left(n\lambda^{-1}\eta_m^n(\nu)\right)^{-1} Y^{\top} U U^{\top} Y < -\Delta_n \frac{v'}{2}\right\}\right]\nonumber\\
    \preceq& \exp\left\{-v'n\lambda^{-1}\eta_m^n(\nu) \Delta_n\right\}p^{(J+1)|t|}\nonumber\\
    \preceq& \exp\left\{-p^{2+2\delta}\Delta_n + (J+1)|t|\log{p}\right\}\nonumber\\
    \preceq& e^{-c_8\Delta_n}
\end{align}
\noindent The last inequality above holds because by condition C4
\begin{align*}
    \frac{(J+1)|t|\log{p}}{\Delta_n} \longrightarrow 0 \text{ as } n \rightarrow \infty
\end{align*}
\noindent Then with $R_{\gamma} \geq R_{\gamma}^{*}$, from \eqref{eq:m4strgamvt} and \eqref{eq:m4rgamvt}, we have for any $v \in (0, 1)$,
  \begin{align}
    \label{eq.bd_underfitted}
    &P\left[\cup_{\gamma \in M_{4}}\left\{R_{\gamma}-R_{\gamma \vee t }<\Delta_{n}(1-v)\right\}\right] \nonumber\\ 
    \leq& P\left[\cup_{\gamma \in M_{4}}\left\{R_{\gamma}^{*}-R_{\gamma \vee t }^{*}<\Delta_{n}(1-v / 2)\right]\right.  \nonumber\\ 
    &\ \ \ \ +P\left[\cup_{\gamma \in M_{4}}\left\{R_{\gamma \vee t }^{*}-R_{\gamma \vee t }<-\Delta_{n} v / 2\right\}\right]  \nonumber \\ 
    \leq & 2e^{-c_9 \Delta_{n}} \longrightarrow 0,
    \end{align}
  for some constant $c_9 >0$. Due to \eqref{eq.bd_underfitted} and Lemma
  \ref{lm.tail_prob} with condition C2, for $0<c=3 v<1$, we have
\begin{align*}
    &P\left[\cup_{\gamma \in M_{4}}\left\{R_{\gamma}-R_{t}<\Delta_{n}(1-c)\right\}\right] \\
    \leq& P\left[\cup_{\gamma \in M_{4}}\left\{R_{\gamma}-R_{\gamma \vee t }<\Delta_{n}(1-2 v)\right\}\right]+P\left[\cup_{\gamma \in M_{4}}\left\{R_{\gamma \vee t }-R_{t}<-\Delta_{n} v\right\}\right]\\
    \leq& P\left[\cup_{\gamma \in M_{4}}\left\{R_{\gamma}-R_{\gamma \vee t }<\Delta_{n}(1-2v)\right\}\right]+P\left[\cup_{\gamma \in M_{4}}\left\{R_{t}-R_{\gamma \vee t }>\Delta_{n} v^{2}\right\}\right] \\
    \leq& \exp \left\{-c_9 \Delta_{n}\right\}+P\left[R_{t}-R_{t}^{*}>\Delta_{n} v^{2}/2 \right]+P\left[\cup_{\gamma \in M_{4}} \left\{R_{t}^{*}-R_{\gamma \vee t }^{*}\right\}>\Delta_{n} v^{2}/2\right] \\  
    \leq& \exp \left\{-c_9 \Delta_{n}\right\} + \exp \left\{-c' \Delta_{n}\right\} + P\left[\chi_{J|t|}^{2}>\Delta_{n} v^{2}/2\right] \\
    \leq& 3\exp \left\{-c_{10} \Delta_{n}\right\} \rightarrow 0, 
\end{align*}
for some constant $c_{10} >0$.
\noindent Therefore, restricting to the high probability event 
$$\left\{R_{\gamma}-R_{t} \geq \Delta_{n}(1-c), \forall {\gamma} \in M_{4}\right\}$$, by corollary~\ref{cl.post_ratio} and \eqref{eq:eigenbound} we get
  \begin{align}
    \label{eq.ratiobd_underfiited1}
    \sum_{\gamma \in M_{4}} P R({\gamma}, t)
    \preceq& \sum_{\gamma \in M_{4}} \left(n\eta_m^n(\nu)/\lambda\right)^{-(r_{\gamma}^{*}-r_t)/2}\left(\eta_m^n(\nu)\right)^{-|\gamma^c \wedge t|/2} b_n^{(|{\gamma}|-|t|)} \exp \left\{-\frac{1}{2 \sigma^{2}}\left(R_{\gamma}-R_{t}\right)\right\}\nonumber\\
    \preceq& \sum_{\gamma \in M_{4}}\left(p^{2+3 \delta} \vee n\right)^{|t| / 2} p^{\delta|t|/2} b_n^{|{\gamma}|-|t|} \exp \left\{-\Delta_{n}(1-c) / 2 \sigma^{2}\right\},
    \end{align}
because $r_t-r_{\gamma}^{*} < r_t =|t|$ and $\eta_m^n(\nu)=(n\eta_m^n(\nu)/\lambda)/(n/\lambda) \succeq (p^{2+2 \delta} \vee n) / (p^{2+3 \delta} \vee n)=p^{-\delta}$ due to condition C\ref{c.shrink_rates} and condition C\ref{c.eigen_g}. Then by \eqref{eq.ratiobd_underfiited1} we have
\begin{equation}\label{eq.ratiobd_underfiited2}
    \begin{aligned}
      \sum_{\gamma \in M_{4}} P R({\gamma}, t) &\preceq \exp \left\{-\frac{1}{2 \sigma^{2}}\left(\Delta_{n}(1-c)-\sigma^{2}|t| \log \left(p^{2+3 \delta} \vee n\right)-\sigma^{2}(2+\delta)|t|\log p\right)\right\} \sum_{\gamma \in M_{4}} b_n^{|{\gamma}|}\\
      &\preceq \exp \left\{-\frac{1}{2 \sigma^{2}}\left[\Delta_{n}(1-c)-\sigma^{2}|t|( \log (p^{4+4 \delta} \vee n p^{2+\delta}))\right]\right\} (1+ b_n)^p\\
    &\preceq \exp \left\{-\frac{1}{2 \sigma^{2}}\left(\Delta_{n}(1-c)-c_{11} \tau_{n}\right)\right\} \\
    &\longrightarrow 0 \text{ as } n \rightarrow 0,
    \end{aligned}
\end{equation}
\noindent where $c_{11} >0$ and $\tau_n=5(1+\delta)\log(\sqrt{n} \vee p).$
To see the last inequality, we consider two cases. First, if
$\sqrt{n} < p, \tau_n = \log(p^{5+5\delta})$ and
$n p^{2+\delta} < p^{4+\delta} < p^{4+4\delta}$, and thus
$\log(p^{4+4 \delta} \vee n p^{2+\delta}) = \log(p^{4+4\delta}) <
\log(p^{5+5\delta}) = \tau_n.$ Then, if $\sqrt{n} > p,$
$\tau_n = \log(n^{5(1+\delta)/2})$. Then,
$p^{4+4\delta} < p^{5+5\delta} < n^{5(1+\delta)/2}$ and
$n p^{2+\delta} < n^{2+\delta/2} < n^{5(1+\delta)/2}.$ Therefore,
$\log(p^{4+4 \delta} \vee n p^{2+\delta}) < \log(n^{5(1+\delta)/2}) =
\tau_n.$ Also, the last line of \eqref{eq.ratiobd_underfiited2} holds
because by condition C\ref{c.eigen_g},
$\Delta_n \succ \log(\sqrt{n} \vee p)$, that is, $\Delta_n > \tau_n$.
\noindent Hence, we have 
\begin{equation}\label{eq.underfitted}
    \sum_{\gamma\in M_4} \operatorname{PR}({\gamma}, t)\xrightarrow{P} 0.
\end{equation}
Now, combining \eqref{eq.verylarge}, \eqref{eq.overfitted}, \eqref{eq.large} and \eqref{eq.underfitted} we get $\sum_{\gamma \not = t} \operatorname{PR}({\gamma}, t)\xrightarrow{P} 0,$
which proves Theorem \ref{thm.consistency}.

\section{Proof of Theorem \ref{thm.consistency.nosigma}}\label{sec:proofUnknownSigma}
Next, we will show that with a prior on $\sigma^2$ in \eqref{subeq:priorInterceptVariance}, the model selection consistency holds under the assumption that $P(\gamma \in \widetilde{M}) = 0.$ Note that since $\log p = o(n)$ and $\nu' > \nu$, we have $M_1 \subset \widetilde{M}$ eventually. Thus $P(\gamma \in M_1) = 0$ for all large $n$. 
Therefore, we shall show that
$\sum_{\gamma\in \widetilde{M}_k} \widetilde{\operatorname{PR}}({\gamma},
t)\xrightarrow{P} 0$ for $k = 2,3,4$ where $\widetilde{M}_k = M_k\cap \widetilde{M}$ and $\widetilde{\operatorname{PR}}({\gamma},
t) \equiv P(\gamma | Y)/P(t | Y)$. By (2d) and (3) of the main paper, we have
\[
  f(\gamma |Y) = c_{n,p} Q_{\gamma}  b_n^{|\gamma|} (1- w)^p R_{\gamma}^{-(n-1)/2}.
\]
By condition C2 and Lemma
\ref{lm.Qgamma}, we then get
\begin{equation}
  \label{eq:prtil}
    \widetilde{\operatorname{PR}}({\gamma}, t)
     \preceq \left(n\eta_m^n(\nu)/\lambda\right)^{-(r_{\gamma}^{*}-r_t)/2} \left(\eta_m^n(\nu)\right)^{-|t \wedge {\gamma}^{c}| / 2} b_n^{(|\gamma|-|t|)} (R_{\gamma}/R_{t})^{-(n-1)/2}.
\end{equation}
\noindent Define
\begin{equation*}
    \zeta_n := \frac{R_t}{n\sigma^2}-1.
\end{equation*}

\noindent Due to our Lemma \ref{lm.tail_prob} and Lemma A.2(ii) of \citet{nari:he:2014}, for $\phi > 0$, we have
\begin{equation}
    \begin{aligned}
        P(|\zeta_n|>2\phi) &= P\left(\left| \frac{R^{*}_t}{n\sigma^2}-1 + \frac{R_t-R^{*}_t}{n \sigma^2} \right|>2\phi\right) \\
        &\leq P\left(\left| \frac{R^{*}_t}{n\sigma^2}-1 \right|>\phi\right) + P\left(R_t-R^{*}_t \geq \phi n \sigma^2\right) \\
        &\leq 2\exp(-c_{12} n),
    \end{aligned}
\end{equation}
for some positive quantity $c_{12}$ depending on $\phi$. From \eqref{eq:prtil} we have 
\begin{align}\label{eq:prnosigma} 
    \widetilde{\operatorname{PR}}({\gamma}, t)
    & \preceq \left(n\eta_m^n(\nu)/\lambda\right)^{-(r_{\gamma}^{*}-r_t)/2} \left(\eta_m^n(\nu)\right)^{-|t \wedge {\gamma}^{c}| / 2} b_n^{(|\gamma|-|t|)} \left(1 + \frac{R_{\gamma}-R_t}{n\sigma^2(1+\zeta_n)}\right)^{-\frac{n-1}{2}}.
\end{align}
Define $z_n := (r_\gamma-r_t)\log p /n.$ Note that for models in $M_2$, $r_{\gamma} > r_t$. Since condition C6 is in force, we have $ z_n< 1/(2+\nu')$, and choose $s > 0$ and $\Tilde{\phi} > 0$ such that $2(1+4s)/\{(1-\Tilde{\phi})(2+\nu')\} < 1$ and 
\[1 < \dfrac{(1+4s)}{(1-\Tilde{\phi})/\left\{1 - 2(1+4s)/[(1-\Tilde{\phi})(2+\nu')]\right\}} < (\delta + 1)/2,\]
which is possible since $\nu'\delta > 2.$ Consequently,
\begin{equation}\label{eq:xnbound}
    x_n := -\log\left(1-2\frac{1+4s}{1-\Tilde{\phi}}~z_n\right) < \frac{2(1+4s)z_n}{(1-\Tilde{\phi})\{1-2(1+4s)z_n/(1-\Tilde{\phi})\}} < 2(\delta/2+1)z_n.
  \end{equation}
  where the first inequality follows from the fact that $-\log(1-x) < x/(1-x)$ for $0 < x  < 1.$ Using the similar way as in Section \ref{sec:proofoverfitted}, we only
  consider the high probability event
  $\left\{\cap_{\left\{d>r_{t}\right\}} U(d)^c\right\}
  \cap\left\{\left|\zeta_{n}\right|<\tilde{\phi}\right\}$, where
  $U(d)$ is defined the same as in Section
  \ref{sec:proofoverfitted}. Note that on $U(d)^c$,
  $1 + (R_{\gamma} - R_t)/(n\sigma^2) > 1-2(1+4s)z_n$. Note that on
  $M_2$, $\gamma^c \wedge t$ is empty. Then, due to
  \eqref{eq:prnosigma}, \eqref{eq:xnbound}, and \eqref{eq:eigenbound}
  we obtain

\begin{align*}
     \sum_{\gamma \in M_{2}} \widetilde{\operatorname{PR}}({\gamma}, t) & \preceq
     \sum_{\gamma \in M_{2}} \left(n\eta_m^n(\nu)/\lambda\right)^{-\left(r_{\gamma}^{*}-r_{t}\right)/2} b_n^{(|{\gamma}|-|t|)} \exp \left\{\left(\frac{n-1}{2}\right) x_{n}\right\}\\
     & \preceq \sum_{\gamma \in M_{2}}\left(p^{1+\delta} \vee \sqrt{n}\right)^{-\left(r_{\gamma}-r_{t}\right)} b_n^{(|\gamma|-|t|)} \exp \left\{\left(\frac{n}{2}\right) x_{n}\right\} \\ 
     & \preceq \sum_{\gamma \in M_{2}}\left(p^{1+\delta} \vee \sqrt{n}\right)^{-\left(r_{\gamma}-r_{t}\right)} b_n^{(|\gamma|-|t|)} p^{(\delta/2+1)\left(r_{\gamma}-r_{t}\right)} \\ 
     & \sim \rho_n \rightarrow 0, \text { as } n \rightarrow \infty,
\end{align*}
where $\rho_n$ is defined in Section \ref{sec:proofoverfitted}.
Also, following from the proof for large models in Section \ref{sec:prooflarge}, we can show that
\begin{equation*}
    \sum_{\gamma \in M_{2} \cup M_{3}}  \widetilde{\operatorname{PR}}({\gamma}, t) \stackrel{\mathrm{P}}{\longrightarrow} 0.
\end{equation*}
For under-fitted models in $M_4$, if $\Delta_n=o(n)$, similar to
\eqref{eq.ratiobd_underfiited1} and \eqref{eq.ratiobd_underfiited2}
restricting to the high probability event
$\left\{R_{\gamma}-R_{t} \geq \Delta_{n}(1-c)\right\} \cap\left\{\left|\zeta_{n}\right|<\tilde{\phi}\right\}$, we get
\begin{align*} 
\sum_{\gamma \in M_{4}} \widetilde{\operatorname{PR}}({\gamma}, t) & \preceq \sum_{\gamma \in M_{4}}\left(n\eta_{m}^{n}\right/\lambda )^{|t| / 2}\left(\eta_m^n(\nu)\right)^{-|\gamma^c \wedge t| / 2} b_n^{|{\gamma}|-|t|} \left(1+\frac{R_{\gamma}-R_t}{n\sigma^2(1+\zeta_n)}\right)^{-\frac{n}{2}}\\
& \preceq \sum_{\gamma \in M_{4}}\left(p^{2+3 \delta} \vee n\right)^{|t| / 2} p^{\delta|t| / 2} b_n^{(|\gamma|-|t|)}  \exp \left\{-\frac{\Delta_{n}(1-c)}{2 \sigma^{2}(1+\tilde{\phi})}\right\}\\ 
& \preceq  \exp \left\{-\frac{1}{2 \sigma^{2}}\left(\Delta_{n}(1-c)/(1+\tilde\phi)-\sigma^{2}|t| \log \left(p^{2+3 \delta} \vee n\right)-\sigma^{2}|t|(2+\delta) \log p\right)\right\} \\
& \preceq \exp \left\{-\frac{1}{2 \sigma^{2}}\left(\Delta_{n}\left(1-c'\right)-\tau_n\right)\right\} \\ & \rightarrow 0, \text { as } n \rightarrow \infty.
\end{align*}
If $\Delta_n \sim n$, then by taking $\tilde{\phi}<1/2$, we have for some $v'>0$ and $c'>0$
\begin{align*}
   \sum_{\gamma \in M_{4}} \widetilde{\operatorname{PR}}({\gamma}, t)
   & \preceq\left(p^{2+3 \delta} \vee n\right)^{|t| / 2} p^{(1+\delta/2)|t|}\left(1+\frac{\Delta_{n}\left(1-c'\right)}{4 n \sigma^{2}}\right)^{-\left(\frac{n}{2}\right)} \\ 
   & \preceq\left(p \vee n\right)^{(2+3 \delta)|t|} e^{-v' n} \rightarrow 0, \text { as } n \rightarrow \infty.
\end{align*}

\end{document}